\documentclass[preprint,12pt]{./elsarticle}

\usepackage{lineno,hyperref}

\usepackage{graphicx}
\usepackage{subcaption}
\usepackage{amssymb}
\usepackage{amsfonts}
\usepackage{amsmath} 
\usepackage{amsbsy} 
\usepackage{amsthm} 
\usepackage{mathrsfs} 
\usepackage{epstopdf} 
\usepackage{dsfont} 
\usepackage{pythonhighlight}
\usepackage{float} 
\usepackage{multirow}
\usepackage{booktabs}
\usepackage{hhline}

\newtheorem{definition}{Definition}
\newtheorem{Theorem}{Theorem}
\newtheorem{hypothesis}{Hypothesis}
\newtheorem{corollary}{Corollary}

\begin{document}

\begin{frontmatter}

\title{Exponential Sigmoid Equation for Modelling Cell Growth in a Confined Space, Log-Normal Distribution for Modelling Cell Area Distribution of Dense Colonies and Other Methods\tnoteref{t1}}
\tnotetext[t1]{This work is based on the data provided by Wheeler Bio, Inc.}

\author[mymainaddress]{Kavinda Jayawardana\corref{mycorrespondingauthor} and Brad Turner \corref{mycorrespondingauthor}}
\address[mymainaddress]{TEK Optima Research Ltd, Unit 10 Westcroft Buiness Park Oakdene Drive, Three Legged Cross, Wimborne BH21 6FQ}
\ead{kavjayawardana@tekoptimaresearch.com; bradturner@tekoptimaresearch.com}
\cortext[mycorrespondingauthor]{Corresponding authors}

\begin{abstract}
Based on the growth patterns of 166 CHO monoclones observed over a 15 day period, we show that the standard population growth in a confined space equation, i.e. the sigmoid/logistic function, is alone does not capture the complex behaviour of the cell growth in a confined space. Thus, combining the sigmoid function and the exponential of the sigmoid function, we present a more accurate model for modelling cell growth in a confined space. We also present a working algorithm to obtain population growth variables (growth capacity, growth time and growth rate), model the growth patterns of the CHO monoclones, and we include subset of the dataset, along with a sample python script for the reader to replicate the results. Furthermore, we derive a model for cell confluence growth in a confined space, numerically model the confluence and present the reader with a working algorithm. With Kolmogorov-Smirnov analysis conducted on the area of the CHO monoclones, we show that the cell area  of the incipient population is normally distributed, the sparse cell population is gamma distributed and the dense colony population is log-normally distributed. Thus, we further derive models for the mean, the standard deviation, the coefficient of variation and the inverse coefficient of variation for the log cell area growth in a confined space, numerically model them and present the reader with working algorithms. Finally, based on the growth patterns of another 48 CHO monoclones observed over a 16 day period, and their titer and viability measurements, we find the correlation coefficients with our calculated growth variables, and titer and viability measurements, and show that our derived growth variables can be used to predict the productivity and the health of a cell. Thus, we conclude our study by demonstrating that the productivity and the health of a cell (also the overall population) are interdependent.
\end{abstract}

\begin{keyword}
Cell growth \sep confined space \sep logistic function \sep sigmoid function \sep growth rate \sep growth capacity \sep log-normal distribution \sep titer \sep viability.
\end{keyword}

\end{frontmatter}


\section{Introduction}

\begin{figure*}[h!]
\centering

\begin{minipage}[t]{0.43\textwidth}
    \centering

    \begin{subfigure}{0.31\textwidth}
        \includegraphics[height=\textwidth]{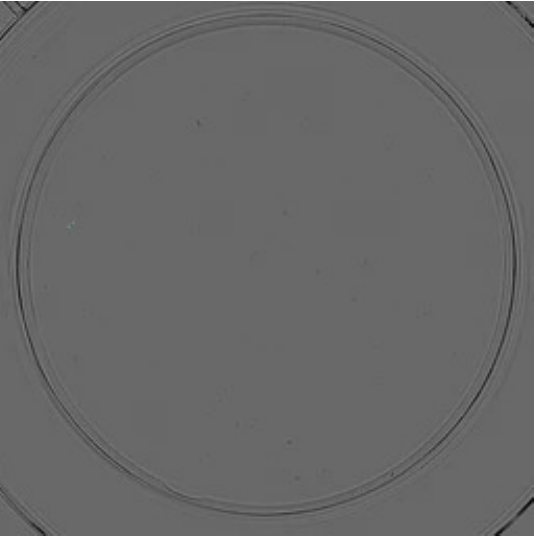}
        \caption{Day 1.}
    \end{subfigure}
    \hfill
    \begin{subfigure}{0.31\textwidth}
        \includegraphics[height=\textwidth]{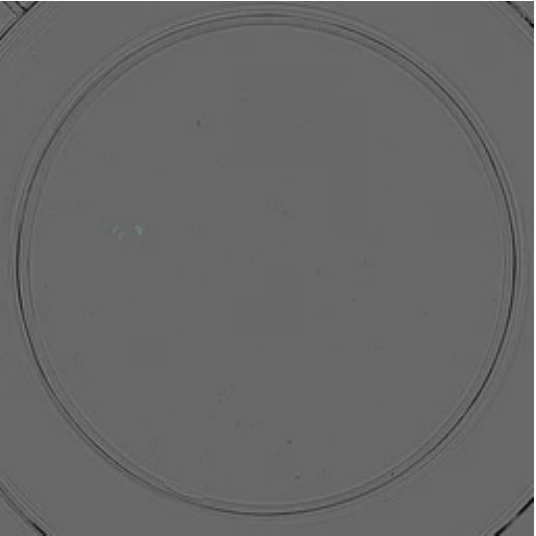}
        \caption{Day 2.}
    \end{subfigure}
    \hfill
    \begin{subfigure}{0.31\textwidth}
        \includegraphics[height=\textwidth]{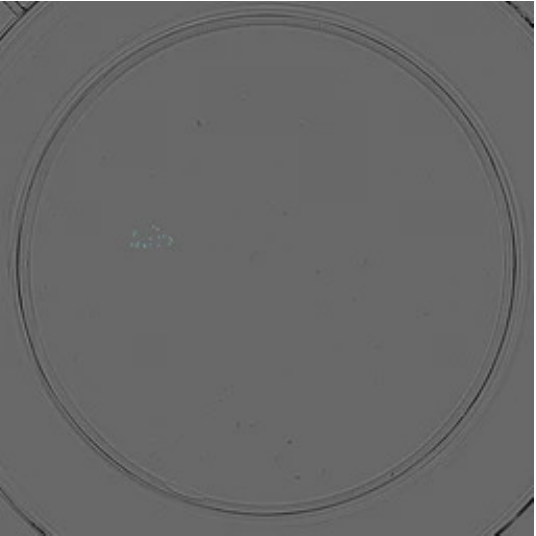}
        \caption{Day 3.}
    \end{subfigure}

    \vspace{0.5em}

    \begin{subfigure}{0.31\textwidth}
        \includegraphics[height=\textwidth]{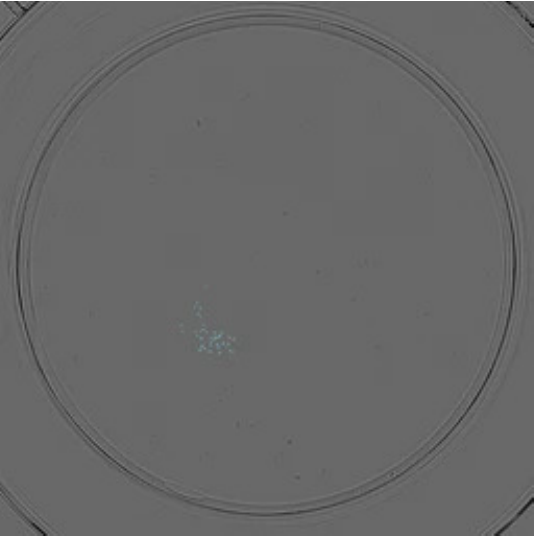}
        \caption{Day 4.}
    \end{subfigure}
    \hfill
    \begin{subfigure}{0.31\textwidth}
        \includegraphics[height=\textwidth]{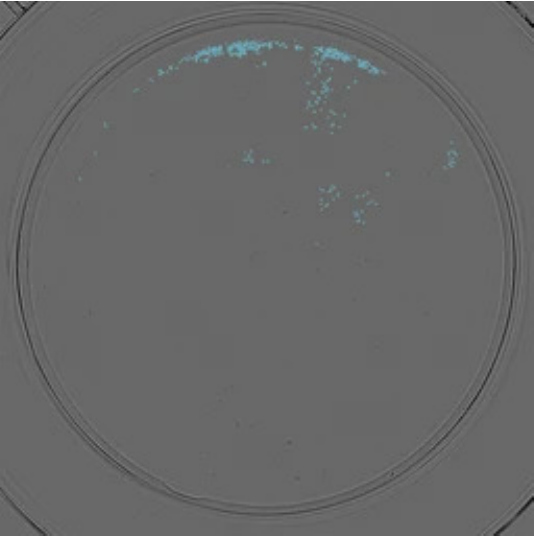}
        \caption{Day 7.}
    \end{subfigure}
    \hfill
    \begin{subfigure}{0.31\textwidth}
        \includegraphics[height=\textwidth]{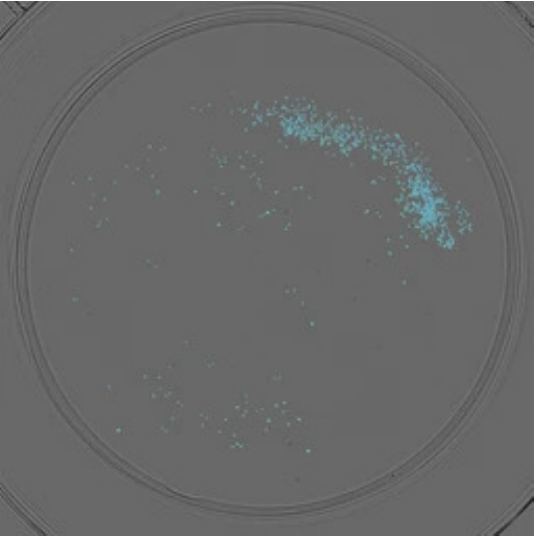}
        \caption{Day 8.}
    \end{subfigure}

    \vspace{0.5em}

    \begin{subfigure}{0.31\textwidth}
        \includegraphics[height=\textwidth]{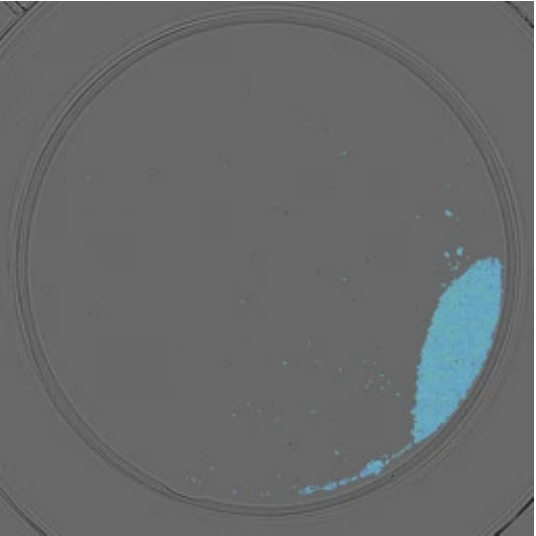}
        \caption{Day 10.}
    \end{subfigure}
    \hfill
    \begin{subfigure}{0.31\textwidth}
        \includegraphics[height=\textwidth]{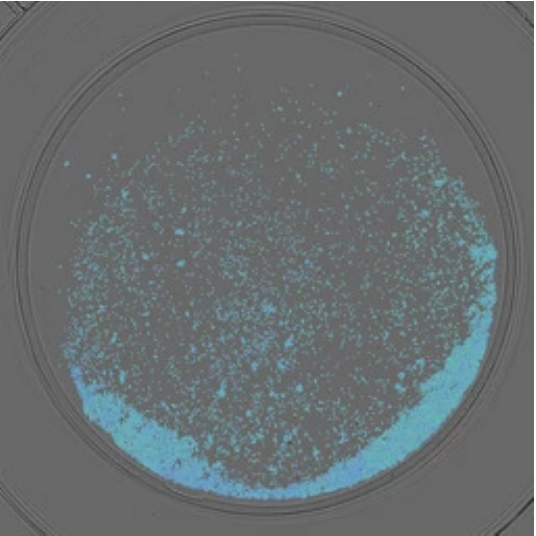}
        \caption{Day 11.}
    \end{subfigure}
    \hfill
    \begin{subfigure}{0.31\textwidth}
        \includegraphics[height=\textwidth]{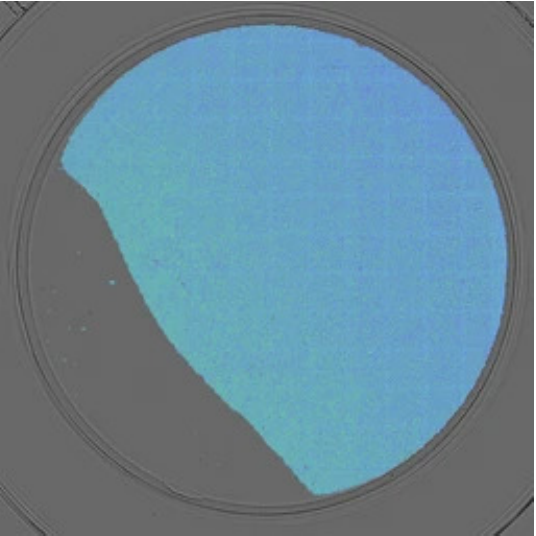}
        \caption{Day 14.}
    \end{subfigure}

\end{minipage}
\hfill
\begin{minipage}[t]{0.55\textwidth}
    \centering
	\vspace{-6.25em}
    \begin{subfigure}{\linewidth}
        \includegraphics[height=\linewidth]{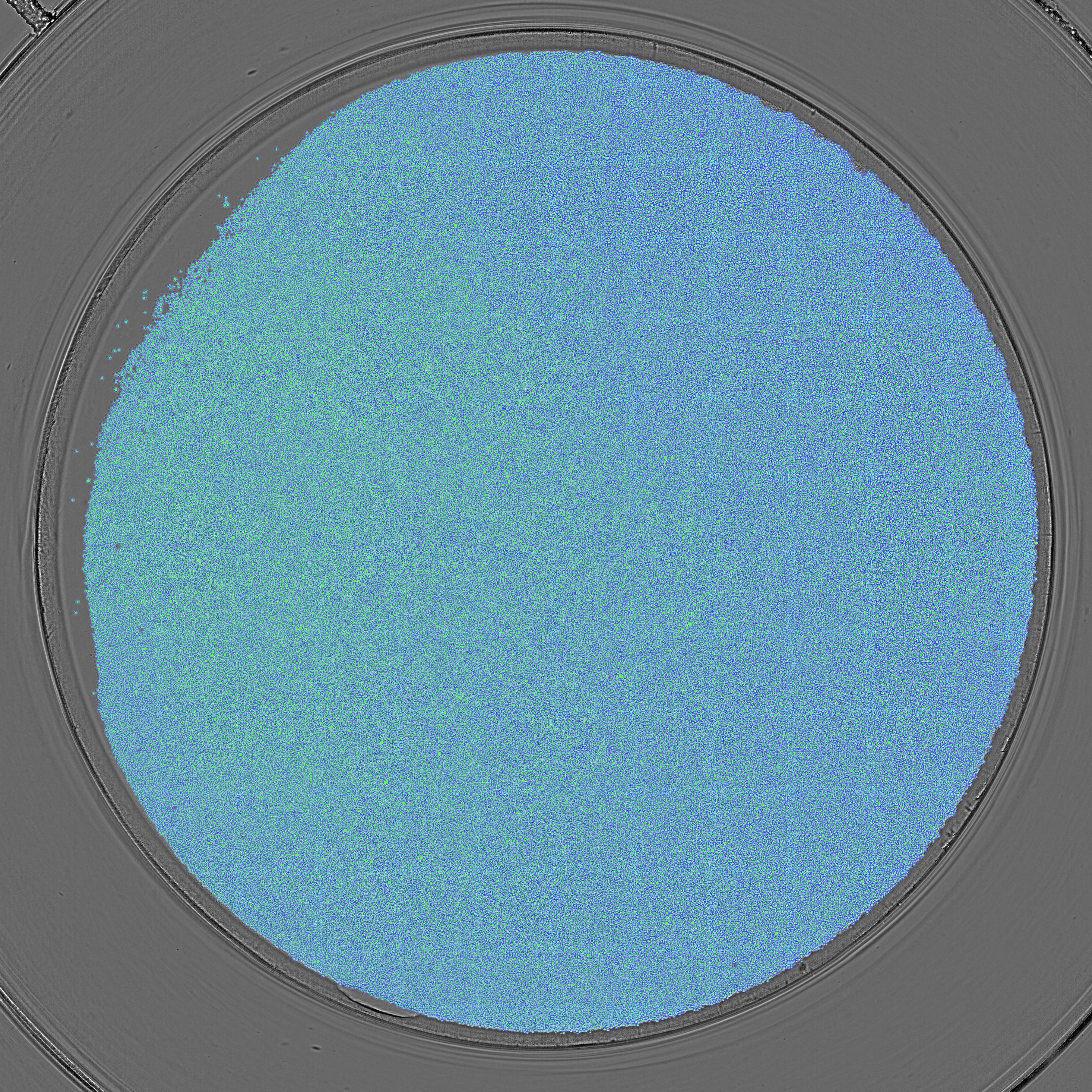}
        \caption{Day 15.}
    \end{subfigure}
\end{minipage}
\vspace*{-0.5em}
\caption{Growth of a CHO monoclone ($cho\#11603$) over a 15 day period.}
\label{cho-15}
\end{figure*}

Cell counting is a key part of biotechnology, life sciences, diagnostics and biopharmaceuticals industries, as it plays a vital role in areas such as drug discovery, drug development, stem cell research, cell line development, cancer research and complete blood count analysis. Grand View Research  organisation estimate that the global market capitalisation of the cell counting industry is \$9.48 billion in 2024, and the market is projected to grow to estimated \$15.46 billion by 2030 \cite{grandviewresearch}.\\

A crucial part in the field of cell counting is detecting monoclonality, i.e. a population that originates from a single progenitor, as it has major applications in the field of monoclonal antibody therapy, where treatments include targeted cancer therapies, treating infectious diseases, and treating autoimmune and neurological disorders \cite{yao2013advances}. Thus, there are a huge array of cell line development companies and products that are dedicated detecting monoclonality, which include the companies  Nova Biomedical \cite{Solentim} and Wheeler Bio \cite{wheelerbio}. Key areas of interest in this field is the modelling of cell growth in a confined space, and where one of the key modelling techniques is the use of the logistic function to model cell growth in a  confined space (more generally, an organism's population growth in an environment with limited resources) \cite{Biology, SINGH2021257}. Another notable equation for modelling a population in a confined space is the Gompertz function, which is a generalised case of the logistic function \cite{zwietering1990modeling}. \\

At the courtesy of Wheeler Bio, Inc. \cite{wheelerbio}, we obtain a dataset of 166 CHO monoclones observed over a 15 day period, where  the cells seeded with Solentim VIPS \cite{VIPS} to ensure high probability of monoclonality, images of the wells are scanned with the Solentim Cell Metric \cite{cellmetric} and the cells count is calculated  with TEK Optima Research Ltd. DeepInsight\textsuperscript{\textregistered} cell analysis software \cite{tekor}. See figure \ref{cho-15} for a growth of a CHO monoclone over a 15 day period, where cell centres (green) and cell borders (blue) are highlighted by the DeepInsight\textsuperscript{\textregistered} cell analysis software \cite{tekor}. With data analysis, we find that the models in the literature that describe the behaviour of cell growth in confined spaces do not adequately describe the complex growth patterns that we observe. Yin \emph{et al.}'s  beta growth function \cite{yin2003flexible} comes remarkably close to modelling growth patterns that we observe (even though the it is used in modelling the growth of plants); however, it still cannot fully model the growth speeds and growth accelerations that we observe in the dataset. Thus, motivating us to pursuit a more accurate model by combining the sigmoid function and the exponential of the sigmoid function. Note that sigmoid function is a specific case of the logistic function \cite{dr2025machine}. However, herein, we refer to the logistic function as the sigmoid function, as we modify the sigmoid function to derive our model.\\

Observing cell size over several growth cycles, Jia \emph{et al.} \cite{jiacell} show that growing populations usually have a right-skewed distribution due to the birth of new daughter cells in the population. The authors also show that the distribution may vary due to a fast increase for small cells, followed by a slow decay for moderately large cells and a fast decay for exceptionally large cells; however, the cell size distribution can still be modelled by the gamma distribution. Efficacy of modelling cell size distribution with the gamma distribution is also demonstrated by Golubev \cite{golubev2016applications}. Lenz \emph{et al.} \cite{lenz2016estimating} demonstrate that cell sizes are in fact log-normally distributed, at least in tissue samples, as sectioning a tissue at random depths leads to an artificially skewed distribution of smaller measured cells.  However, we find no convincing evidence in the literature where the authors perform any statistical analysis (with Kolmogorov-Smirnov test or otherwise) to perform goodness of fit for the aforementioned distributions (or any distributions). We did find Demidenko \cite{demidenko} applying Kolmogorov-Smirnov test to histology images from untreated and treated breast cancer tumours as a method of comparing images; however, the author is not estimating cell size distributions. Thus, analysing change in cell size distribution as the population grows and applying rigours statistical analysis will also be a subject of investigation in our work.\\

Marshall \cite{marshall2012organelle} propose that as a result of molecular mechanisms that regulate internal structure dimensions are proportional to the cell size, a growth rate of a cell is proportional to its current size, and larger cells likely to synthesise more proteins and grow faster. Based on growth patterns of further 48 CHO monoclones observed over a 16 day period, and their titer and viability measurements, we also investigate how the growth patterns (both population, cell size and other metrics) are correlated with the productivity and the health of the cell, where titer is a measurement of the concentration of a substance in a solution, and viability is the measurement of the live to dead cell ratio in a cell culture.

\section{Cell Count \label{cell-count}}

\begin{figure}[h!]
     \centering
     \begin{subfigure}[b]{0.49\textwidth}
         \centering
         \includegraphics[height=\textwidth]{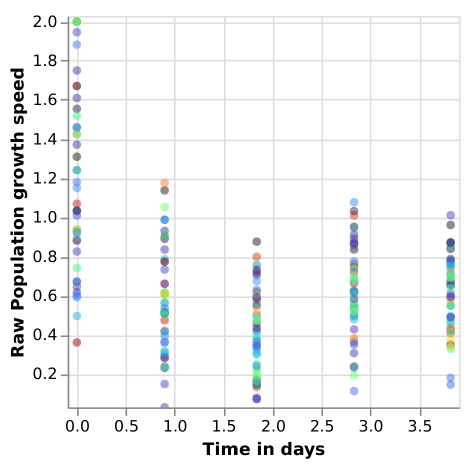}
         \caption{Initial stage growth speed.}
         \label{growth-speed0}
     \end{subfigure}
     \hfill
     \begin{subfigure}[b]{0.49\textwidth}
         \centering
         \includegraphics[height=\textwidth]{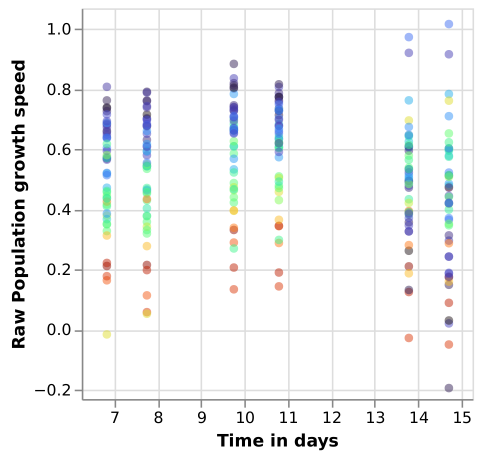}
         \caption{Later stage growth speed.}
         \label{growth-speed1}
     \end{subfigure}
     \hfill
     \begin{subfigure}[b]{0.49\textwidth}
         \centering
         \includegraphics[height=\textwidth]{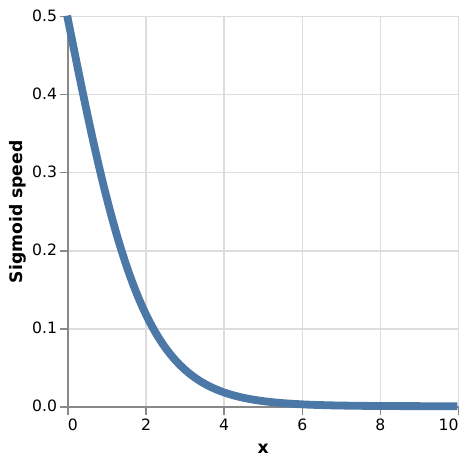}
         \caption{Sigmoid growth speed.}
         \label{growth-speed-sig0}
     \end{subfigure}
     \hfill
     \begin{subfigure}[b]{0.49\textwidth}
         \centering
         \includegraphics[height=\textwidth]{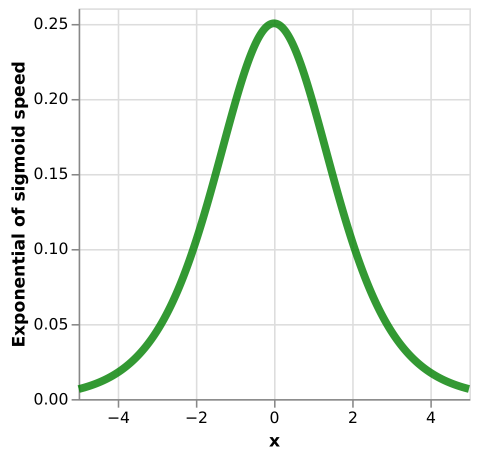}
         \caption{Exp-sigmoid growth speed.}
         \label{growth-speed-sig1}
     \end{subfigure}
        \caption{Growth speed for initial stage and latter stage of cell growth.}
        \label{growth-speed}
\end{figure}

\begin{figure}[h!]
     \centering
     \begin{subfigure}[b]{0.49\textwidth}
         \centering
         \includegraphics[height=\textwidth]{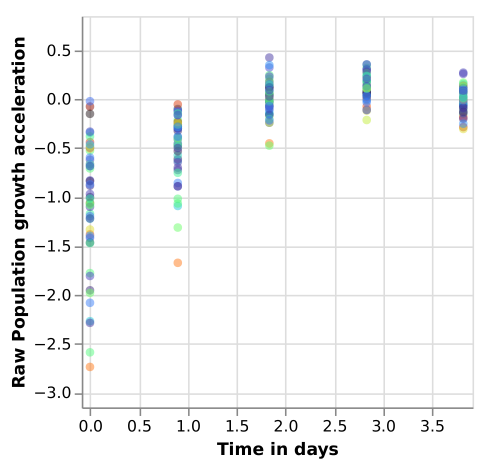}
         \caption{Initial stage growth acceleration.}
         \label{growth-acceleration0}
     \end{subfigure}
     \hfill
     \begin{subfigure}[b]{0.49\textwidth}
         \centering
         \includegraphics[height=\textwidth]{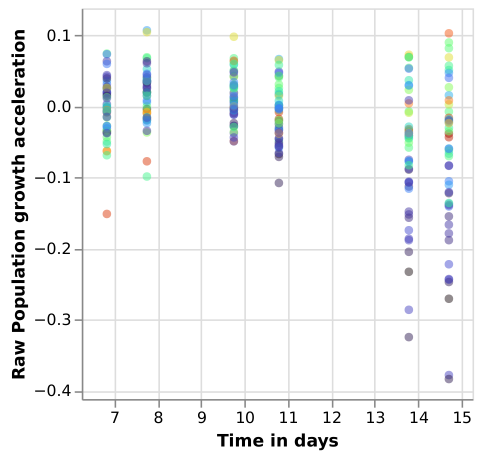}
         \caption{Later stage growth acceleration.}
         \label{growth-acceleration1}
     \end{subfigure}
     \hfill
     \begin{subfigure}[b]{0.49\textwidth}
         \centering
         \includegraphics[height=\textwidth]{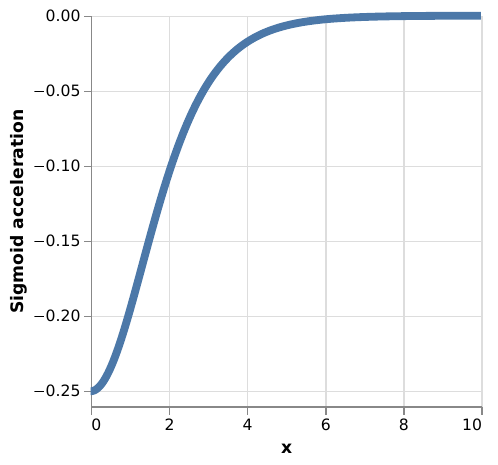}
         \caption{Sigmoid growth acceleration.}
         \label{growth-acceleration-sig0}
     \end{subfigure}
     \hfill
     \begin{subfigure}[b]{0.49\textwidth}
         \centering
         \includegraphics[height=\textwidth]{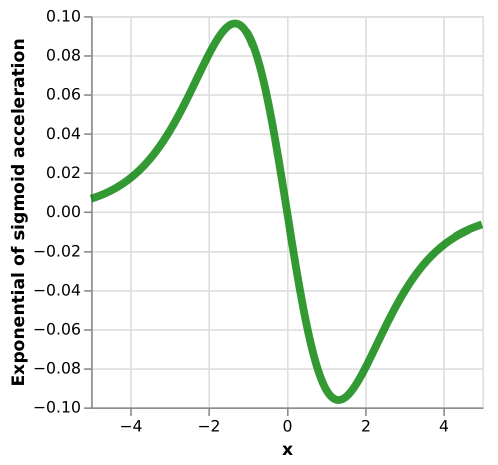}
         \caption{Exp-sigmoid growth acceleration.}
         \label{growth-acceleration-sig1}
     \end{subfigure}
        \caption{Growth acceleration for initial stage and later stage of cell growth.}
        \label{growth-acceleration}
\end{figure}

Consider figure \ref{growth-speed}, where it represents the growth speed of 166 CHO monoclones in 96-well microplates observed over a 15 day period, the set \emph{CHO2023}. Assuming cell growth can be modelled by the sigmoid function, i.e. $ \log(y_n(x)) = \log(\boldsymbol{\sigma}(x))$ where $y_n(x)$ is the number of cells (hence the subscript  $n$), $x$ is the time and where the \emph{sigmoid} function defined as as follows,
\begin{align}\label{sigmoid}
\boldsymbol{\sigma}(x) = \frac{1}{1 + \exp(-x)},
\end{align}
(where this is the standard model of cell growth in confined space), and assuming cell growth can also be modelled by exponential of the sigmoid equation, i.e. $ \log(y_n(x)) = \boldsymbol{\sigma}(x)$, and by examining sub-figures \ref{growth-speed0} with \ref{growth-speed-sig0} and sub-figures \ref{growth-speed1} with \ref{growth-speed-sig1}, we can see that sigmoid function is better at modelling the initial growth speed of the cells and the exponential of the sigmoid function is better at modelling later growth speed of the cells. Note that by \emph{growth speed}, we mean the first-order central difference of the log cell count with respect to time, i.e. $\frac{d\log(y_n(x))}{dx}|_i = \frac{\log(y_n)_{i+1} - \log(y_n)_{i-1}}{x_{i+1} - x_{i-1}}$, where $i$ is the $i$-th sample in the dataset. Also, the cell growth speed is calculated by \emph{NumPy} \emph{gradient} function with \emph{edge\_order}$=2$ \cite{numpy.gradient}, the speed of sigmoid function calculated as $\frac{d\log(y_n(x))}{dx} = 1 - \boldsymbol{\sigma}(x)$, the speed of the exponential of the sigmoid function is calculated as $\frac{d\log(y_n(x))}{dx} = \boldsymbol{\sigma}(x) (1 - \boldsymbol{\sigma}(x))$, by \emph{raw}, we mean any data point that is not fitted to a model and by \emph{log}, we mean the natural-log.\\

Should we examine the growth acceleration, expressed in figure \ref{growth-acceleration}, we can see that that sigmoid function is still better at modelling the initial growth acceleration of the cells (see sub-figures \ref{growth-acceleration0} and \ref{growth-acceleration-sig0}) and the exponential of the sigmoid function is still better at modelling later growth acceleration of the cells (see sub-figures \ref{growth-acceleration1} and \ref{growth-acceleration-sig1}). Note that by \emph{growth acceleration}, we mean the second-order central difference of the log cell count with respect to time, i.e. $\frac{d^2\log(y_n(x))}{dx^2} |_i = 4 \frac{\log(y_n)_{i+1} - 2 \log(y_n)_{i} + \log(y_n)_{i-1}}{(x_{i+1} - x_{i-1})^2}$. Also, the cell growth acceleration is again calculated by \emph{NumPy} \emph{gradient} function with \emph{edge\_order}$=2$ \cite{numpy.gradient}, the acceleration of sigmoid function calculated as $\frac{d^2\log(y_n(x))}{dx^2} = -  \boldsymbol{\sigma}(x) (1 - \boldsymbol{\sigma}(x))$ and the acceleration of the exponential of the sigmoid function is calculated as $\frac{d^2\log(y_n(x))}{dx^2} = \boldsymbol{\sigma}(x) (1 - \boldsymbol{\sigma}(x)) (1 - 2\boldsymbol{\sigma}(x))$.\\

Prior to any further analysis, first we define equations related to the sigmoid equations as follows,
\begin{definition} \label{sigmoid-diff}
Given that $\boldsymbol{\sigma}(\theta (x - \gamma))$ is the sigmoid function that is translated in the $x$-axis in by a factor of $\gamma$ and scaled in the $x$-axis by a factor of $\frac{1}{\theta}$, then it and its first four normalised derivatives can be expressed as follows,
\begin{align*}
\boldsymbol{\sigma}(\theta (x - \gamma)) = ~&\frac{1}{1 + \exp(-\theta(x - \gamma))},\\
\frac{1}{\theta} \boldsymbol{\sigma}^{(1)}(\theta (x - \gamma))  =~ &\boldsymbol{\sigma}(\theta (x - \gamma)) (1- \boldsymbol{\sigma}(\theta (x - \gamma))),\\
\frac{1}{\theta^2} \boldsymbol{\sigma}^{(2)}(\theta (x - \gamma))  = ~ & \boldsymbol{\sigma}(\theta (x - \gamma)) (1- \boldsymbol{\sigma}(\theta (x - \gamma)))\big[1-2 \boldsymbol{\sigma}(\theta (x - \gamma))],\\
\frac{1}{\theta^3} \boldsymbol{\sigma}^{(3)}(\theta (x - \gamma))  = ~ & \boldsymbol{\sigma}(\theta (x - \gamma)) (1- \boldsymbol{\sigma}(\theta (x - \gamma)))\\
&\big[1-6 \boldsymbol{\sigma}(\theta (x - \gamma)) + 6 \boldsymbol{\sigma}(\theta (x - \gamma))^2], \text{and}\\
\frac{1}{\theta^4} \boldsymbol{\sigma}^{(4)}(\theta (x - \gamma))  = ~ & \boldsymbol{\sigma}(\theta (x - \gamma)) (1- \boldsymbol{\sigma}(\theta (x - \gamma)))\\
&\big[1-14 \boldsymbol{\sigma}(\theta (x - \gamma)) + 36 \boldsymbol{\sigma}(\theta (x - \gamma))^2 - 24 \boldsymbol{\sigma}(\theta (x - \gamma))^3].
\end{align*}
\end{definition}
For proof, see McKenna \cite{sigmoid-diff}. We also define the following hypothesis,

\begin{hypothesis} \label{initial-growth-speed}
Initial observed growth speed is approximately equal to the maximum growth speed observed at the latter stage of growth.
\end{hypothesis}

From our data, we observe initial high growth speed, then a decline (due to the shock of being seeded), then a recovery and a final decline (due to limited space). Although, the initial growth speed and the maximum growth speed at latter stage of growth may not be equal (this can be observed by examining sub-figures \ref{growth-speed0} and  \ref{growth-speed1}), in an attempt to reduce the number of independent variables, we assume growth speed at the seeding time and maximum growth speed are approximately equal. Also, if the maximum growth speed at the latter growth phase is optimal, then it is unlikely that the growth speed at the seeding time can exceed the optimal growth speed significantly. Thus, justifying hypothesis \ref{initial-growth-speed}.\\

Above observations lead us to conclude that the cell growth in a confined space can be modelled by combining both the sigmoid function and the exponential of the sigmoid function. Thus, population  growth model and population growth speed model can be respectively expressed as follows,
\begin{align}
\log(y_n(x)) & = \log\left(y_n(0)\right)  + \beta_n \left[\boldsymbol{\sigma}(\theta_n (x - \gamma_n)) + \epsilon_n\log (\boldsymbol{\sigma}(c x))\right] ~\text{and}\label{growth-model}\\
\frac{d\log(y_n(x))}{dx} & = \omega_n \left[\frac{4}{\theta_n}\boldsymbol{\sigma}^{(1)}(\theta_n (x - \gamma_n)) + 2\delta_n (1 - \boldsymbol{\sigma}(c x))  \right], \label{growth-speed-model}
\end{align}
where $y_n(0)$ is the number of cells initially seeded, $\beta_n$ is a capacity, $\omega_n$ is a rate, $\gamma_n$ is a time and $\theta_n$, $\epsilon_n$, $\delta_n$ and $c$ are constants that are yet to be determined.\\

Given that there is only one cell seeded in the well (i.e. $y_n(0) = 1$, as we dealing with monoclones), equation (\ref{growth-model}) implies that $\epsilon_n = \frac{\boldsymbol{\sigma}(-\theta_n \gamma_n)} {\log(2) }$. Now, comparing equations (\ref{growth-model}) and (\ref{growth-speed-model}) imply that $ \theta_n = 4 \frac{\omega_n}{\beta_n} $ and $ c = \rho_n \theta_n$, where  $ \rho_n = \frac{\delta_n}{2\epsilon_n}$. Furthermore, hypothesis \ref{initial-growth-speed} and equation (\ref{growth-speed-model}) imply that $(2 \boldsymbol{\sigma}(\rho_n \theta_n \gamma_n )  - 1 ) \delta_n \approx  1 - 4\boldsymbol{\sigma}(-\theta_n \gamma_n)(1-\boldsymbol{\sigma}(-\theta_n \gamma_n)) $. Finally, observing that $ \boldsymbol{\sigma}(\rho_n \theta_n \gamma_n ) \approx 1$, we may assume that $\delta = 1 - 4\boldsymbol{\sigma}(-\theta_n \gamma_n)(1 -\boldsymbol{\sigma}(-\theta_n \gamma_n)) $. Collecting everything results in the following theorem,

\begin{Theorem}[Population Growth Model]\label{full-cell-growth-model}
Assuming hypothesis \ref{initial-growth-speed}, cell growth in a confined space can be modelled by the following equation,
\begin{align}
\log(y_n(x)) = \beta_n [\alpha_n + \boldsymbol{\sigma}(\theta_n (x - \gamma_n)) + \epsilon_n\log (\boldsymbol{\sigma}(\rho_n \theta_n x))], \label{cell-growth-model}
\end{align}
where $y_n(x)$ is the cell count, $x$ is the time, $\boldsymbol{\sigma}(\cdot)$ is the sigmoid function (definition \ref{sigmoid-diff}),
\begin{align*}
\theta_n & = 4 \frac{\omega_n}{\beta_n} ,\\
\rho_n &= \frac{\delta_n}{2\epsilon_n} ,
\end{align*}
$\beta_n$ is the population growth capacity, $\omega_n$ is population growth rate, $\gamma_n$ is the population growth time, 
\begin{align*}
\epsilon_n = \frac{\boldsymbol{\sigma}(-\theta_n \gamma_n)} {\log(2) }
\end{align*}
is the incipient population growth capacity coefficient,
\begin{align*}
\delta_n =  1 - 4\boldsymbol{\sigma}(-\theta_n \gamma_n)(1- \boldsymbol{\sigma}(-\theta_n \gamma_n)) 
\end{align*}
is the incipient population growth rate coefficient, 
\begin{align*}
\alpha_n  =\mathbb{E}\left[\frac{1}{ \beta_n} \log(y_n(x)) - \boldsymbol{\sigma}(\theta_n (x - \gamma_n)) - \epsilon_n\log (\boldsymbol{\sigma}(\rho_n \theta_n x))\mid (\beta_n, \gamma_n, \omega_n) \right]
\end{align*}
is the minimum population coefficient and $\mathbb{E}(\cdot)$ is the expectation operator, and where $\beta_n$, $\omega_n$ and $\gamma_n$ are the only independent variables of the model, and $x$ is the only independent variable of the dataset.
\end{Theorem}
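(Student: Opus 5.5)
The plan is to derive the model (\ref{cell-growth-model}) by taking the ansatz (\ref{growth-model})--(\ref{growth-speed-model}) and fixing its constants one at a time from the structural requirements: the initial condition, consistency between the two equations, and hypothesis \ref{initial-growth-speed}. The expectation term defining $\alpha_n$ then just records the least-squares offset for given $(\beta_n,\gamma_n,\omega_n)$.

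\textbf{Step 1: initial condition.} Since we are dealing with monoclones, $y_n(0)=1$, so the left side of (\ref{growth-model}) vanishes at $x=0$. Using $\boldsymbol{\sigma}(0)=\tfrac12$, this forces $\boldsymbol{\sigma}(-\theta_n\gamma_n)+\epsilon_n\log\tfrac12=0$, that is,
\begin{align*}
\epsilon_n=\frac{\boldsymbol{\sigma}(-\theta_n\gamma_n)}{\log 2}.
\end{align*}

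\textbf{Step 2: matching (\ref{growth-model}) with (\ref{growth-speed-model}).} Differentiate (\ref{growth-model}) using definition \ref{sigmoid-diff} together with $\frac{d}{dx}\log\boldsymbol{\sigma}(cx)=c\,(1-\boldsymbol{\sigma}(cx))$. This gives
\begin{align*}
\frac{d\log y_n}{dx}=\beta_n\left[\boldsymbol{\sigma}^{(1)}(\theta_n(x-\gamma_n))+\epsilon_n c\,(1-\boldsymbol{\sigma}(cx))\right].
\end{align*}
Here $\boldsymbol{\sigma}^{(1)}$ means the $x$-derivative, which is $\theta_n\boldsymbol{\sigma}(1-\boldsymbol{\sigma})$. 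Compare this term by term with $\omega_n[\frac{4}{\theta_n}\boldsymbol{\sigma}^{(1)}+2\delta_n(1-\boldsymbol{\sigma}(cx))]$:
\begin{itemize}
\item the logistic terms agree iff $\beta_n=4\omega_n/\theta_n$, i.e.\ $\theta_n=4\omega_n/\beta_n$;
\item the second terms agree iff $\beta_n\epsilon_n c=2\omega_n\delta_n$, i.e.\ $c=\frac{\delta_n}{2\epsilon_n}\cdot\frac{4\omega_n}{\beta_n}=\rho_n\theta_n$ with $\rho_n=\delta_n/(2\epsilon_n)$.
\end{itemize}
This normalisation makes $\omega_n$ the peak logistic speed, since $\max\boldsymbol{\sigma}(1-\boldsymbol{\sigma})=\tfrac14$.

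\textbf{Step 3: fixing $\delta_n$ from hypothesis \ref{initial-growth-speed}.} Evaluate (\ref{growth-speed-model}) at $x=0$. The speed is $\omega_n[4s(1-s)+\delta_n]$ with $s=\boldsymbol{\sigma}(-\theta_n\gamma_n)$, because $1-\boldsymbol{\sigma}(0)=\tfrac12$. The maximum speed at the later stage is taken as the logistic peak near $x=\gamma_n$, where $\boldsymbol{\sigma}(\rho_n\theta_n\gamma_n)\approx 1$ kills the second term. That gives speed $\omega_n[1+2\delta_n(1-\boldsymbol{\sigma}(\rho_n\theta_n\gamma_n))]$. Equating the two yields
\begin{align*}
(2\boldsymbol{\sigma}(\rho_n\theta_n\gamma_n)-1)\,\delta_n\approx 1-4s(1-s).
\end{align*}
Setting $\boldsymbol{\sigma}(\rho_n\theta_n\gamma_n)\approx1$ then gives the stated $\delta_n$.

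\textbf{Step 4: the offset $\alpha_n$.} Replace $\log y_n(0)$ by a fitted offset $\beta_n\alpha_n$. This absorbs the imperfect initial condition and noise in the data. Minimising squared error in $\alpha_n$ for fixed $(\beta_n,\gamma_n,\omega_n)$ gives the conditional mean stated. With these choices, $\theta_n,\epsilon_n,\delta_n,\rho_n,\alpha_n$ are all functions of $(\beta_n,\gamma_n,\omega_n)$ alone, so those three are the only free parameters.

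\textbf{Main obstacle.} Steps 3 and 4 are not rigorous deductions. The approximation $\boldsymbol{\sigma}(\rho_n\theta_n\gamma_n)\approx1$ must be justified, either empirically or by noting that $\rho_n$ is large when $\epsilon_n$ is small. There is also a circularity: $\rho_n$ depends on $\delta_n$, which is being defined via $\rho_n$. I would break it by defining $\delta_n$ directly through the approximated formula, then checking a posteriori that $\rho_n\theta_n\gamma_n\gg1$ for the fitted data. I would also note that the modified $\alpha_n$ makes the initial condition of Step 1 hold only approximately.
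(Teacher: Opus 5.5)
Your proposal is correct and follows the paper's own derivation essentially step for step. $y_n(0)=1$ fixes $\epsilon_n$, and term-by-term matching of the differentiated ansatz (\ref{growth-model}) with (\ref{growth-speed-model}) gives $\theta_n=4\omega_n/\beta_n$ and $c=\rho_n\theta_n$. Equating the speed at $x=0$ with the speed at $x=\gamma_n$ gives $(2\boldsymbol{\sigma}(\rho_n\theta_n\gamma_n)-1)\delta_n\approx 1-4\boldsymbol{\sigma}(-\theta_n\gamma_n)(1-\boldsymbol{\sigma}(-\theta_n\gamma_n))$, and the empirical approximation $\boldsymbol{\sigma}(\rho_n\theta_n\gamma_n)\approx 1$ then defines $\delta_n$, with $\alpha_n$ replacing $\log y_n(0)/\beta_n$ as a conditional-mean offset. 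Your caveats about the approximation and the $\delta_n$--$\rho_n$ circularity, resolved by taking the approximated formula as the definition, match how the paper proceeds.
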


Interpretation of theorem \ref{full-cell-growth-model} is as follows. $\beta_n$ defines the capacity  of the population to colonise the given environment, where a higher capacity implies a higher likelihood of colonisation as a measure of the log cell count. $\omega_n$ defines population's maximum growth rate, where a higher rate implies a higher rate of cell multiplication. $\gamma_n$ defines the time for the population to reach its maximum growth rate, where a lower time results in a shorter time to reach its maximum growth rate. $\epsilon_n$ defines the shock of being seeded, where a small coefficient results in a very resilient cell when seeded. $\delta_n$ defines the rate of recovery after been seeded, where a higher coefficient results in a faster recovery after being seeded. $\alpha_n$ defines the likelihood of multiples sells seeded. As we are modelling the growth of monoclones, $\alpha_n$ should be zero. Thus, any non-zero value implies a likelihood of multiple cells been seeded, and thus, giving us a good measure of how trustworthy the monoclone sample is. Definition of $\alpha_n$ states that  $\alpha_n$ is the expected value of the difference between observed values and modelled values, given that the independent variables are already found. This definition may seems rather abstract; however, we explain this in more detail during our numerical modelling section (see section \ref{numerical-models}). Also, why we normalise $\alpha_n$ to be independent of the dimensions of other variables is explained in section \ref{sec-titer}, when discussing practical applications of the variables in predicting titer measurements and viability.\\

Although, we derived equation (\ref{cell-growth-model}) based on the growth of CHO cells in circular wells (i.e. 96-well microplates), we predict that the model will hold true for other similar  cell types (e.g. HEK, HeLa, Jurkat, etc.) and other similar well types (e.g. 348-well microplates with squircle wells). Furthermore, taking the first and second order derivative of the equation (\ref{cell-growth-model}), we arrive at the following corollary,

\begin{corollary}\label{full-cell-growth-speed-model}
Theorem \ref{full-cell-growth-model} implies that the population growth speed and population growth acceleration can respectively expressed as follows,
\begin{align}
\frac{d\log(y_n(x))}{dx}  &= \omega_n \left[ \frac{4}{\theta_n} \boldsymbol{\sigma}^{(1)}(\theta_n (x - \gamma_n))+ 2\delta_n (1 - \boldsymbol{\sigma}(\rho_n \theta_n x))  \right]~\text{and} \label{cell-growth-speed} \\
\frac{d^2\log(y_n(x))}{dx^2} &= \omega_n \theta_n\left[ \frac{4}{(\theta_n)^2} \boldsymbol{\sigma}^{(2)}(\theta_n (x - \gamma_n)) - 2\frac{\delta_n}{\theta_n} \boldsymbol{\sigma}^{(1)}(\rho_n \theta_n x)  \right],\label{cell-growth-acceleration}
\end{align}
where $\boldsymbol{\sigma}^{(1)}(\cdot)$ and  $\boldsymbol{\sigma}^{(2)}(\cdot)$ are the first and second order derivatives of the sigmoid function (definition \ref{sigmoid-diff}). Equation (\ref{cell-growth-model}) implies that the cell culture's generation can be expressed as follows,
\begin{align*}
n(x)  = \frac{1}{\log(2)} \log\left(\frac{y_n(x)}{y_n(0)}\right),
\end{align*}
and the maximum number of cells that originate from a monoclone that a confined space can theoretically support can be calculated as follows,
\begin{align*}
\lim_{x\to\infty}\frac{y_n(x)}{y_n(0)}= \exp(\beta_n).
\end{align*}
Equation (\ref{cell-growth-speed}) implies that the doubling time of the population can be expressed as follows,
\begin{align*}
T_{D}(x)  = \frac{\log(2)}{\frac{d\log(y_n(x))}{dx}},
\end{align*}
and the doubling rate of the population can expressed as follows,
\begin{align*}
R_{D}(x)  = \frac{1}{\log(2)} \frac{d\log(y_n(x))}{dx}.
\end{align*}
Equation (\ref{cell-growth-acceleration}) implies that maximum population growth acceleration observed at $ t_{n0} = \gamma_n +  \frac{\beta_n}{4\omega_n}\boldsymbol{\sigma}^{-1}(z_0)$, where $z_0 = \frac{1}{2} \left( 1 - \frac{\sqrt{3}}{3}\right)$ and $\boldsymbol{\sigma}^{-1}(z) = \log\left( \frac{z}{1-z} \right)$ is the inverse of the sigmoid function (i.e. the logit function). Also, the point that $\frac{d\log(y_n(x))}{dx}$ attains a minimum in the interval  $0$ and $ t_{n0}$ can be interpreted as the recovery time, i.e. time for cells to recover after the shock of from being seeded. Furthermore, $t_{n1} =  \gamma_n +  \frac{\beta_n}{4\omega_n}\boldsymbol{\sigma}^{-1}(z_1) $ is the latter maximum deceleration time, where $z_1 = \frac{1}{2} \left( 1 + \frac{\sqrt{3}}{3}\right) $.
\end{corollary}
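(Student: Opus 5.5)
The plan is to differentiate equation (\ref{cell-growth-model}) of theorem \ref{full-cell-growth-model} directly, term by term, using the chain rule and the derivative formulas of definition \ref{sigmoid-diff}. After that, each auxiliary quantity (generation, capacity limit, doubling time and rate, extremal acceleration times) follows by a short elementary argument.

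\emph{Step 1 (speed and acceleration).} Since $\alpha_n$ does not depend on $x$, differentiating (\ref{cell-growth-model}) gives two contributions. The first is $\beta_n\frac{d}{dx}\boldsymbol{\sigma}(\theta_n(x-\gamma_n)) = \beta_n\theta_n\,\boldsymbol{\sigma}(1-\boldsymbol{\sigma})$. Substituting $\beta_n\theta_n = 4\omega_n$, this equals $\omega_n\frac{4}{\theta_n}\boldsymbol{\sigma}^{(1)}(\theta_n(x-\gamma_n))$, in the normalisation of definition \ref{sigmoid-diff}. The second contribution uses $\frac{d}{dx}\log\boldsymbol{\sigma}(cx) = c\,(1-\boldsymbol{\sigma}(cx))$ with $c=\rho_n\theta_n$. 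This gives $\beta_n\epsilon_n\rho_n\theta_n(1-\boldsymbol{\sigma}(\rho_n\theta_n x))$. Using $\rho_n = \delta_n/(2\epsilon_n)$ and $\beta_n\theta_n = 4\omega_n$, the coefficient reduces to $2\omega_n\delta_n$, which yields (\ref{cell-growth-speed}). Differentiating once more, $\frac{4}{\theta_n}\boldsymbol{\sigma}^{(1)}$ becomes $\frac{4}{\theta_n}\boldsymbol{\sigma}^{(2)}$. The term $2\delta_n(1-\boldsymbol{\sigma}(\rho_n\theta_n x))$ becomes $-2\delta_n\boldsymbol{\sigma}^{(1)}(\rho_n\theta_n x)$. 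Factoring out $\omega_n\theta_n$ then gives (\ref{cell-growth-acceleration}), with the same derivative convention as in the statement.

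\emph{Step 2 (generation, capacity, doubling).} Generations are defined by $y_n(x) = y_n(0)\,2^{n(x)}$, and taking logarithms gives $n(x)$. For the capacity limit, I would evaluate (\ref{cell-growth-model}) at $x=0$. Since $\log\boldsymbol{\sigma}(0) = -\log 2$ and $\epsilon_n\log 2 = \boldsymbol{\sigma}(-\theta_n\gamma_n)$, the last two terms cancel, so $\log y_n(0) = \beta_n\alpha_n$. As $x\to\infty$ we have $\boldsymbol{\sigma}(\theta_n(x-\gamma_n))\to1$ and $\log\boldsymbol{\sigma}(\rho_n\theta_n x)\to 0$, so $\log y_n(x)\to\beta_n(\alpha_n+1)$. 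Subtracting gives $\lim_{x\to\infty} y_n(x)/y_n(0)=\exp(\beta_n)$. For the doubling time, I would impose $y_n(x+T_D)=2y_n(x)$ and use the local (instantaneous) linearisation $\log y_n(x+T_D)\approx \log y_n(x) + T_D\,\frac{d\log y_n}{dx}$. This gives $T_D$ and its reciprocal rate $R_D = 1/T_D$, up to the stated normalisation.

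\emph{Step 3 (extremal acceleration times).} This is the main obstacle, because the exact critical points of (\ref{cell-growth-acceleration}) involve both terms. My approach is to argue that near $\gamma_n$ the seeding term $\boldsymbol{\sigma}^{(1)}(\rho_n\theta_n x)$ is exponentially small, which is the same observation $\boldsymbol{\sigma}(\rho_n\theta_n\gamma_n)\approx1$ used before the theorem. I would therefore drop it and locate the extrema of $\boldsymbol{\sigma}^{(2)}$ by setting $\boldsymbol{\sigma}^{(3)}=0$. Writing $s = \boldsymbol{\sigma}(\theta_n(x-\gamma_n))$, definition \ref{sigmoid-diff} reduces this to $1-6s+6s^2=0$, whose roots are $s = \frac12(1\mp\frac{\sqrt3}{3}) = z_0, z_1$. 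Checking the sign of $\boldsymbol{\sigma}^{(4)}$ at these roots, or simply noting that $\boldsymbol{\sigma}^{(2)}>0$ for $s<\tfrac12$ and $<0$ for $s>\tfrac12$, identifies $z_0$ as the maximum acceleration and $z_1$ as the maximum deceleration. Inverting with the logit and using $1/\theta_n = \beta_n/(4\omega_n)$ gives $t_{n0}$ and $t_{n1}$.

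Finally, the recovery time needs no computation. Because the speed starts high (hypothesis \ref{initial-growth-speed}) and is increasing at $t_{n0}$, it must attain an interior minimum on $[0,t_{n0}]$, which we interpret as the recovery time. The statement for $t_{n0}$ and $t_{n1}$ is therefore an approximation in the regime $\rho_n\theta_n\gamma_n\gg1$, and I would state it explicitly as such.
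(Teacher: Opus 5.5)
Your proposal is correct and takes the same route as the paper, which obtains the corollary simply by differentiating equation (\ref{cell-growth-model}) term by term with the normalised derivatives of definition \ref{sigmoid-diff}; the paper reads $t_{n0}$ and $t_{n1}$ off the roots of $1-6s+6s^2$ in the first term alone, as you do. Your explicit caveat that these two times ignore the seeding term is a correct refinement that the paper leaves implicit. The one precision to add is that the seeding term must be negligible at $t_{n0}$, where its argument is $\rho_n(\theta_n\gamma_n-\log(2+\sqrt{3}))$, and not merely at $\gamma_n$ as you wrote.
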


\subsection{Numerical Modelling \label{numerical-models}}

Given a dataset $(x, \log(y_n))$, the equation (\ref{cell-growth-model}) is difficult to numerically model on its own to get consistent results. In an attempt to get consistent results for the entire dataset, i.e. unique and finite $(\beta_n, \gamma_n, \omega_n, \alpha_n)$-set, we precent the following algorithm.\\

\emph{Step 1:} Normalise the dataset. First normalise the time data points as $x_{\text{norm}} =\frac{x}{ x_{\text{max}} }$, where $x_{\text{max}} =\max(x)$ and $\max(\cdot)$ is the maximum element operator. Then using \emph{NumPy} \emph{polyfit} function with \emph{deg}$=1$ \cite{numpy.polyfit}, fit a line of best fit to the log cell count data points as $\log(y_n) = a + b x_{\text{norm}}$. Now, using the coefficients $a$ and $b$, normalise the cell count data points as $\log(y_n)_{\text{norm}} =\frac{\log(y_n) - \log(y_n)_{\text{min}}}{ \log(y_n)_{\text{max}} }$, where $\log(y_n)_{\text{min}} = a - \text{RMSE}$, $\log(y_n)_{\text{max}} = b + 2 \text{RMSE}$ and $\text{RMSE}$ is the root mean square error of the linear regression.\\

\emph{Step 2:} Find lower and upper bounds for the parameters. Using  \emph{SciPy} \emph{curve\_fit} function with \emph{maxfev}$=10,000$ \cite{scipy.curvefit}, fit the normalised data to the following equation,
\begin{align*}
\log(y_n)_{\text{norm}} = b_n \boldsymbol{\sigma}(4 d_n (x_{\text{norm}} - c_n)),
\end{align*}
where $b_n$, $c_n$ and $d_n$ are bounded below by $0$ and above by $2$.\\

\emph{Step 3:} Find normalised population growth parameters. Using the bounds $0<\beta_n^0 <b_n$, $c_n< \gamma_n^0 < 2$ and $0 < \omega^0_n < b_n d_n$, and using \emph{SciPy} \emph{curve\_fit} function with \emph{maxfev}$=10,000$  \cite{scipy.curvefit}, fit the data to the following equation
\begin{align*}
\log(y_n)_{\text{norm}} = \beta^0_n \big[\boldsymbol{\sigma}\big(\theta^0_n(x_{\text{norm}} - \gamma^0_n)\big) + \epsilon^0_n\log \big(\boldsymbol{\sigma}\big(\rho^0_n \theta^0_n x_{\text{norm}}\big)\big)\big],
\end{align*}
where $\epsilon^0_n$, $\theta^0_n$ and $\rho^0_n$  dependent variables of $\beta_n^0$, $ \gamma_n^0$ and $\omega^0_n$, which are defined in theorem \ref{full-cell-growth-model}. With $\beta_n^0$, $ \gamma_n^0$ and $\omega^0_n$, find $\alpha_n^0$  as follows,
\begin{align*}
\alpha^0_n  = ~& \mathbb{E}\left[ \log(y_n)_{\text{norm}} -  \beta_n^0[\boldsymbol{\sigma}\big(\theta_n^0 \big(x_{\text{norm}} - \gamma_n^0\big)\big) + \epsilon_n^0\log \big(\boldsymbol{\sigma}\big(\rho_n^0 \theta_n^0 x_{\text{norm}}\big)\big)\big]\right] \\
& + \frac{\log(y_n)_{\text{min}}}{\log(y_n)_{\text{max}}}.
\end{align*}
Note that how we had to find $\beta_n^0$, $ \gamma_n^0$ and $\omega^0_n$ first, in order to calculate $\alpha^0_n$; thus, justifying the definition of $\alpha_n$.\\

\emph{Step 4:} Unnormalise the population growth parameters as $\alpha_n  = \frac{1}{\beta_n^0}\alpha^0_n$, $\beta_n  = \log(y_n)_{\text{max}} \beta_n^0 $,  $\gamma_n  = x_{\text{max}} \gamma_n^0 $ and  $\omega_n  = \frac{\log(y_n)_{\text{max}} }{x_{\text{max}}} \omega_n^0 $.\\

As what we describe is rather abstract, following is a working \emph{python} code so the reader may replicate our algorithm.

\begin{python}
# Sigmoid function
def sigmoid_fn(x):
  sigmoid_x = 1.0 / (1.0 + np.exp(-x))
  return sigmoid_x
\end{python}

\begin{python}
# For finding boundary values
def growth_simple_fit_fn(x, b, c, d):
  log_y = b * sigmoid_fn(4 * d * (x - c))
  return log_y
\end{python}

\begin{python}
# For finding finding growth variables
def growth_fit_fn(x, beta, gamma, omega):

  theta = 4 * omega / beta
  sigma = sigmoid_fn(-theta * gamma)

  e = beta * sigma / np.log(2)
  f = 0.5 * np.log(2) * theta * ((1/sigma) - 4  + 4 * sigma)

  u = theta * (x - gamma)
  v = f * x

  log_y = beta * sigmoid_fn(u) + e * np.log(sigmoid_fn(v))
  return log_y
\end{python}

\begin{python}
# Fit the cell-data to the growth_fit_fn
processed_data = []

max_days = 0
max_itter = 10000

x_epsilon = 1e-7
x_upper_bound = 2

len_dataset = len(cell_data)
for datapoint in range(len_dataset):

  try:
    cell_data_point = un_processed_data [ datapoint ]

    cell_name = cell_data_point[0]
    x_data = np.array(cell_data_point[1]) # time axis
    y_data = np.array(cell_data_point[2]) # cell population

    y_data = np.where(y_data > 0, y_data, 1.0)
    y_data = np.log(y_data_raw) # log cell population

    ## normalise dataset
    x_max = np.max(x_data)
    x_data = x_data / x_max

    y_stats = np.polyfit(x_data, y_data, 1, full = True)
    y_a = y_stats[0][1]
    y_b = y_stats[0][0]
    y_rmse = np.sqrt(y_stats[1][0] / len(y_data))

    y_min = y_a - y_rmse
    y_max = y_b + 2 * y_rmse

    y_data = y_data - y_min
    y_data = y_data / y_max

    ## find bounds for the variables
    coefficients_initial = curve_fit(growth_simple_fit_fn, x_data, y_data, bounds = (x_epsilon, x_upper_bound - x_epsilon), maxfev = max_itter)[0]

    x_beta = coefficients_initial[0]
    x_gamma = coefficients_initial[1]
    x_omega = coefficients_initial[0] * coefficients_initial[2]

    lower_bound = [0, x_gamma, 0]
    upper_bound = [x_beta, x_upper_bound, x_omega]

    ## find growth variables
    coefficients_fit = curve_fit(growth_fit_fn, x_data, y_data, bounds = (lower_bound, upper_bound), maxfev = max_itter)[0]

    alpha_beta = np.mean(y_data - growth_fn(x_data, 0, coefficients_fit[0], coefficients_fit[1], coefficients_fit[2])) + (y_min / y_max)

    ## un-normalise variables
    alpha = alpha_beta / coefficients_fit[0]
    beta = y_max * coefficients_fit[0]
    gamma = x_max * coefficients_fit[1]
    omega = y_max * coefficients_fit[2] / x_max

    coefficients = [alpha, beta, gamma, omega]

    ## append data
    processed_data.append([cell_name, coefficients])

  except:
    print("Unable to process data")
\end{python}
where \emph{un\_processed\_data} is a collection of [\emph{cell name, time array, cell count array}]-points and \emph{processed\_data} is a collection of [\emph{cell name}, $[\alpha_n, \beta_n, \gamma_n, \omega_n]$]-points. For a sample dataset, along with a working algorithm, including some interesting plots, please see the link in the footnote\footnote{Population growth model \emph{Colab} notebook with a sample dataset, a working algorithm and plots: 
\url{https://drive.google.com/file/d/1yDLHhs3p2oQ1Hd2--W6luJxUc0wfV6y7/view?usp=drive_link}} for a \emph{Colab} notebook.

\subsection{Experimental Results} \label{n-plots}

In this section, we fit the our population growth model  (theorem \ref{full-cell-growth-model}) to a dataset of observations of 166 CHO monoclones in 96-well microplates over a 15 day period, i.e. the set \emph{CHO2023}, where the data is provided by Wheeler Bio, Inc. \cite{wheelerbio} and the data is processed with DeepInsight\textsuperscript{\textregistered} cell analysis software \cite{tekor}.\\

\begin{figure}[!h]
     \centering
     \begin{subfigure}[b]{0.3\linewidth}
         \raggedleft
         \includegraphics[height=\textwidth]{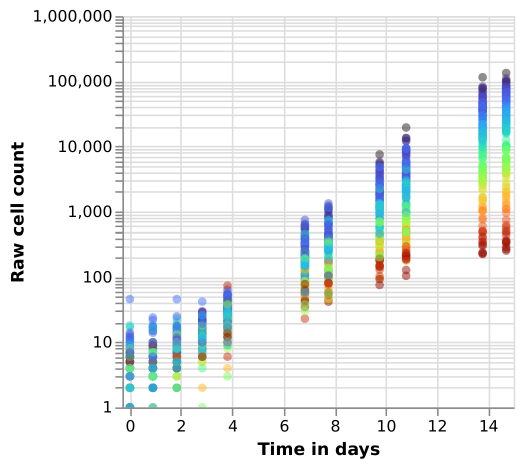}
         \caption{Raw cell count.}
         \label{cell-count0x}
     \end{subfigure}
     \hfill
     \begin{subfigure}[b]{0.3\textwidth}
         \centering
         \includegraphics[height=\textwidth]{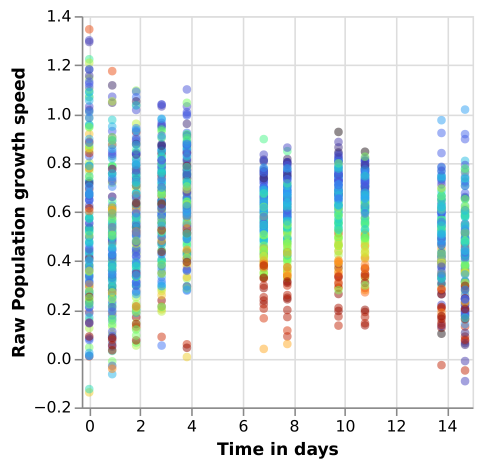}
         \caption{Raw speed.}
         \label{growth-speed0x}
     \end{subfigure}
     \hfill
     \begin{subfigure}[b]{0.3\textwidth}
         \centering
         \includegraphics[height=\textwidth]{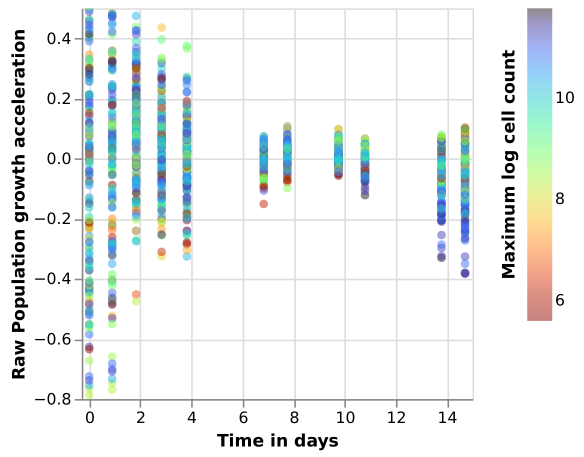}
         \caption{Raw acceleration.}
         \label{growth-acceleration0x}
     \end{subfigure}
	\hfill
     \begin{subfigure}[b]{0.09\textwidth}
     \end{subfigure}
     \hfill
     \begin{subfigure}[b]{0.3\textwidth}
         \centering
         \includegraphics[height=\textwidth]{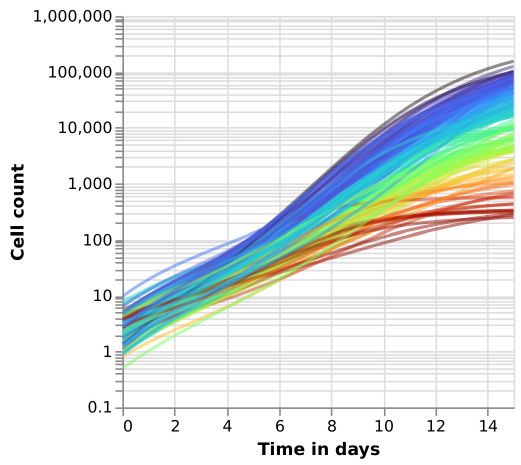}
         \caption{Cell count.}
         \label{modelled-cell-count1x}
     \end{subfigure}
     \hfill
     \begin{subfigure}[b]{0.3\textwidth}
         \centering
         \includegraphics[height=\textwidth]{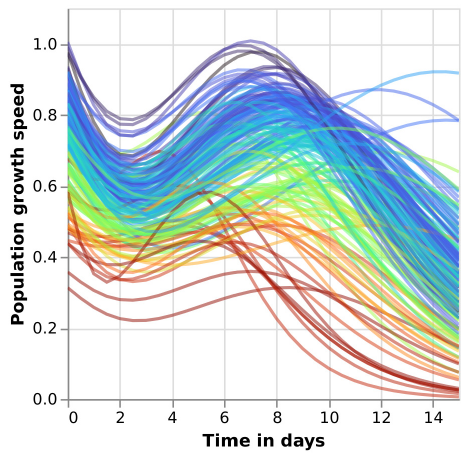}
         \caption{Speed.}
         \label{modelled-growth-speed1x}
     \end{subfigure}
     \hfill
     \begin{subfigure}[b]{0.3\textwidth}
         \centering
         \includegraphics[height=\textwidth]{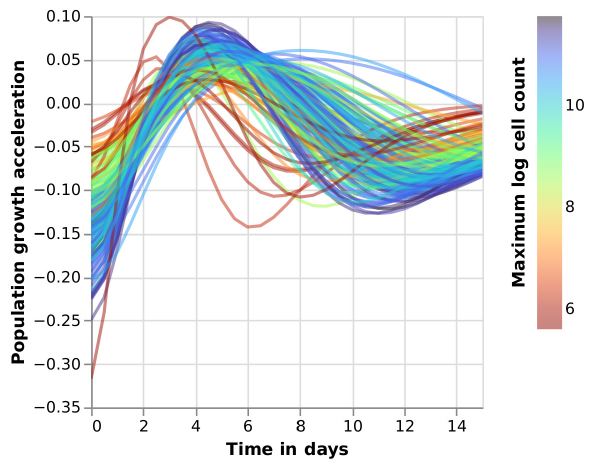}
         \caption{Acceleration.}
         \label{modelled-growth-acceleration1x}
     \end{subfigure}
	\hfill
     \begin{subfigure}[b]{0.09\textwidth}
     \end{subfigure}
     \hfill
        \caption{Population growth patterns (cell count): raw vs modelled.}
        \label{experiment-x}
\end{figure}

Figure \ref{experiment-x} shows cell count, population growth speed and population growth acceleration, along with what is implied by our model, where by \emph{raw} means no model is assumed. As the reader can see that our population growth equation (equation (\ref{cell-growth-model})) is far more accurate at capturing the complex behaviour of the growth speed and the growth acceleration than standard sigmoid function (equation (\ref{sigmoid})). Sub-figures \ref{growth-speed0x} and \ref{modelled-growth-speed1x} show that the initial growth speed rapidly declines and then recovers back to its optimal population growth rate (i.e. $\omega_n$) at the population growth time (i.e. $\gamma_n$), and finally declining as the well approaches full confluency. Sub-figures \ref{growth-acceleration0x} and \ref{modelled-growth-acceleration1x} imply that the rapid decline in the growth rate after being seeded (i.e. shock of being seeded) is captured by $\epsilon_n$ (i.e. incipient population growth capacity coefficient) and its recovery $\delta_n$ (i.e. incipient population growth rate coefficient). In other words, a small $\epsilon_n$ and a large $\delta_n$  result in a better incipient population growth.\\

\begin{figure}[h!]
     \centering
     \begin{subfigure}[b]{0.3\textwidth}
         \centering
         \includegraphics[height=\textwidth]{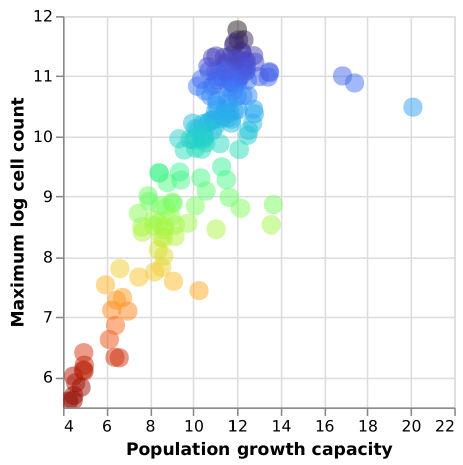}
         \caption{$\beta_n$}
         \label{beta-n}
     \end{subfigure}
     \hfill
     \begin{subfigure}[b]{0.3\textwidth}
         \centering
         \includegraphics[height=\textwidth]{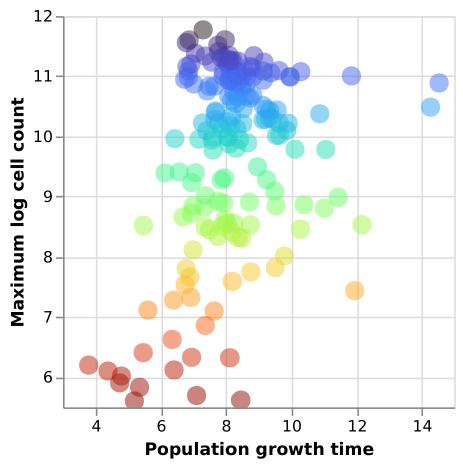}
         \caption{$\gamma_n$}
         \label{gamma-n}
     \end{subfigure}
     \hfill
     \begin{subfigure}[b]{0.3\textwidth}
         \centering
         \includegraphics[height=\textwidth]{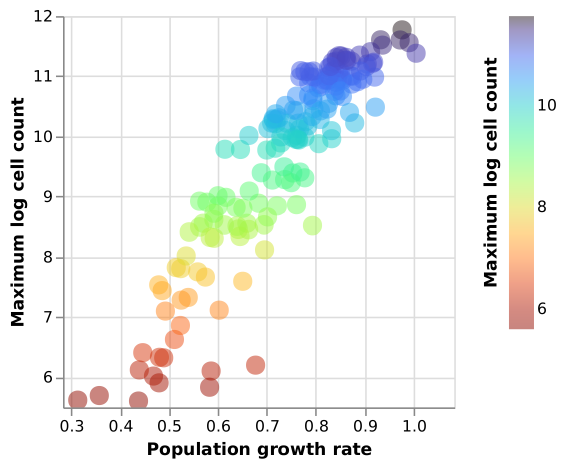}
         \caption{$\omega_n$}
         \label{omega-n}
     \end{subfigure}
		\hfill
     \begin{subfigure}[b]{0.09\textwidth}
     \end{subfigure}
        \caption{Population growth variables.}
        \label{n-var}
\end{figure}

Figure \ref{n-var} shows the population growth variables with respect to maximum of the log cell count, were the blue-shifted dots represent monoclone samples with high final cell counts (upper limit is $129,508$ cells) and the red-shifted dots represent monoclone samples with low final cell counts (lower limit is $272$ cells). Figure \ref{beta-n} implies that the population growth capacity is positively correlated with the final  log cell count ($\rho = 0.838$, where $\rho$ is the correlation coefficient and all correlation coefficients are rounded to 3 decimal points) and  figure \ref{omega-n} implies that the population growth rate is positively correlated with the final log cell count ($\rho = 0.909$). Figure \ref{gamma-n}, implies a positive correlation between the population growth time and the final log cell count ($\rho = 0.347$); however, this correlation becomes negative for monoclones with a final cell count above $\log(11)$, implying a lower population growth time results in a greater final cell count for blue-shifted monoclones, justifying our intuitive understanding of $\gamma_n$.\\

\begin{figure}[h!]
     \centering
     \begin{subfigure}[b]{0.30\textwidth}
         \centering
         \includegraphics[height=\textwidth]{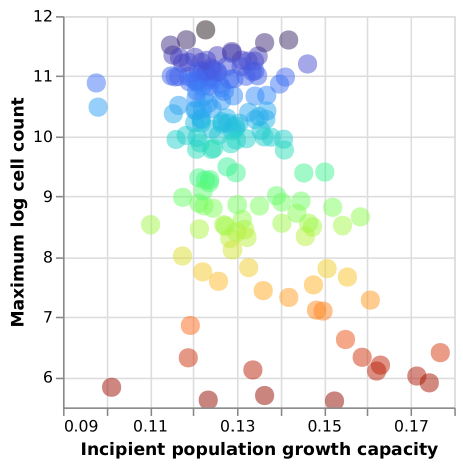}
         \caption{$\epsilon_n$}
         \label{epsilon-n}
     \end{subfigure}
     \hfill
     \begin{subfigure}[b]{0.30\textwidth}
         \centering
         \includegraphics[height=\textwidth]{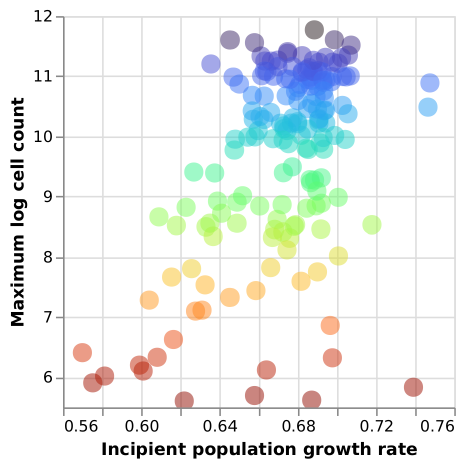}
         \caption{$\delta_n$}
         \label{delta-n}
     \end{subfigure}
     \hfill
     \begin{subfigure}[b]{0.30\textwidth}
         \centering
         \includegraphics[height=\textwidth]{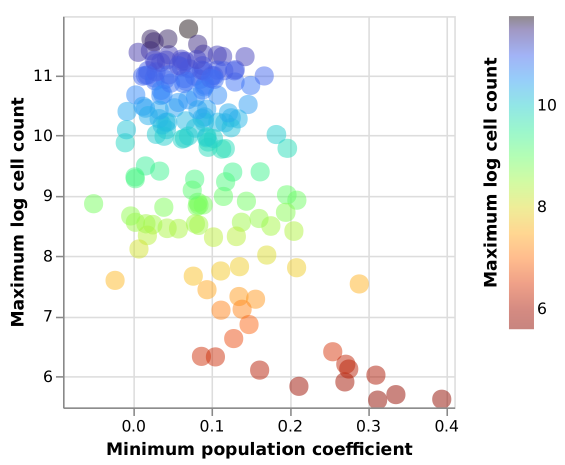}
         \caption{$\alpha_n$}
         \label{alpha-n}
     \end{subfigure}
	\hfill
     \begin{subfigure}[b]{0.09\textwidth}
     \end{subfigure}
        \caption{Incipient population growth coefficients.}
        \label{n-var-2}
\end{figure}

Figure \ref{n-var-2} shows the incipient growth coefficients with respect to maximum of the log cell count. Figure \ref{epsilon-n} implies that incipient growth capacity coefficient is negatively correlated with the final log cell count ($\rho = -0.569$), figure \ref{delta-n} implies that incipient growth rate coefficient is positively correlated with the final log cell count ($\rho = 0.497$) and  figure \ref{alpha-n} implies that minimum population coefficient is negatively correlated with the final log cell count ($\rho = -0.502$) . \\

\begin{figure}[h!]
     \centering
     \begin{subfigure}[b]{0.49\textwidth}
         \centering
         \includegraphics[height=\textwidth]{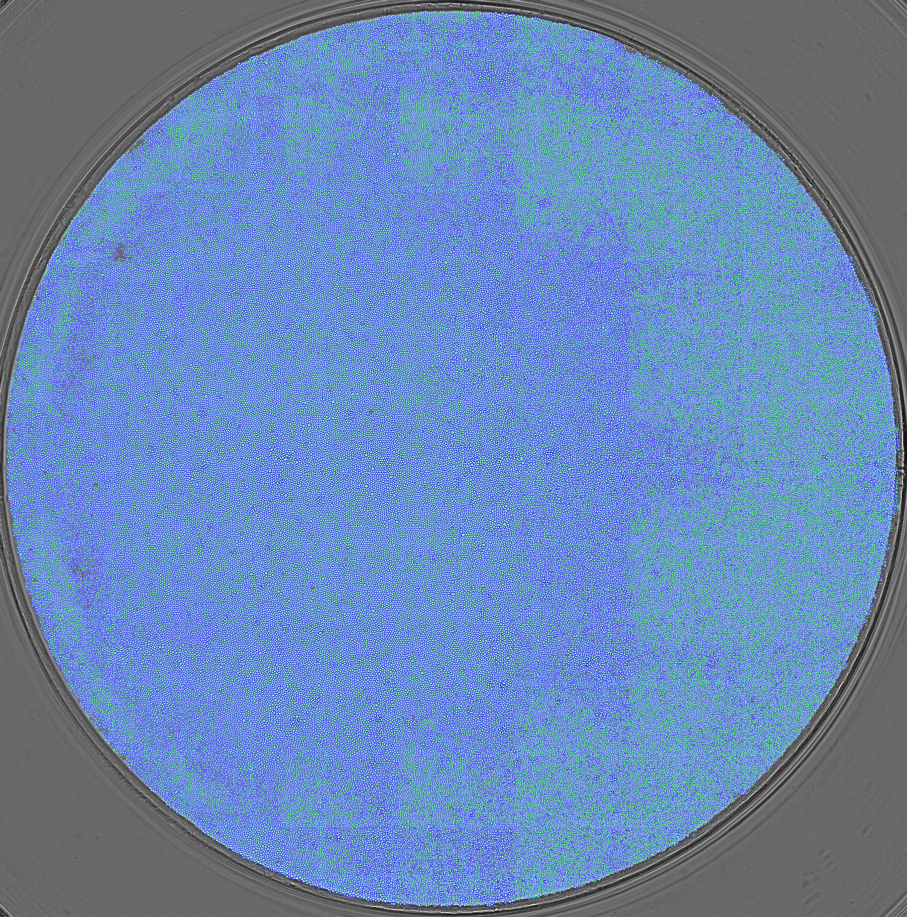}
         \caption{Day 14.}
         \label{day14}
     \end{subfigure}
     \hfill
     \begin{subfigure}[b]{0.49\textwidth}
         \centering
         \includegraphics[height=\textwidth]{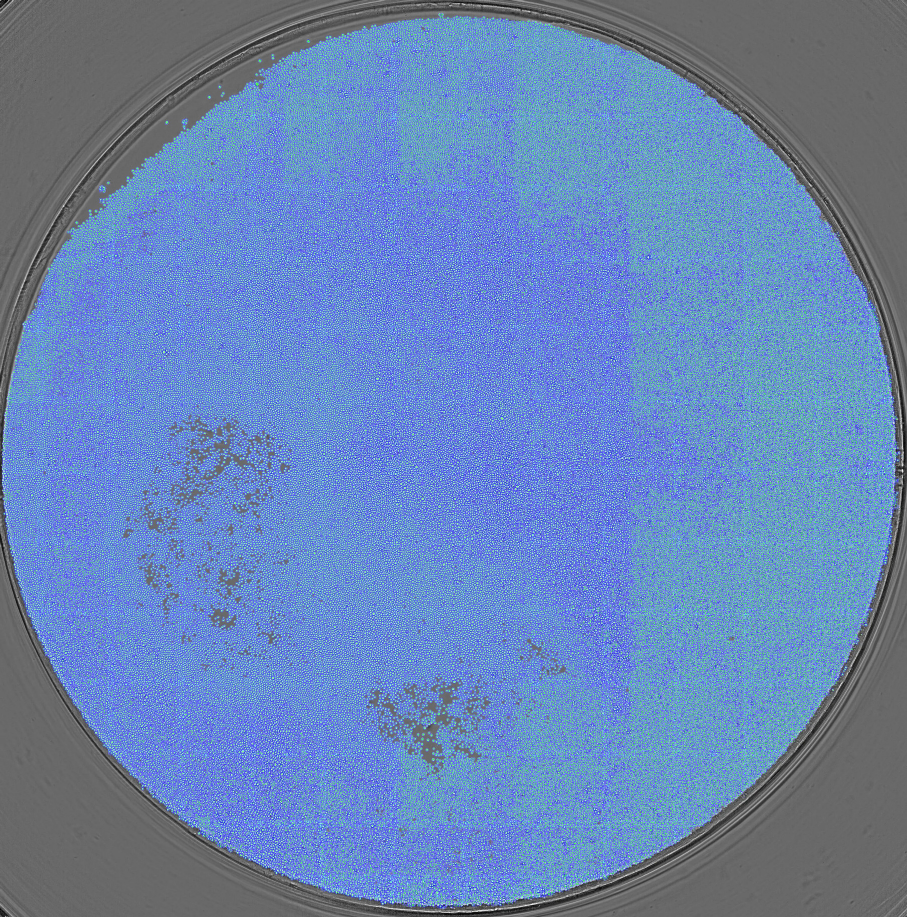}
         \caption{Day 15.}
         \label{day15}
     \end{subfigure}
	  \caption{CHO monoclone $cho\#11771$: full well.}
        \label{full}
\end{figure}

\begin{figure}[!h]
     \centering
     \begin{subfigure}[b]{0.49\textwidth}
         \centering
         \includegraphics[height=\textwidth]{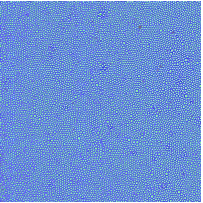}
         \caption{Day 14.}
         \label{day14-zoom}
     \end{subfigure}
     \hfill
     \begin{subfigure}[b]{0.49\textwidth}
         \centering
         \includegraphics[height=\textwidth]{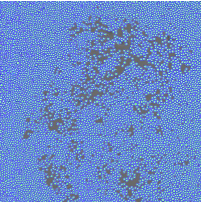}
         \caption{Day 15.}
         \label{day15-zoom}
     \end{subfigure}
        \caption{CHO monoclone $cho\#11771$: zoomed in to the anomaly.}
        \label{zoom}
\end{figure}

From our dataset, we find the monoclone $cho\#11771$ has the greatest maximum cell count, where this maximum cell count is observed on day 14, with near $100\%$ confluence and a cell count of $129,508$. However, on day 15,  we observe the cell count declined to $122,293$, and also a decline in confluency. Should one examine the original images of the well in more depth (see figure \ref{full}), one can see that the after reaching $100\%$ confluency in day 14, the cells starts to form vertical colonies. In fact, should one examine sub-figures \ref{day14-zoom} and \ref{day15-zoom} (where the cropping window is $(0.11,0.33) \times (0.34 , 0.56)$, measured from the bottom left-hand corner of the original image), one can see that reduction in confluence from day 14 to day 15, when cells coalesced to form vertical colonies. TEK Optima Research Ltd. DeepInsight\textsuperscript{\textregistered} cell analysis software \cite{tekor} can only locate visible cells, and thus, cells that are buried under vertical colonies missed in the final count, resulting in the appearance of the decline in overall cell count. In such situations, one can use the population growth equation to estimate the number of cells that the well can theoretically support, i.e. theorem \ref{full-cell-growth-model} implies that for monoclone $cho\#11771$ the cell count on day 15 is $143,622$. Should one recall corollary \ref{full-cell-growth-speed-model}, one finds a theoretical maximum cell count of $164,872$ (i.e. $\exp(\beta_n)$ where $\beta_n = 12.012927$) for monoclone $cho\#11771$. What this shows is that our model captured growth patterns to a degree of accuracy where it can predict the future growth behaviour of the cells in a well.

\section{Confluence}

Another important measure in cell line development is the measure of confluence, i.e. the measure of the surface area covered by cells. Following similar reasonings as in section (\ref{cell-count}) we arrive at the following theorem,

\begin{Theorem}[Confluence Growth Model]\label{cell-confluence}
Confluence growth in a confined space can be modelled by the following equation,
\begin{align}
\log(y_c(x)) =  \beta_c  [\alpha_c + \boldsymbol{\sigma}(\theta_c (x - \gamma_c))] , \label{cell-confluence-eqn}
\end{align}
where $y_c(x)$ is the confluence in pixels, $x$ is the time, $\boldsymbol{\sigma}(\cdot)$ is the sigmoid function (definition \ref{sigmoid-diff}),
\begin{align*}
\theta_c & = 4 \frac{\omega_c}{\beta_c} ,
\end{align*}
$\beta_c$ is the confluence growth capacity, $\omega_c$ is confluence growth rate, $\gamma_c$ is the confluence growth time, 
\begin{align*}
\alpha_c  =\mathbb{E}\left[ \frac{1}{\beta_c} \log(y_c(x)) -  \boldsymbol{\sigma}(\theta_c (x - \gamma_c)) \mid (\beta_c, \gamma_c, \omega_c) \right] 
\end{align*}
is the minimum confluence coefficient, $\mathbb{E}(\cdot)$ is the expectation operator, and where $\beta_c$, $\omega_c$ and $\gamma_c$ are the only independent variables of the model, and $x$ is the only independent variable of the dataset.
\end{Theorem}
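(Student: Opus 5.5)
This theorem is a modelling statement rather than a deductive one, so I would justify it the way Theorem \ref{full-cell-growth-model} was justified in section \ref{cell-count}: posit a sigmoid-type ansatz for the log confluence, match it against the observed growth speed, and fix the constants by consistency conditions. The first step is to examine the confluence growth speed and acceleration, computed with the same central-difference scheme (NumPy \emph{gradient} with \emph{edge\_order}$=2$), and compare them with the speeds of candidate models. The key qualitative difference from the cell count is that confluence measures covered area. A single seeded cell already occupies a finite area, and the seeding shock mainly reduces division rate rather than the area covered. So I expect no initial high-speed phase followed by a dip. If the data confirm this, the correction term $\epsilon\log(\boldsymbol{\sigma}(\rho\theta x))$ in equation (\ref{growth-model}) is unnecessary, which formally means taking $\epsilon_c=0$ and $\delta_c=0$, so that Hypothesis \ref{initial-growth-speed} is no longer needed.

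With that reduction, the ansatz is $\log(y_c(x)) = \log(y_c(0)) + \beta_c\,\boldsymbol{\sigma}(\theta_c(x-\gamma_c))$, i.e. the exponential-of-sigmoid form, which section \ref{cell-count} showed captures the later-stage speed and acceleration well. Differentiating with Definition \ref{sigmoid-diff} gives
\[
\frac{d\log(y_c(x))}{dx}=\beta_c\theta_c\,\boldsymbol{\sigma}(\theta_c(x-\gamma_c))\big(1-\boldsymbol{\sigma}(\theta_c(x-\gamma_c))\big).
\]
This speed is maximised at $x=\gamma_c$, where it equals $\beta_c\theta_c/4$. Requiring this maximum to equal the confluence growth rate $\omega_c$ gives $\theta_c = 4\omega_c/\beta_c$, exactly as in Theorem \ref{full-cell-growth-model}. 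This is also why $\gamma_c$ is interpreted as the confluence growth time.

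Next I would absorb the additive constant. Unlike cell count, $y_c(0)$ is not known to be $1$, since the initial area in pixels varies from cell to cell. I would therefore replace $\log(y_c(0))$ by $\beta_c\alpha_c$ and define $\alpha_c$ by the same conditional-expectation construction used for $\alpha_n$. That is, once $(\beta_c,\gamma_c,\omega_c)$ are fixed, $\alpha_c$ is the mean of the normalised residual $\frac{1}{\beta_c}\log(y_c) - \boldsymbol{\sigma}(\theta_c(x-\gamma_c))$. This leaves $\beta_c,\omega_c,\gamma_c$ as the only independent parameters and $x$ as the only independent variable of the dataset.

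The main obstacle is the first step: justifying empirically that the incipient-shock term really vanishes for confluence, that is, that the fitted $\epsilon_c$ is negligible. I would attempt this by fitting the full form of Theorem \ref{full-cell-growth-model} to the confluence data and checking that the fitted $\epsilon$ is near zero and the early speed dip is absent. If a residual dip did appear, the honest conclusion would be that equation (\ref{cell-confluence-eqn}) is an approximation, adopted to reduce the number of parameters.
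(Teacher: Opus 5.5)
Your proposal is correct and follows essentially the same route as the paper, which justifies this theorem only by ``similar reasonings'' to the population-growth section. Both drop the incipient-shock term $\epsilon\log(\boldsymbol{\sigma}(\rho\theta x))$, which is consistent with the paper's confluence-speed plots showing no early dip, keep the exponential-of-sigmoid form, fix $\theta_c = 4\omega_c/\beta_c$ by matching the peak speed $\beta_c\theta_c/4$ at $x=\gamma_c$ to $\omega_c$, and absorb the additive constant into $\alpha_c$ as the conditional mean residual. One cosmetic slip: the constant in your intermediate ansatz is not literally $\log(y_c(0))$, because at $x=0$ the model gives $\log(y_c(0)) + \beta_c\boldsymbol{\sigma}(-\theta_c\gamma_c)$, but this is harmless since you immediately replace it by the free constant $\beta_c\alpha_c$.
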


Interpretation of theorem \ref{cell-confluence} is as follows. $\beta_c$ defines the capacity of the population to colonise the given environment, where a higher capacity implies a higher likelihood of colonisation as a measure of the log area. $\omega_c$ defines maximum confluence growth rate, where a higher rate implies a higher rate of cell multiplication and/or larger cell area. $\gamma_c$ defines the time for the population to reach $50\%$ confluence, where a lower time results in a shorter time to reach $50\%$ confluence. $\alpha_c$ defines the $\log$ of the initial confluence normalised by $\beta_c$. A large value implies a likelihood of multiple cells been seeded, and thus, giving us another good measure of how trustworthy the monoclone sample is.\\

Now, taking the first and second order derivative of the equation (\ref{cell-confluence-eqn}), we arrive at the following corollary, 

\begin{corollary}\label{c-c}
Theorem \ref{cell-confluence} implies that the confluence growth speed and confluence growth acceleration can respectively expressed as follows,
\begin{align}
\frac{d\log(y_c(x))}{dx}  &= \omega_c \left[ \frac{4}{\theta_c} \boldsymbol{\sigma}^{(1)}(\theta_c (x - \gamma_c)) \right]~\text{and}\nonumber  \\
\frac{d^2\log(y_c(x))}{dx^2} &= \omega_n \theta_c\left[ \frac{4}{(\theta_c)^2} \boldsymbol{\sigma}^{(2)}(\theta_c (x - \gamma_c))  \right],\label{con-acceleration}
\end{align}
where $\boldsymbol{\sigma}^{(1)}(\cdot)$ and  $\boldsymbol{\sigma}^{(2)}(\cdot)$ are the first and second order derivatives of the sigmoid function (definition \ref{sigmoid-diff}). Equation (\ref{con-acceleration}) implies that maximum confluence growth acceleration observed at $ t_{c0} = \gamma_c +  \frac{\beta_c}{4\omega_c}\boldsymbol{\sigma}^{-1}(z_0)$, where $z_0 = \frac{1}{2} \left( 1 - \frac{\sqrt{3}}{3}\right)$ and $\boldsymbol{\sigma}^{-1}(z) = \log\left( \frac{z}{1-z} \right)$ is the inverse of the sigmoid function (i.e. the logit function). Also, $t_{c1} =  \gamma_c +  \frac{\beta_c}{4\omega_c}\boldsymbol{\sigma}^{-1}(z_1) $ is the latter maximum deceleration time, where $z_1 = \frac{1}{2} \left( 1 + \frac{\sqrt{3}}{3}\right) $ is the passage confluence point  ($\approx 78.9\%$).
\end{corollary}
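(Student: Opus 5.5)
Differentiating equation (\ref{cell-confluence-eqn}) term by term gives the two formulas directly. Since $\alpha_c$ and $\beta_c$ do not depend on $x$, the chain rule and Definition \ref{sigmoid-diff} give
\begin{align*}
\frac{d\log(y_c(x))}{dx} = \beta_c\theta_c\,\boldsymbol{\sigma}(u)(1-\boldsymbol{\sigma}(u)), \qquad u = \theta_c(x-\gamma_c).
\end{align*}
With $\theta_c = 4\omega_c/\beta_c$ we have $\beta_c\theta_c = 4\omega_c$. Writing $\boldsymbol{\sigma}(u)(1-\boldsymbol{\sigma}(u)) = \frac{1}{\theta_c}\boldsymbol{\sigma}^{(1)}(u)$, in the normalised notation of Definition \ref{sigmoid-diff}, yields the stated speed formula. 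Differentiating once more, $\frac{1}{\theta_c^2}\boldsymbol{\sigma}^{(2)}(u) = \boldsymbol{\sigma}(1-\boldsymbol{\sigma})(1-2\boldsymbol{\sigma})$ contributes a further factor $\theta_c$. This gives
\begin{align*}
\frac{d^2\log(y_c(x))}{dx^2} = 4\omega_c\theta_c\,\boldsymbol{\sigma}(u)(1-\boldsymbol{\sigma}(u))(1-2\boldsymbol{\sigma}(u)),
\end{align*}
which is equation (\ref{con-acceleration}) up to the bookkeeping of the $\theta_c$ powers. The $\omega_n$ written in (\ref{con-acceleration}) should read $\omega_c$, and I will note this.

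For the extremal times I substitute $z = \boldsymbol{\sigma}(u)\in(0,1)$. The map $x\mapsto z$ is strictly increasing and bijective onto $(0,1)$, so the extrema of the acceleration in $x$ correspond to the extrema of $g(z) = z(1-z)(1-2z)$ on $(0,1)$. Moreover $\frac{d}{dx}g(z) = g'(z)\,\theta_c z(1-z)$, and $z(1-z)>0$, so the critical points coincide. Expanding gives $g(z) = z - 3z^2 + 2z^3$ and $g'(z) = 1 - 6z + 6z^2$, which is exactly the bracket in $\boldsymbol{\sigma}^{(3)}$ of Definition \ref{sigmoid-diff}. Its roots are $z = \frac{1}{2}\bigl(1\mp\frac{\sqrt{3}}{3}\bigr)$, i.e.\ $z_0$ and $z_1$.

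To classify them I use $g''(z) = 12z - 6$. This gives $g''(z_0) = -2\sqrt{3} < 0$, so $z_0$ is a local maximum of $g$, the maximum acceleration. It also gives $g''(z_1) = 2\sqrt{3} > 0$, a minimum of $g$, i.e.\ the maximum deceleration. Since $g\to 0$ at both ends of $(0,1)$ and $g(z_0)>0>g(z_1)$, these extrema are global. Inverting $z = \boldsymbol{\sigma}(\theta_c(x-\gamma_c))$ gives $x = \gamma_c + \frac{1}{\theta_c}\boldsymbol{\sigma}^{-1}(z) = \gamma_c + \frac{\beta_c}{4\omega_c}\boldsymbol{\sigma}^{-1}(z)$, which yields $t_{c0}$ and $t_{c1}$.

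For the passage confluence point I read $z = \boldsymbol{\sigma}(\theta_c(x-\gamma_c))$ as the fraction of the sigmoidal part of $\log y_c / \beta_c$ attained. I then evaluate $z_1 = \frac{1}{2}(1 + 0.57735)\approx 0.789$, giving the stated $78.9\%$.

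The only real obstacle is interpretive rather than computational. The ``$78.9\%$ confluence'' holds only if one identifies $z$ with the normalised (log-)confluence, ignoring the offset $\alpha_c$. Similarly, the scaling conventions in (\ref{con-acceleration}) must be matched to the normalised derivatives of Definition \ref{sigmoid-diff}. I will state these identifications explicitly rather than try to derive them.
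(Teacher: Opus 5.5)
Your proposal is correct and takes essentially the paper's route. The paper just differentiates equation (\ref{cell-confluence-eqn}) using the normalised derivatives of Definition \ref{sigmoid-diff}, and your computation reproduces (\ref{con-acceleration}) exactly, with no residual $\theta_c$ bookkeeping beyond the $\omega_n\to\omega_c$ typo fix; $t_{c0}$ and $t_{c1}$ come from the zeros of the bracket $1-6z+6z^2$ in $\boldsymbol{\sigma}^{(3)}$, which you solve and also classify (global maximum and minimum), something the paper does not do explicitly. Your caveat that the ``$78.9\%$'' refers only to the normalised quantity $z=\boldsymbol{\sigma}(\theta_c(x-\gamma_c))$ rather than to literal areal confluence is apt, since the paper asserts that identification without derivation.
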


\subsection{Numerical Modelling}

As in section \ref{numerical-models}, given a dataset $(x, \log(y_c))$, to find a $(\beta_c, \gamma_c, \omega_c, \alpha_c)$-set, we precent the following algorithm.\\

\emph{Step 1:} Normalise the dataset. First normalise the time data points as $x_{\text{norm}} =\frac{x}{ x_{\text{max}} }$, where $x_{\text{max}} =\max(x)$. Then using \emph{NumPy} \emph{polyfit} function with \emph{deg}$=1$ \cite{numpy.polyfit}, fit a line of best fit to the log confluence data points as $\log(y_c) = a + b x_{\text{norm}}$. Now, using the coefficients $a$ and $b$, normalise the confluence data points as $\log(y_c)_{\text{norm}} =\frac{\log(y_c) - \log(y_c)_{\text{min}}}{ \log(y_c)_{\text{max}} }$, where $\log(y_c)_{\text{min}} = a - \text{RMSE}$, $\log(y_c)_{\text{max}} = b + 2 \text{RMSE}$.\\

\emph{Step 2:} Find lower and upper bounds for the parameters. Using  \emph{SciPy} \emph{curve\_fit} function with \emph{maxfev}$=10,000$ \cite{scipy.curvefit}, fit the normalised data to the following equation,
\begin{align*}
\log(y_c)_{\text{norm}} = b_c \boldsymbol{\sigma}(4 d_c (x_{\text{norm}} - c_c)) ,
\end{align*}
where $b_c$, $c_c$ and $d_c$ are bounded below by $0$ and above by $2$.\\

\emph{Step 3:} Find normalised confluence growth parameters. Using the bounds $0<\beta_c^0 <b_c$, $c_c< \gamma_c^0 < 2$ and $0 < \omega^0_c < b_c d_c$, and using \emph{SciPy} \emph{curve\_fit} with \emph{maxfev}$=10,000$  \cite{scipy.curvefit}, fit the data to the following equation,
\begin{align*}
\log(y_c)_{\text{norm}} = \beta^0_c \big[\boldsymbol{\sigma}\big(\theta^0_c(x_{\text{norm}} - \gamma^0_c)\big) - \boldsymbol{\sigma}\big(-\theta^0_c \gamma^0_c\big) \big],
\end{align*}
where $\theta^0_c = 4 \frac{\omega^0_c}{\beta^0_c}$. With $\beta_c^0$, $ \gamma_c^0$ and $\omega^0_c$, find $\alpha_c^0$  as follows,
\begin{align*}
\alpha^0_c  =\mathbb{E}\left[ \log(y_c)_{\text{norm}} -  \beta_c^0\boldsymbol{\sigma}\big(\theta_c^0 \big(x_{\text{norm}} - \gamma_c^0\big)\big)\right] + \frac{\log(y_c)_{\text{min}}}{\log(y_c)_{\text{max}}}.
\end{align*}

\emph{Step 4:} Unnormalise the confluence growth parameters as $\alpha_c  = \frac{1}{\beta_c^0}  \alpha^0_c$, $\beta_c  =\log(y_c)_{\text{max}} \beta_c^0 $,  $\gamma_c  = x_{\text{max}} \gamma_c^0 $ and  $\omega_c  = \frac{\log(y_c)_{\text{max}} }{x_{\text{max}}} \omega_c^0 $.\\

For a sample dataset, along with a working algorithm, please see the link in the footnote\footnote{Confluence growth model \emph{Colab} notebook with a sample dataset, a working algorithm and plots: 
\url{https://drive.google.com/file/d/1R9t1X2LYn71wkQU8oMyZhEImYD18-EgI/view?usp=drive_link}} for a \emph{Colab} notebook.

\subsection{Experimental Results}

In this section, we fit the our confluence growth model  (theorem \ref{full-cell-growth-model}) to \emph{CHO2023} dataset, where the confluence calculated as a number of pixels and the resolution is given to be $2\mu$m per pixel \cite{CellMetric0}.\\

\begin{figure}[h!]
     \centering
     \begin{subfigure}[b]{0.3\textwidth}
         \centering
         \includegraphics[height=\textwidth]{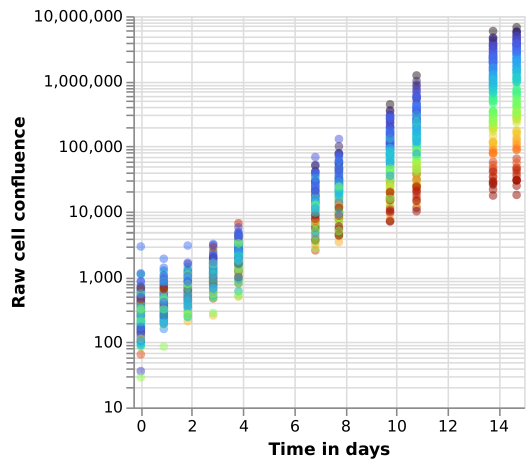}
         \caption{Raw confluence.}
     \end{subfigure}
     \hfill
     \begin{subfigure}[b]{0.3\textwidth}
         \centering
         \includegraphics[height=\textwidth]{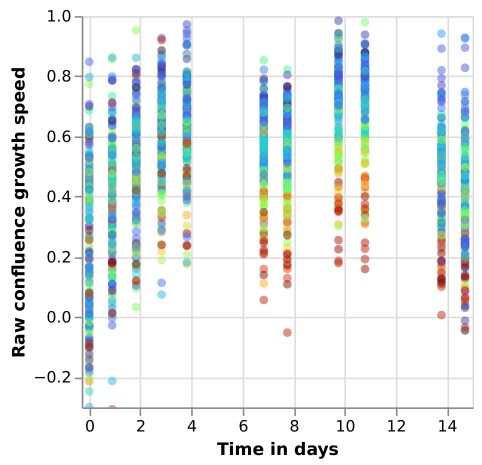}
         \caption{Raw speed.}
         \label{c-speed-raw}
     \end{subfigure}
     \hfill
     \begin{subfigure}[b]{0.3\textwidth}
         \centering
         \includegraphics[height=\textwidth]{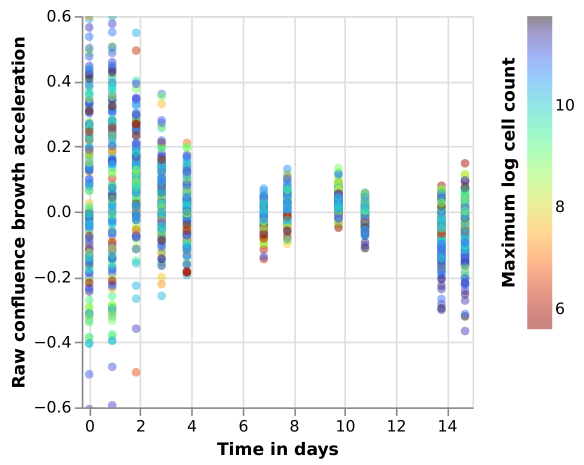}
         \caption{Raw acceleration.}
         \label{n-acc-raw}
     \end{subfigure}
     \hfill
     \begin{subfigure}[b]{0.09\textwidth}
     \end{subfigure}
     \hfill
     \begin{subfigure}[b]{0.3\textwidth}
         \centering
         \includegraphics[height=\textwidth]{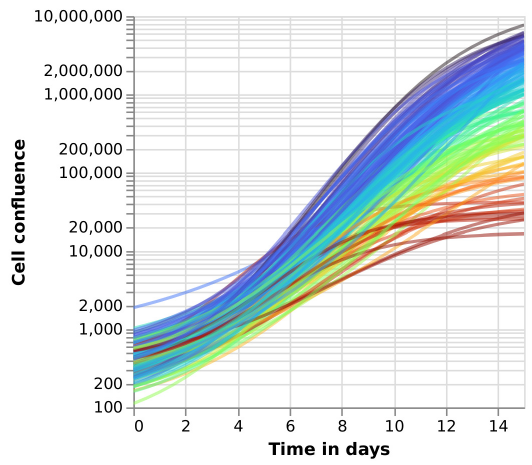}
         \caption{Confluence.}
     \end{subfigure}
     \hfill
     \begin{subfigure}[b]{0.3\textwidth}
         \centering
         \includegraphics[height=\textwidth]{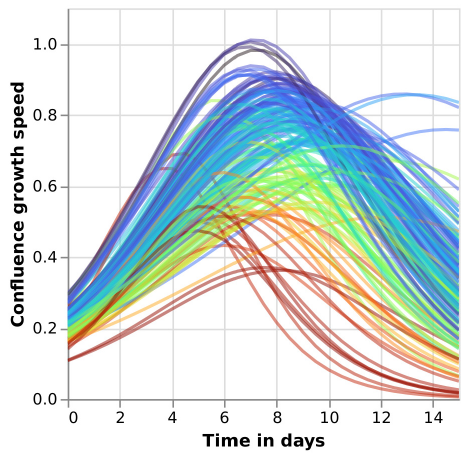}
         \caption{Speed.}
         \label{c-speed}
     \end{subfigure}
     \hfill
     \begin{subfigure}[b]{0.3\textwidth}
         \centering
         \includegraphics[height=\textwidth]{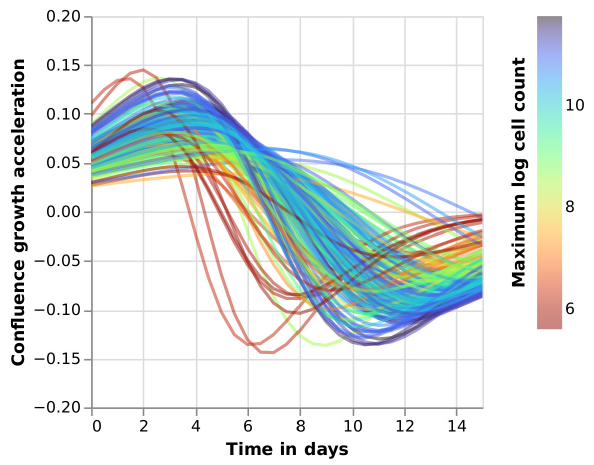}
         \caption{Acceleration.}
         \label{n-acc}
     \end{subfigure}
	\hfill
     \begin{subfigure}[b]{0.09\textwidth}
     \end{subfigure}
        \caption{Confluence growth patterns (in pixels): raw vs modelled.}
        \label{experiment-c}
\end{figure}

Figure \ref{experiment-c} shows confluence in pixels, confluence growth speed and confluence growth acceleration, along with what is implied by our model. Sub-figures \ref{c-speed-raw} and \ref{c-speed} show confluence growth speed increasing to its maximum confluence growth rate (i.e. $\omega_c$) at the confluence growth time (i.e. $\gamma_c$), and finally declining as the well approaches full confluency. Sub-figures \ref{n-acc-raw} and \ref{n-acc} imply that cell passage time is observed at when the confluence growth acceleration attains a minimum (i.e where the confluence $\approx 78.9\%$, see corollary \ref{c-c}); however, numerical results imply that this relation may not hold true for red-shifted monoclones.\\

\begin{figure}[h!]
     \centering
     \begin{subfigure}[b]{0.23\textwidth}
         \centering
         \includegraphics[height=\textwidth]{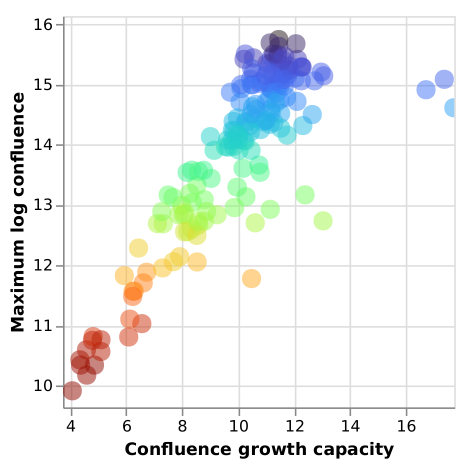}
         \caption{$\beta_c$}
         \label{beta-c}
     \end{subfigure}
     \hfill
     \begin{subfigure}[b]{0.23\textwidth}
         \centering
         \includegraphics[height=\textwidth]{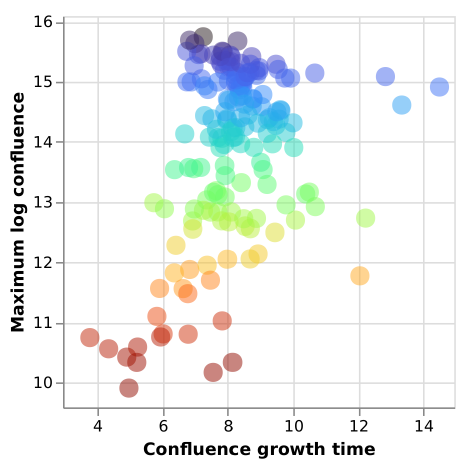}
         \caption{$\gamma_c$}
         \label{gamma-c}
     \end{subfigure}
     \hfill
     \begin{subfigure}[b]{0.23\textwidth}
         \centering
         \includegraphics[height=\textwidth]{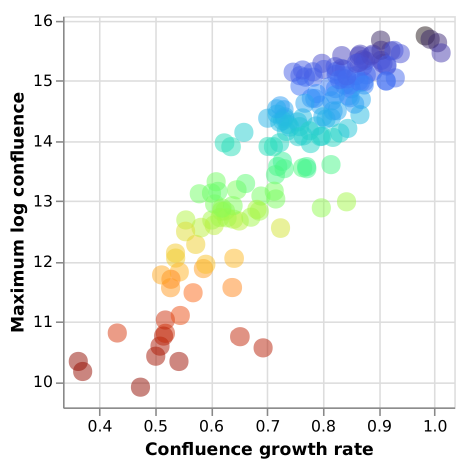}
         \caption{$\omega_c$}
         \label{omega-c}
     \end{subfigure}
	\hfill
     \begin{subfigure}[b]{0.23\textwidth}
         \centering
         \includegraphics[height=\textwidth]{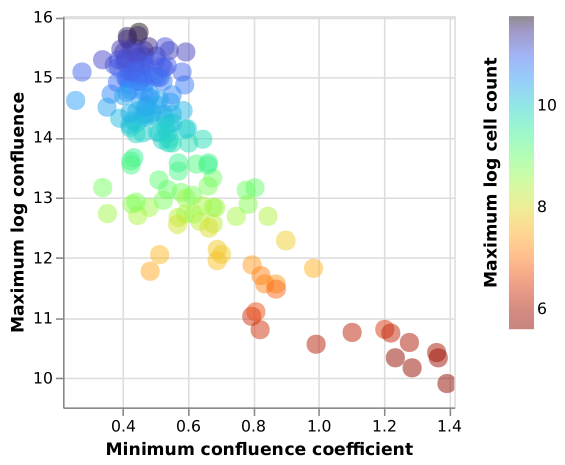}
         \caption{$\alpha_c$}
         \label{alpha-c}
     \end{subfigure}
	\hfill
     \begin{subfigure}[b]{0.07\textwidth}
     \end{subfigure}
        \caption{Confluence growth variables.}
        \label{c-var}
\end{figure}

Figure \ref{c-var} shows the confluence growth variables with respect to maximum of the log confluence, where the confluence is in pixels. Figure \ref{beta-c} implies that the confluence growth capacity is positively correlated with the final log cell count ($\rho=0.844$), figure \ref{omega-n} implies that the confluence growth rate is positively correlated with the final log cell count ($\rho=0.890$) and figure \ref{alpha-n} implies that minimum confluence coefficient is negatively correlated with the final log  cell count ($\rho=-0.827$). Figure \ref{gamma-n}, implies a positive correlation between the confluence growth time and the final log cell count ($\rho=0.384$); however, this correlation becomes negative for monoclones with a final confluence above $\log(15)$, implying a lower confluence growth time results in a greater final cell count for blue-shifted monoclones. Recall that these are the very same correlations that we observed in section \ref{n-plots} with population growth variables. This is a very intuitive result due to the strong interdependence between the cell count and the confluence (discussed in section \ref{sec-d}).

\section{Modelling Cell Area Distribution \label{log-normal}}

In this section, with cell area collected from TEK Optima Research Ltd. DeepInsight\textsuperscript{\textregistered} cell analysis software \cite{tekor}, we perform Kolmogorov-Smirnov analysis to find cell area distributions of our dataset. By cell area, we mean the confluence per cell (not to be confused with the surface area of a cell), where the confluence per cell  is measured in pixels.\\

\begin{figure}[h!]
     \centering
     \begin{subfigure}[b]{0.3\textwidth}
         \centering
         \includegraphics[height=\textwidth]{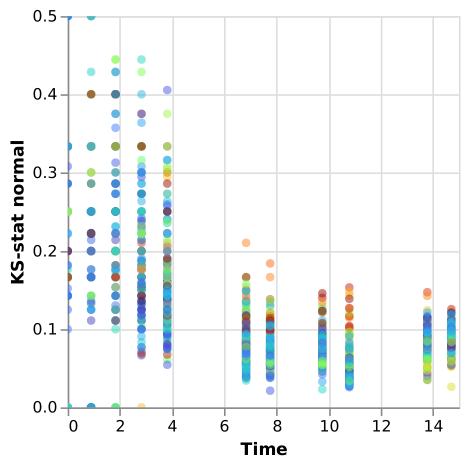}
         \caption{$\mathcal{N}$ error.}
         \label{nn}
     \end{subfigure}
     \hfill
     \begin{subfigure}[b]{0.3\textwidth}
         \centering
         \includegraphics[height=\textwidth]{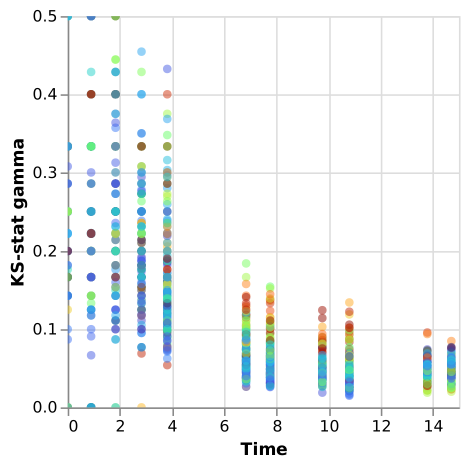}
         \caption{$\Gamma$ error.}
         \label{gg}
     \end{subfigure}
     \hfill
     \begin{subfigure}[b]{0.3\textwidth}
         \centering
         \includegraphics[height=\textwidth]{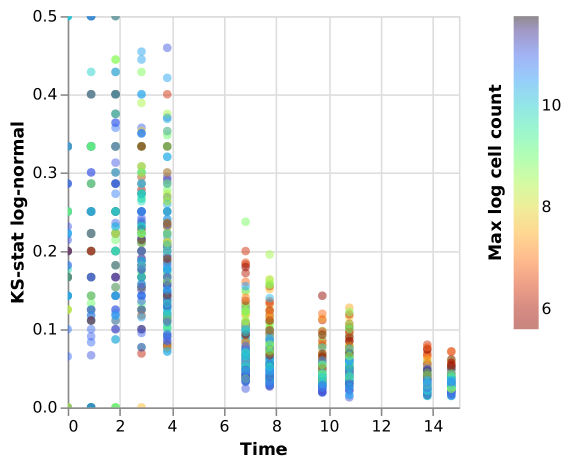}
         \caption{$\log\mathcal{N}$ error.}
         \label{lnln}
     \end{subfigure}
	\hfill
     \begin{subfigure}[b]{0.09\textwidth}
     \end{subfigure}
	\hfill
     \begin{subfigure}[b]{0.3\textwidth}
         \centering
         \includegraphics[height=\textwidth]{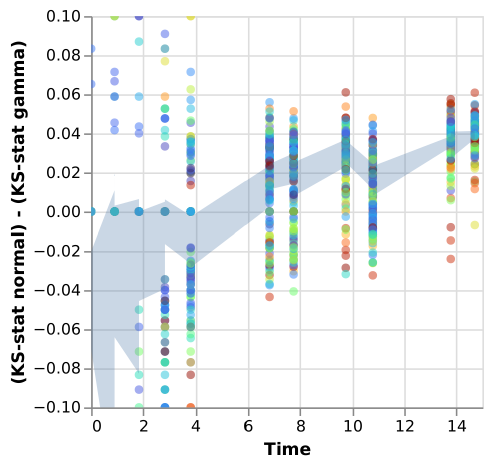}
         \caption{$\mathcal{N}$ error $-$ $\Gamma$ error.}
         \label{ng}
     \end{subfigure}
	\hfill
     \begin{subfigure}[b]{0.3\textwidth}
         \centering
         \includegraphics[height=\textwidth]{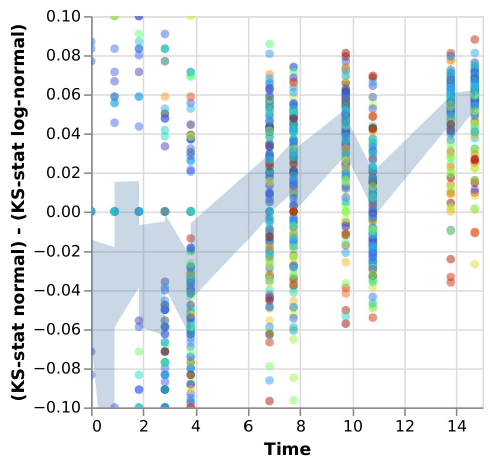}
         \caption{$\mathcal{N}$ error $-$ $\log\mathcal{N}$ error.}
         \label{nln}
     \end{subfigure}
	\hfill
     \begin{subfigure}[b]{0.3\textwidth}
         \centering
         \includegraphics[height=\textwidth]{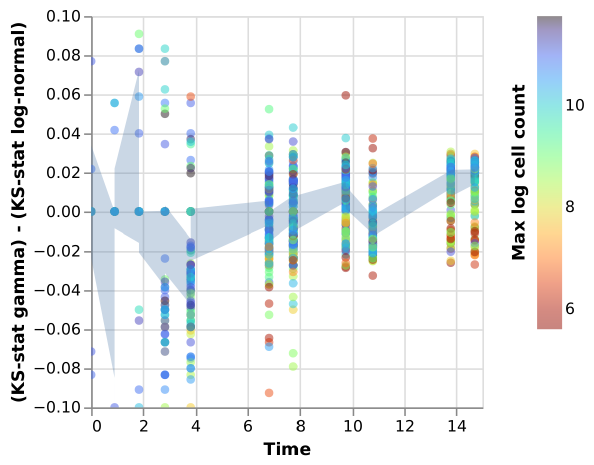}
         \caption{$\Gamma$ error $-$ $\log\mathcal{N}$ error.}
         \label{gln}
     \end{subfigure}
	\hfill
     \begin{subfigure}[b]{0.09\textwidth}
     \end{subfigure}
        \caption{Kolmogorov-Smirnov statistic for normal, gamma and log-normal distributions, and their difference.}
        \label{ks}
\end{figure}

Figure \ref{ks} shows the Kolmogorov-Smirnov statistic (can be interpreted as the error) of cell area distribution with respect to normal ($\mathcal{N}$), gamma ($\Gamma$) and log-normal ($\log\mathcal{N}$) distributions, where the Kolmogorov-Smirnov statistic is calculated with \emph{SciPy} \emph{kstest} \cite{scipy.kstest}. Should one examines figures \ref{nn}, \ref{gg} and \ref{lnln} one may see that as the cell population grows, cell area distribution becomes log-normal. However, should one examine the difference in the Kolmogorov-Smirnov statistics of normal and gamma (see figure \ref{ng}), normal and log-normal (see figure \ref{nln}), and gamma and log-normal (see figure \ref{gln}), one can see that in the first 4 days, area per cell is normally distributed, day 6 to day 11 and for red-shifted samples (i.e. low final cell count samples), area per cell is gamma distributed, and day 8 to day 15 and blue-shifted samples (i.e. high final cell count samples), area per cell is log-normally distributed, where the blue $95\%$ confidence bands are calculated with \emph{Vega-Altair} visualisation software \cite{altair}. These observations leads to the following theorem,

\begin{Theorem}\label{thrm-ks}
Kolmogorov-Smirnov statistic implies that the cell areas of the incipient monoclones are normally distributed, cell areas of the sparse cell population are gamma distributed and cell areas of the dense colony population are log-normally distributed.
\end{Theorem}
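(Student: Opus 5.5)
The argument for Theorem \ref{thrm-ks} is empirical rather than deductive, so the plan is to turn the observations behind figure \ref{ks} into a reproducible statistical comparison. Model selection is by smallest Kolmogorov-Smirnov statistic among the three candidate families.

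For each monoclone in \emph{CHO2023} and each observation day $x$, I take the sample of per-cell areas $\{a_1,\dots,a_m\}$ (confluence per cell, in pixels). I fit three candidates by maximum likelihood: a normal $\mathcal{N}(\mu,s^2)$, a gamma $\Gamma(k,\lambda)$, and a log-normal $\log\mathcal{N}(\mu,s^2)$. For the log-normal fit I simply fit a normal to $\log a_i$. I then compute
\begin{align*}
D_F = \sup_a \left|\hat F_m(a) - F(a)\right|
\end{align*}
for each fitted $F$ using \emph{SciPy} \emph{kstest}. For every day and every pair of families I form the differences $D_{\mathcal{N}}-D_{\Gamma}$, $D_{\mathcal{N}}-D_{\log\mathcal{N}}$ and $D_{\Gamma}-D_{\log\mathcal{N}}$ across monoclones, and attach $95\%$ confidence bands to their means.

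Next I make "incipient", "sparse" and "dense" operational. Incipient means the first days, when $n(x)$ from corollary \ref{full-cell-growth-speed-model} is only a few generations. Sparse means the intermediate regime, or red-shifted monoclones, with confluence below the passage point $z_1$ of corollary \ref{c-c}. Dense means blue-shifted monoclones and late days, with confluence near or above $z_1$.

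Within each regime the claim reduces to sign statements about the confidence bands:
\begin{itemize}
\item incipient: $D_{\mathcal{N}}-D_{\Gamma}<0$ and $D_{\mathcal{N}}-D_{\log\mathcal{N}}<0$;
\item sparse: $D_{\mathcal{N}}-D_{\Gamma}>0$ and $D_{\Gamma}-D_{\log\mathcal{N}}<0$;
\item dense: $D_{\Gamma}-D_{\log\mathcal{N}}>0$ and $D_{\mathcal{N}}-D_{\log\mathcal{N}}>0$.
\end{itemize}
Reading these signs off figures \ref{ng}, \ref{nln} and \ref{gln} gives the three assertions, on days 1 to 4, days 6 to 11, and days 8 to 15 respectively.

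I also want a heuristic mechanism supporting each regime:
\begin{itemize}
\item \textbf{Incipient.} A handful of cells give near-symmetric area fluctuations.
\item \textbf{Sparse.} Division produces right-skewed sums of waiting-time-like growth increments, which is gamma-like, as in Jia \emph{et al.} and Golubev.
\item \textbf{Dense.} Confinement makes area changes multiplicative, since each cell's area is rescaled by neighbouring compression. A central-limit argument on $\log a$ then gives log-normality, consistent with Lenz \emph{et al.}
\end{itemize}

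The main obstacle is that $D_F$ is computed with fitted parameters, so the classical Kolmogorov-Smirnov null distribution is invalid, and small-$m$ samples on early days are noisy. Because of this, the theorem can only be a comparative statement about which family fits best, not an absolute acceptance of any family. Where the three regimes overlap (days 8 to 11), the conclusion has to be split by red- versus blue-shifted samples rather than by day alone. If the bands are too wide, my first remedy would be a parametric bootstrap (Lilliefors-type) to calibrate the differences.
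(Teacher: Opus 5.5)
Your proposal follows the paper's route. The theorem is read off empirically from the \emph{SciPy} \emph{kstest} statistics for fitted normal, gamma and log-normal models, via the signs of the pairwise differences $D_{\mathcal{N}}-D_{\Gamma}$, $D_{\mathcal{N}}-D_{\log\mathcal{N}}$ and $D_{\Gamma}-D_{\log\mathcal{N}}$ with their $95\%$ confidence bands, split by day (1--4, 6--11, 8--15) and by red- versus blue-shifted samples. Your regime definitions via $n(x)$ and $z_1$, the mechanistic heuristics, and the caveat that fitted parameters invalidate the classical Kolmogorov-Smirnov null are additions the paper does not make, since it reads the figures by day and colour only, but they leave the core argument unchanged.
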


\begin{figure}[h!]
     \centering
     \begin{subfigure}[b]{0.40\textwidth}
         \centering
         \includegraphics[height=\textwidth]{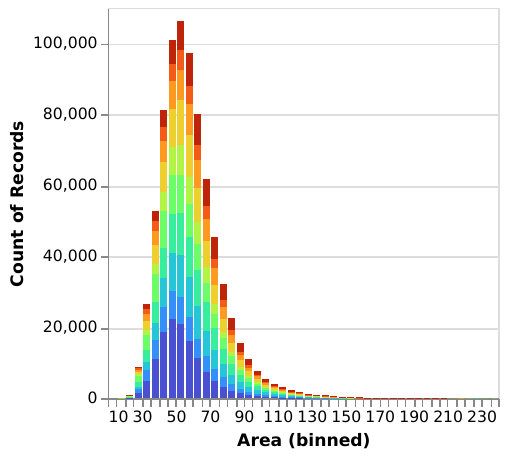}
         \caption{Raw area.}
         \label{a-raw}
     \end{subfigure}
     \hfill
     \begin{subfigure}[b]{0.40\textwidth}
         \centering
         \includegraphics[height=\textwidth]{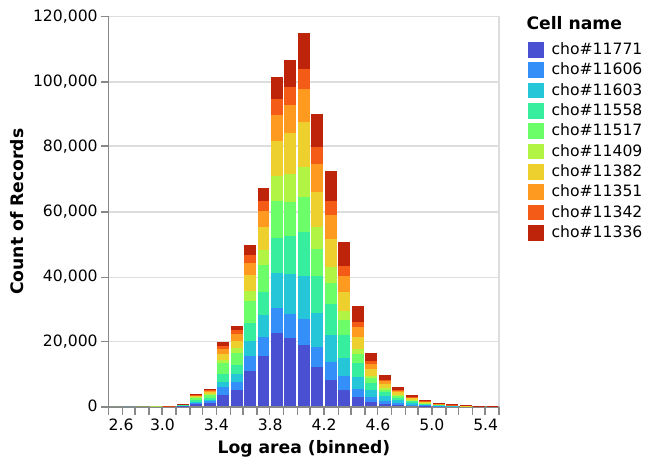}
         \caption{Raw log area.}
         \label{log-a-raw}
     \end{subfigure}
	\hfill
     \begin{subfigure}[b]{0.19\textwidth}
     \end{subfigure}
     \hfill
     \begin{subfigure}[b]{0.40\textwidth}
         \centering
         \includegraphics[height=\textwidth]{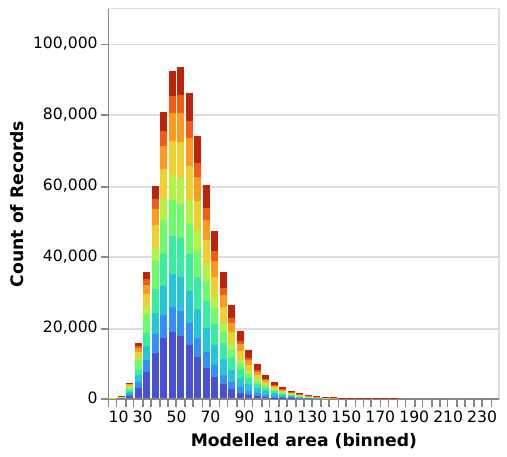}
         \caption{Area.}
         \label{a}
     \end{subfigure}
     \hfill
     \begin{subfigure}[b]{0.4\textwidth}
         \centering
         \includegraphics[height=\textwidth]{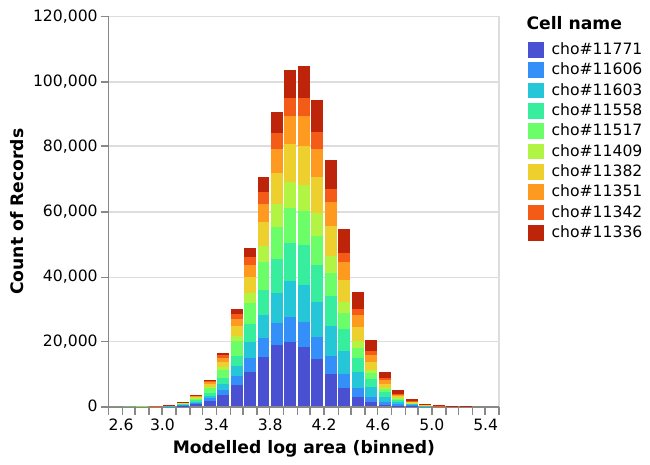}
         \caption{Log area.}
         \label{log-a}
     \end{subfigure}
	\hfill
     \begin{subfigure}[b]{0.19\textwidth}
     \end{subfigure}
        \caption{Cell area distributions of 10 CHO monoclones on day 14: raw vs log-normally modelled.}
        \label{histo}
\end{figure}

Our goal is to study the behaviour of large populations, and as we established that the cell area of large populations are log-normally distributed in theorem \ref{thrm-ks}, herein we model cell areas to be normally distributed. Figure \ref{histo} shows the cell area distributions of 10 CHO monoclones on day 14, where figures \ref{a-raw} and \ref{log-a-raw} show the observed cell area distributions and figures \ref{a} and \ref{log-a} show the log-normally modelled cell area distributions.\\

\begin{table}[h!]
\begin{tabular}{ |l||r|r|r|r|r|r|r|}
\toprule
\multirow{1}{*}{Cell name } & \multicolumn{1}{ |c |}{ Cell} &  \multicolumn{2}{ |c |}{KS-stat (error)} & \multicolumn{4}{ |c |}{ Log area} \\
\cmidrule{3-8}
 & \multicolumn{1}{ |c |}{ count} &\multicolumn{1}{ |c |}{ $\Gamma$}& \multicolumn{1}{ |c |}{ $\log\mathcal{N}$ } & \multicolumn{1}{ |c |}{ Mean} &	\multicolumn{1}{ |c |}{ STD}&	\multicolumn{1}{ |c |}{ Max} &	\multicolumn{1}{ |c |}{ Min}\\
\midrule
$cho\#11771$ 	&	129,508	&0.06447&0.04343	&	3.941	&	0.2586	&	5.583	&	2.197\\
$cho\#11606$ 	&	61,533	&0.04342&0.02480	&	3.992	&	0.3148	&	5.656	&	2.197\\
$cho\#11603$ 	&	86,873	&0.07277&0.04712	&	4.051	&	0.3013	&	5.710	&	2.708\\
$cho\#11558$ 	&	91,368	&0.03756&0.01874&	4.030	&	0.2937	&	5.704	&	2.773\\
$cho\#11517$ 	&	82,960	&0.04825&0.02586&	3.961	&	0.3080	&	5.618	&	2.303\\
$cho\#11409$ 	&	53,683	&0.05849&0.03592	&	4.008	&	0.2588	&	5.525	&	2.944\\
$cho\#11382$ 	&	87,778	&0.06624&0.04036	&	4.037	&	0.2900	&	5.690	&	2.197\\
$cho\#11351$ 	&	67,354	&0.04604&0.02442	&	4.045	&	0.2914	&	5.697	&	2.890\\
$cho\#11342$ 	&	41,011	&0.03619&0.01822	&	4.023	&	0.2846	&	5.468	&	2.708\\
$cho\#11336$ 	&	73,343	&0.05153&0.02807	&	4.115	&	0.2957	&	5.727	&	2.197\\
\bottomrule
\end{tabular}
\caption{Cell log area mean, standard deviation, maximum and minimum, the cell count, and the Kolmogorov-Smirnov statistic with respect to gamma and log-normal distributions of 10 CHO monoclones on day 14.}
\label{tab0}
\end{table}

Table \ref{tab0} shows the raw log cell areas statistics for the 10 CHO monoclones that is used to generate figures \ref{a} and \ref{log-a}, where cell area data in table \ref{tab0} rounded to 4 significant figures. The reader may find that log-normal distribution is a better fit than gamma distribution for all the 10 samples. Also, should one examine the data more carefully, one finds that both mean ($\rho=-0.360$) and standard deviation ($\rho=-0.213$) of  log cell area are negatively correlated with the cell count. This motivates us to study how the cell areas grow as a function of time, in the subsequent sections.

\section{Mean Log Cell Area}

Recall that in section \ref{log-normal}, we demonstrate that cell area is log-normally distributed. Thus, following similar reasonings as in section (\ref{cell-count}), to model the mean log cell area as a function of time, i.e. $y_r(x) = \mathbb{E}(\log(\text{cell area})) $, we arrive at the following hypothesis,

\begin{hypothesis} \label{initial-radial-speed}
Let radial growth time be the time where we observe maximum rate of change in cell area. Then, the cell area of the incipient cell is approximately equals to the mean cell area at the radial growth time, and the growth speed of the incipient cell is approximately equals to the magnitude of radial growth speed at the radial growth time.
\end{hypothesis}

Thus, we arrive at the following theorem,

\begin{Theorem}[Radial Growth Model]\label{cell-area}
Assuming hypothesis \ref{initial-radial-speed}, mean log cell area growth in a confined space can be modelled by the following equation,
\begin{align}
 y_r(x) = \beta_r [\alpha_r + 1  - \boldsymbol{\sigma}(\theta_r (x - \gamma_r)) + \epsilon_r\log (\boldsymbol{\sigma}(\rho_r \theta_r x))] , \label{radial-eqn}
\end{align}
where $y_r(x)$ is the mean natural log cell area, $x$ is the time, $\boldsymbol{\sigma}(\cdot)$ is the sigmoid function (definition \ref{sigmoid-diff}),
\begin{align*}
\theta_r & = 4 \frac{\omega_r}{\beta_r},\\
\rho_r &= \frac{\delta_r}{2\epsilon_r} ,
\end{align*}
$\beta_r$ is the radial growth capacity, $\omega_r$ is radial growth rate, $\gamma_r$ is the radial growth time, 
\begin{align*}
\epsilon_r =\frac{1 - 2\boldsymbol{\sigma}(-\theta_r \gamma_r)} {2\log(2) }
\end{align*}
is the incipient radial growth capacity coefficient,
\begin{align*}
\delta_r = 1 + 4\boldsymbol{\sigma}(-\theta_r \gamma_r) (1 - \boldsymbol{\sigma}(-\theta_r \gamma_r)) 
\end{align*}
is the incipient radial growth rate coefficient, 
\begin{align*}
\alpha_r  =\mathbb{E}\left[ \frac{1}{\beta_r } y_r(x) + \boldsymbol{\sigma}(\theta_r (x - \gamma_r)) - \epsilon_r\log (\boldsymbol{\sigma}(\rho_r \theta_r x)) \mid (\beta_r, \gamma_r, \omega_r) \right] - 1
\end{align*}
is the minimum radial coefficient, $\mathbb{E}(\cdot)$ is the expectation operator, and where $\beta_r$, $\omega_r$ and $\gamma_r$ are the only independent variables of the model, and $x$ is the only independent variable of the dataset.
\end{Theorem}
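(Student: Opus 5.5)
The plan is to repeat the derivation that led to theorem \ref{full-cell-growth-model}, with the growth sigmoid reflected ($1-\boldsymbol{\sigma}$ in place of $\boldsymbol{\sigma}$). Table \ref{tab0} and the discussion after it show that mean log cell area \emph{decreases} as the colony becomes dense. The seeding shock is still modelled by the $\epsilon_r\log(\boldsymbol{\sigma}(c x))$ term. So I would posit the ansatz pair
\begin{align*}
y_r(x) &= y_r(0) + \beta_r\left[1-\boldsymbol{\sigma}(\theta_r(x-\gamma_r)) + \epsilon_r\log(\boldsymbol{\sigma}(c x)) + \kappa\right],\\
\frac{dy_r(x)}{dx} &= \omega_r\left[-\frac{4}{\theta_r}\boldsymbol{\sigma}^{(1)}(\theta_r(x-\gamma_r)) + 2\delta_r(1-\boldsymbol{\sigma}(c x))\right],
\end{align*}
where $\kappa$ is a constant chosen to absorb $y_r(0)$. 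I would then fix $\theta_r,\epsilon_r,\delta_r,c$ in terms of $(\beta_r,\gamma_r,\omega_r)$ using hypothesis \ref{initial-radial-speed}.

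First I would differentiate the first ansatz using definition \ref{sigmoid-diff}. This gives $-\beta_r\theta_r\boldsymbol{\sigma}(1-\boldsymbol{\sigma}) + \beta_r\epsilon_r c\,(1-\boldsymbol{\sigma}(cx))$. Matching coefficients with the speed ansatz gives $\beta_r\theta_r = 4\omega_r$, so $\theta_r = 4\omega_r/\beta_r$. It also gives $\beta_r\epsilon_r c = 2\omega_r\delta_r$, so $c = \theta_r\delta_r/(2\epsilon_r) = \rho_r\theta_r$.

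Next comes the area half of hypothesis \ref{initial-radial-speed}, writing $s=\boldsymbol{\sigma}(-\theta_r\gamma_r)$. At $x=0$, $\log\boldsymbol{\sigma}(0) = -\log 2$, so the bracket equals $1-s-\epsilon_r\log 2$. At the radial growth time $x=\gamma_r$ the bracket equals $\tfrac12 + \epsilon_r\log\boldsymbol{\sigma}(\rho_r\theta_r\gamma_r)$. Using the same approximation $\boldsymbol{\sigma}(\rho_r\theta_r\gamma_r)\approx 1$ as in section \ref{cell-count}, this is $\approx\tfrac12$. Equating the two values yields $\epsilon_r = (1-2s)/(2\log 2)$.

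Then comes the speed half. At $x=0$ the speed ansatz gives $\omega_r[-4s(1-s)+\delta_r]$. At $x=\gamma_r$ it gives $\omega_r[-1 + 2\delta_r(1-\boldsymbol{\sigma}(\rho_r\theta_r\gamma_r))]\approx -\omega_r$. The hypothesis speaks of magnitudes, and the incipient cell grows while the mature colony shrinks. So I would set the initial speed equal to $+\omega_r$, which gives $\delta_r = 1+4s(1-s)$.

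Finally, with $\beta_r,\gamma_r,\omega_r$ regarded as the only independent variables, all remaining offsets ($y_r(0)$ and the $\kappa$ term) are collected into one constant $\beta_r\alpha_r$ with the $+1$ kept explicit. Solving equation (\ref{radial-eqn}) for $\alpha_r$ and averaging the residual over the data, conditioned on the fitted $(\beta_r,\gamma_r,\omega_r)$, gives the stated expectation formula. The same justification as for $\alpha_n$ after theorem \ref{full-cell-growth-model} applies.

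The main obstacle I expect is the sign bookkeeping in the speed-matching step. One has to justify that the initial speed is matched to $+\omega_r$ and not $-\omega_r$, since the latter would give $\delta_r = 4s(1-s)-1<0$ and a meaningless $\rho_r$. I would settle this by the physical argument: the incipient cell spreads after seeding, while the mean area contracts at $\gamma_r$. The approximation $\boldsymbol{\sigma}(\rho_r\theta_r\gamma_r)\approx1$ must also be invoked consistently in both halves.
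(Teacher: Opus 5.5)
Your proposal is correct and follows essentially the paper's route. The paper only says the theorem follows ``by similar reasonings'' to Section~\ref{cell-count} under hypothesis~\ref{initial-radial-speed}, and your steps spell that out and reproduce exactly the stated $\theta_r$, $\rho_r$, $\epsilon_r$, $\delta_r$ and the residual-mean definition of $\alpha_r$: the reflected-sigmoid ansatz with coefficient matching, the area half of the hypothesis (using $\boldsymbol{\sigma}(\rho_r\theta_r\gamma_r)\approx 1$), the speed half with the $+\omega_r$ sign choice, and collecting the offsets into $\beta_r\alpha_r$.
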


Interpretation of theorem \ref{cell-area} is as follows. $\beta_r$ defines the capacity for a sparse cell to be large. $\omega_r$ defines maximum rate of change of cell radius, were we observe a reduction in area as a sparse cell become a part of a colony. $\gamma_r$ defines the time where we observe the largest rate of reduction in cell radius. This may indicate the time where most cells start to form dense colonies. $\epsilon_r$ defines capacity of the incipient cell to grow in area, instead of multiplying. $\delta_n$ defines the rate of area growth of the incipient cell. $\alpha_r$ defines the $\log$ of the area of an average cell in a colony, normalised by $\beta_r$. Now, taking first and second order derivatives of the equation \ref{cell-area}, we arrive at the following corollary,

\begin{corollary}
Theorem \ref{cell-area} implies that the radial growth speed and radial growth acceleration can respectively expressed as follows,
\begin{align*}
\frac{d y_r(x)}{dx}  &= \omega_r \left[ - \frac{4}{\theta_r} \boldsymbol{\sigma}^{(1)}(\theta_r (x - \gamma_r) )+ 2\delta_r (1 - \boldsymbol{\sigma}(\rho_r \theta_r x))\right]~\text{and}  \\
\frac{d^2 y_r(x)}{dx^2} &= - \omega_r \theta_r\left[ \frac{4}{(\theta_r)^2} \boldsymbol{\sigma}^{(2)}(\theta_r (x - \gamma_r)) + 2\frac{\delta_r}{\theta_r} \boldsymbol{\sigma}^{(1)}(\rho_r \theta_r x)  \right],
\end{align*}
where $\boldsymbol{\sigma}^{(1)}(\cdot)$ and  $\boldsymbol{\sigma}^{(2)}(\cdot)$ are the first and second order derivatives of the sigmoid function (definition \ref{sigmoid-diff}).
\end{corollary}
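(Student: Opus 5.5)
The plan is a direct term-by-term differentiation of equation (\ref{radial-eqn}) with respect to $x$, reading the notation $\boldsymbol{\sigma}^{(k)}(\cdot)$ exactly as in definition \ref{sigmoid-diff}. That is, $\boldsymbol{\sigma}^{(k)}(\theta(x-\gamma))$ denotes the $k$-th derivative with respect to $x$ of $\boldsymbol{\sigma}(\theta(x-\gamma))$, so that $\frac{1}{\theta}\boldsymbol{\sigma}^{(1)}(\theta(x-\gamma)) = \boldsymbol{\sigma}(1-\boldsymbol{\sigma})$, and likewise for the argument $\rho_r\theta_r x$ with scale $\rho_r\theta_r$. First I will note that $\alpha_r$, $\beta_r$, $\epsilon_r$, $\delta_r$, $\theta_r$, $\rho_r$ are all fixed once $(\beta_r,\gamma_r,\omega_r)$ are fixed; $\alpha_r$ is a conditional expectation given these parameters. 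So all of them are constants in $x$, and the terms $\beta_r(\alpha_r+1)$ vanish on differentiation.

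For the first derivative I will treat the two $x$-dependent terms separately. The term $-\beta_r\boldsymbol{\sigma}(\theta_r(x-\gamma_r))$ differentiates to $-\beta_r\boldsymbol{\sigma}^{(1)}(\theta_r(x-\gamma_r))$. Substituting $\beta_r = 4\omega_r/\theta_r$ from theorem \ref{cell-area} gives $-\omega_r\frac{4}{\theta_r}\boldsymbol{\sigma}^{(1)}(\theta_r(x-\gamma_r))$. For the logarithmic term I will use $\frac{d}{dx}\log\boldsymbol{\sigma}(\rho_r\theta_r x) = \rho_r\theta_r(1-\boldsymbol{\sigma}(\rho_r\theta_r x))$, which follows from the first identity of definition \ref{sigmoid-diff}. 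The coefficient $\beta_r\epsilon_r\rho_r\theta_r$ then collapses via $\rho_r = \delta_r/(2\epsilon_r)$ and $\beta_r\theta_r = 4\omega_r$ to $2\omega_r\delta_r$. Factoring out $\omega_r$ yields the stated speed formula.

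For the second derivative I will differentiate the speed formula once more. The first bracketed term gives $-\omega_r\frac{4}{\theta_r}\boldsymbol{\sigma}^{(2)}(\theta_r(x-\gamma_r))$, which I will rewrite as $-\omega_r\theta_r\frac{4}{\theta_r^2}\boldsymbol{\sigma}^{(2)}(\cdot)$. The second term gives $-2\omega_r\delta_r\boldsymbol{\sigma}^{(1)}(\rho_r\theta_r x)$, which equals $-\omega_r\theta_r\cdot 2\frac{\delta_r}{\theta_r}\boldsymbol{\sigma}^{(1)}(\rho_r\theta_r x)$. Pulling out the common factor $-\omega_r\theta_r$ gives the stated acceleration.

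The only real obstacle is bookkeeping of the chain-rule scale factors. One must be consistent that $\boldsymbol{\sigma}^{(1)}(\rho_r\theta_r x)$ already contains the factor $\rho_r\theta_r$ in the convention of definition \ref{sigmoid-diff}; otherwise a spurious $\rho_r\theta_r$ appears in the second term. I will fix this convention explicitly at the outset. As a sanity check, I will compare with the sign-reversed analogue in corollary \ref{full-cell-growth-speed-model}, where the same computation yields equations (\ref{cell-growth-speed}) and (\ref{cell-growth-acceleration}).
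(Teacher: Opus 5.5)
Your proposal is correct and follows the same route as the paper, which gets the corollary simply by differentiating equation (\ref{radial-eqn}) twice. Your explicit chain-rule convention from definition \ref{sigmoid-diff}, together with the substitutions $\beta_r\theta_r = 4\omega_r$ and $\epsilon_r\rho_r = \delta_r/2$, supplies exactly the bookkeeping the paper leaves implicit.
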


\subsection{Numerical Modelling}

Given a dataset $(x, y_r)$, to find a $(\beta_r, \gamma_r, \omega_r, \alpha_r)$-set, we precent the following algorithm.\\

\emph{Step 1:} Normalise the dataset. First normalise the time data points as $x_{\text{norm}} =\frac{x}{ x_{\text{max}} }$, where $x_{\text{max}} =\max(x)$. Then, using \emph{NumPy} \emph{polyfit} function with \emph{deg}$=1$ \cite{numpy.polyfit}, fit a line of best fit to the mean log cell area data points as $(y_r)_\text{$-$ve} = a + b x_{\text{norm}}$, where $(y_r)_\text{$-$ve} = -y_r$. Now, using the coefficients $a$ and $b$, normalise the log area data points as $(y_r)_{\text{norm}} =\frac{(y_r)_\text{$-$ve} - (y_r)_{\text{min}}}{ (y_r)_{\text{max}} }$, where $(y_r)_{\text{min}} = a $, $(y_c)_{\text{max}} = b +  \text{RMSE}$.\\

\emph{Step 2:} Find lower and upper bounds for the parameters. Using  \emph{SciPy} \emph{curve\_fit} function with \emph{maxfev}$=10,000$ \cite{scipy.curvefit}, fit the normalised data to the following equation,
\begin{align*}
(y_r)_{\text{norm}} = b_r \boldsymbol{\sigma}(4 d_r (x_{\text{norm}} - c_r)),
\end{align*}
where $b_r$, $c_r$ and $d_r$ are bounded below by $0$ and above by $2$.\\

\emph{Step 3:} Find normalised radial growth parameters. Using the bounds $0<\beta_r^0 <b_r$, $c_r< \gamma_r^0 < 2$ and $0 < \omega^0_r< b_r d_r$, and using \emph{SciPy} \emph{curve\_fit} function with \emph{maxfev}$=10,000$ \cite{scipy.curvefit}, fit the data to the following equation,
\begin{align*}
(y_r)_{\text{norm}} =   \beta^0_r \left[ \boldsymbol{\sigma}\left(\theta^0_r(x_{\text{norm}} - \gamma^0_r)\right) - \boldsymbol{\sigma}\left(-\frac12 \theta^0_r \gamma^0_r\right)  - \epsilon^0_r\log \left(\frac{\boldsymbol{\sigma}\left(\rho^0_r \theta^0_r x_{\text{norm}}\right)}{\boldsymbol{\sigma}\left(\frac12 \rho^0_r \theta^0_r \gamma^0_r \right)} \right) \right],
\end{align*}
where  $\epsilon^0_r$, $\theta^0_r$ and $\rho^0_r$  dependent variables of $\beta_r^0$, $ \gamma_r^0$ and $\omega^0_r$, which are defined in theorem \ref{cell-area}. With $\beta_r^0$, $ \gamma_r^0$ and $\omega^0_r$, find $\alpha_r^0$  as follows,
\begin{align*}
\alpha^0_r  =\mathbb{E}\left[ (y_r)_{\text{norm}} -  \beta_r^0\left[\boldsymbol{\sigma}\big(\theta_r^0 \big(x_{\text{norm}} - \gamma_r^0\big)\big) - \epsilon^0_r\log \big(\boldsymbol{\sigma}\big(\rho^0_r \theta^0_r x\big)\big) \right] \right] +  \frac{(y_r)_{\text{min}}}{(y_r)_{\text{max}}} + 1.
\end{align*}

\emph{Step 4:} Unnormalise the radial growth parameters as $\alpha_r  = -\frac{1}{\beta_r^0}  \alpha^0_r$, $\beta_r  = (y_r)_{\text{max}} \beta_r^0 $,  $\gamma_r  = x_{\text{max}} \gamma_r^0 $ and  $\omega_r  = \frac{(y_r)_{\text{max}} }{x_{\text{max}}} \omega_r^0 $.\\

For a sample dataset, along with a working algorithm, please see the link in the footnote\footnote{Radial growth model  \emph{Colab} notebook with a sample dataset, a working algorithm and plots:  
\url{https://drive.google.com/file/d/1kKGsSgs8yjY9GWsdEZ8TCCcfDhVkKduP/view?usp=drive_link}} for a \emph{Colab} notebook.

\subsection{Experimental Results}

In this section, we fit the our radial growth model  (theorem \ref{cell-area}) to \emph{CHO2023} dataset.\\

\begin{figure}[h!]
     \centering
     \begin{subfigure}[b]{0.3\textwidth}
         \centering
         \includegraphics[height=\textwidth]{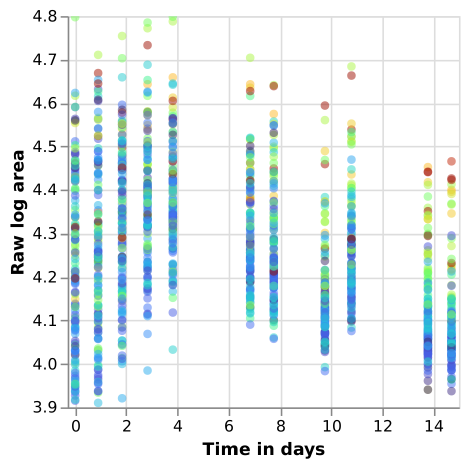}
         \caption{Raw mean log area.}
         \label{r-capacity-raw}
     \end{subfigure}
     \hfill
     \begin{subfigure}[b]{0.3\textwidth}
         \centering
         \includegraphics[height=\textwidth]{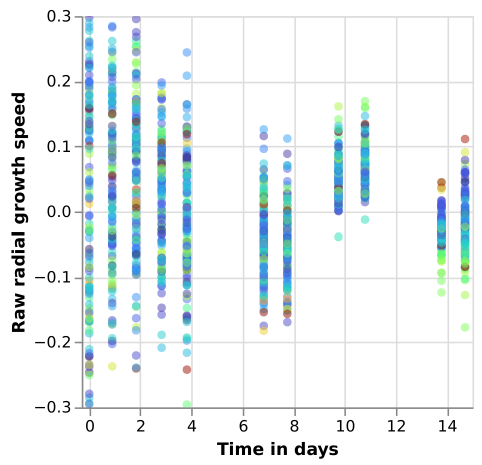}
         \caption{Raw speed.}
         \label{r-speed-raw}
     \end{subfigure}
     \hfill
     \begin{subfigure}[b]{0.3\textwidth}
         \centering
         \includegraphics[height=\textwidth]{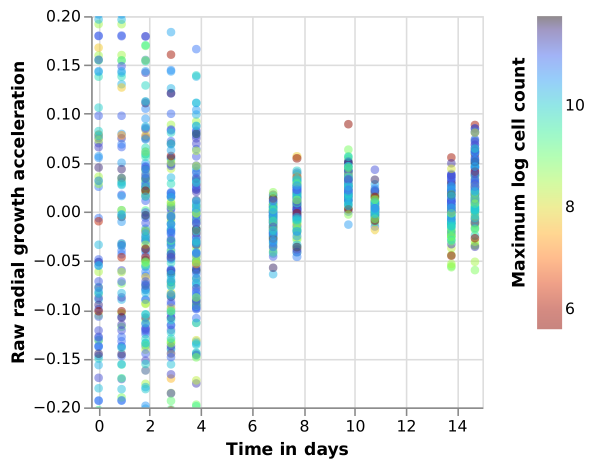}
         \caption{Raw acceleration.}
     \end{subfigure}
	\hfill
     \begin{subfigure}[b]{0.09\textwidth}
     \end{subfigure}
     \hfill
     \begin{subfigure}[b]{0.3\textwidth}
         \centering
         \includegraphics[height=\textwidth]{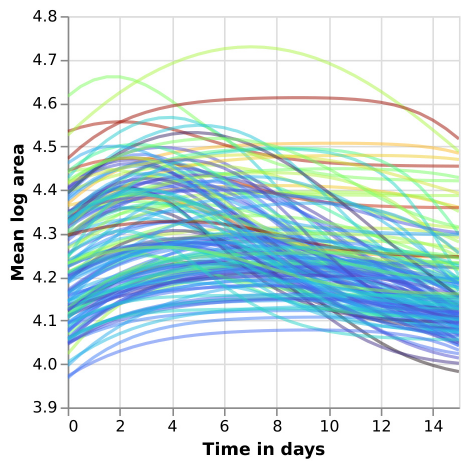}
         \caption{Mean log area.}
         \label{r-capacity}
     \end{subfigure}
     \hfill
     \begin{subfigure}[b]{0.3\textwidth}
         \centering
         \includegraphics[height=\textwidth]{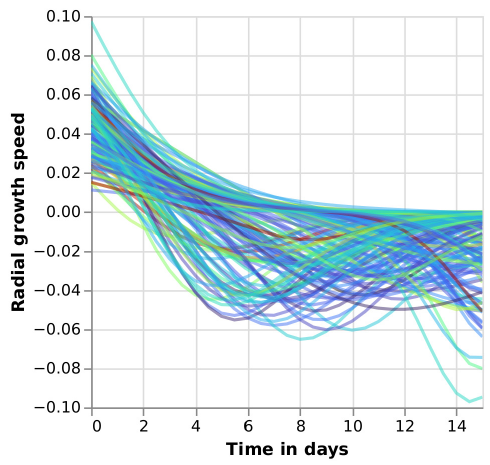}
         \caption{Speed.}
         \label{r-speed}
     \end{subfigure}
     \hfill
     \begin{subfigure}[b]{0.3\textwidth}
         \centering
         \includegraphics[height=\textwidth]{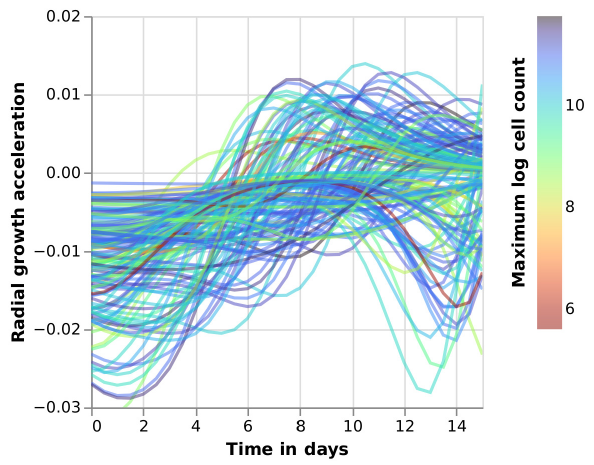}
         \caption{Acceleration.}
     \end{subfigure}
	\hfill
     \begin{subfigure}[b]{0.09\textwidth}
     \end{subfigure}
        \caption{Radial growth patterns: raw vs modelled.}
        \label{experiment-r}
\end{figure}

Figure \ref{experiment-r} shows mean log cell area in pixels, radial growth speed and radial growth acceleration, along with what is implied by our model. Sub-figures \ref{r-capacity-raw} and \ref{r-capacity} show an increase in area per cell in the incipient population (i.e. cells expanding, instead of multiplying), and then a decrease in the area per cell as the population matures (i.e. as the population starts to multiply rapidly and form colonies, area per cell decreases). Sub-figures \ref{r-speed-raw} and \ref{r-speed} radial growth speed declining, reaching its minimum radial growth rate (i.e. $-\omega_r$) at the radial growth time (i.e. $\gamma_r$), and finally approaching zero as time increases.\\

\begin{figure}[h!]
     \centering
     \begin{subfigure}[b]{0.30\textwidth}
         \centering
         \includegraphics[height=\textwidth]{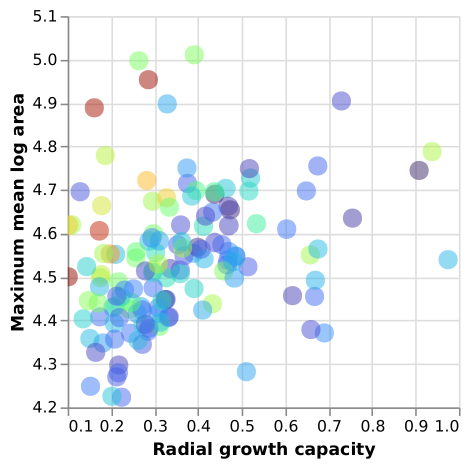}
         \caption{$\beta_r$}
         \label{beta-r}
     \end{subfigure}
     \hfill
     \begin{subfigure}[b]{0.30\textwidth}
         \centering
         \includegraphics[height=\textwidth]{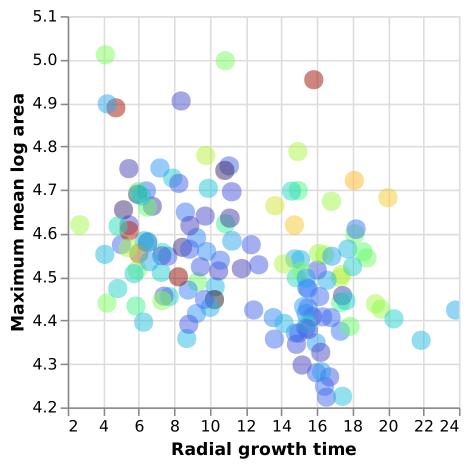}
         \caption{$\gamma_r$}
         \label{gamma-r}
     \end{subfigure}
     \hfill
     \begin{subfigure}[b]{0.30\textwidth}
         \centering
         \includegraphics[height=\textwidth]{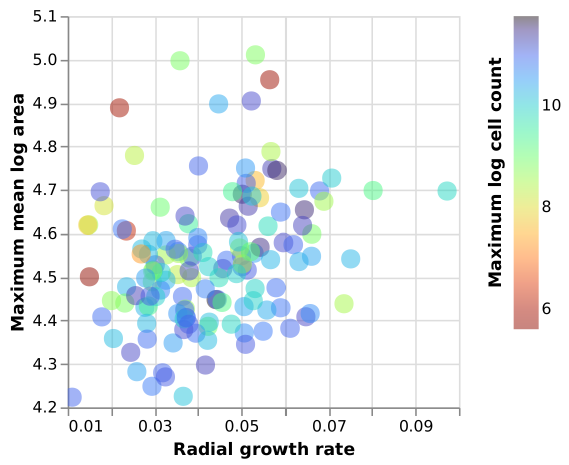}
         \caption{$\omega_r$}
         \label{omega-r}
     \end{subfigure}
	\hfill
     \begin{subfigure}[b]{0.09\textwidth}
     \end{subfigure}
        \caption{Radial growth variables.}
        \label{r-var}
\end{figure}

Figure \ref{r-var} shows the radial growth variables with respect to maximum of the mean log cell area, where cell area is in pixels. Figures \ref{gamma-r} and \ref{omega-r} imply no obvious relationship between   radial growth time ($\rho = -0.031$) and radial growth rate ($\rho = 0.175$), and the final log  cell count. However, figure \ref{beta-r} implies radial growth capacity is positively correlated with final log cell count ($\rho = 0.302$), implying that sparse cells that are large has the capacity to reach a large population.\\

\begin{figure}[h!]
     \centering
     \begin{subfigure}[b]{0.30\textwidth}
         \centering
         \includegraphics[height=\textwidth]{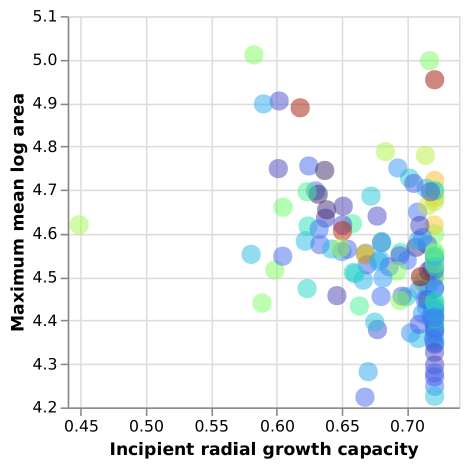}
         \caption{$\epsilon_r$}
         \label{epsilon-r}
     \end{subfigure}
     \hfill
     \begin{subfigure}[b]{0.30\textwidth}
         \centering
         \includegraphics[height=\textwidth]{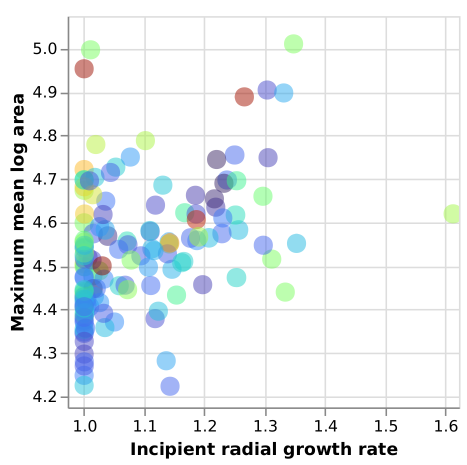}
         \caption{$\delta_r$}
         \label{delta-r}
     \end{subfigure}
     \hfill
     \begin{subfigure}[b]{0.30\textwidth}
         \centering
         \includegraphics[height=\textwidth]{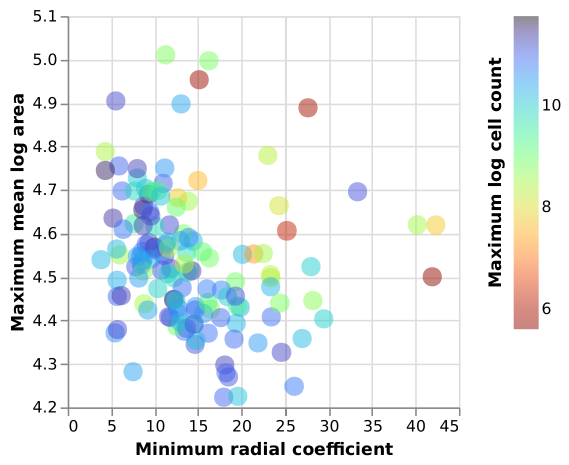}
         \caption{$\alpha_r$}
         \label{alpha-r}
     \end{subfigure}
	\hfill
     \begin{subfigure}[b]{0.09\textwidth}
     \end{subfigure}
        \caption{Incipient radial growth coefficients.}
        \label{r-var-2}
\end{figure}

Figure \ref{r-var-2} shows the incipient radial coefficients with respect to maximum of the log cell count. Figures \ref{epsilon-r} and \ref{delta-r} imply no obvious relationship between  incipient radial growth capacity coefficient ($\rho = 0.022$) and incipient radial growth rate coefficient ($\rho = -0.013$), and final log  cell count. However, figure \ref{beta-r} implies radial growth capacity is negatively correlated with final log cell count ($\rho = - 0.438$), implying that cells that can from very compact colonies has the capacity to reach a large population.

\section{Standard Deviation of the Log Cell Area \label{sec-d}}

Following similar reasonings as in section (\ref{cell-count}), to model the standard deviation of the log cell area  as a function of time, $y_d(x) = \mathbb{S}\text{td}(\log(\text{cell area})) $ where $\mathbb{S}\text{td}(\cdot)$ is the \emph{standard deviation operator}, we arrive at the following theorem,

\begin{Theorem}[Deviation Growth Model]\label{cell-d-thrm}
Standard deviation of the log cell area growth in a confined space can be modelled by the following equation,
\begin{align}
y_d(x) = \beta_d \left[\alpha_d + 1-  \frac{4}{\theta_d}\boldsymbol{\sigma}^{(1)}(\theta_d (x - \gamma_d)) \right] , \label{cell-d}
\end{align}
where $y_d(x)$ is the standard deviation of the log cell area, $x$ is the time,  $\boldsymbol{\sigma}^{(1)}(\cdot)$ is the first order derivative of the  sigmoid function (definition \ref{sigmoid-diff})
\begin{align*}
\theta_d = \frac{3}{2} \sqrt{3} \frac{\omega_d}{\beta_d},
\end{align*}
$\beta_s$ is the deviation growth capacity, $\omega_d$ is deviation growth rate, $\gamma_d$ is the deviation growth time, 
\begin{align*}
\alpha_d  =\mathbb{E}\left[ \frac{1}{\beta_d} y_d(x) + \frac{4}{\theta_d}\boldsymbol{\sigma}^{(1)}(\theta_d (x - \gamma_d)) \mid (\beta_d, \gamma_d, \omega_d) \right] -1
\end{align*}
is the minimum deviation coefficient, $\mathbb{E}(\cdot)$ is the expectation operator, and where $\beta_d$, $\omega_d$ and $\gamma_d$ are the only independent variables of the model, and $x$ is the only independent variable of the dataset.
\end{Theorem}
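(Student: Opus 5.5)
We treat Theorem \ref{cell-d-thrm} the same way as Theorems \ref{full-cell-growth-model} and \ref{cell-confluence}: it is a modelling statement, so the argument is a derivation of the functional form from the observed shape of the data plus consistency conditions, not a deduction from axioms. We first fix the qualitative behaviour of $y_d(x)$. For the incipient population the log-area spread is near a baseline. It then rises as sparse cells of varied size coexist with newly formed colonies, peaking around the transition time. Finally it relaxes back as the dense log-normal colony dominates (Theorem \ref{thrm-ks}). A single bump over a baseline is naturally modelled by the normalised first derivative of the shifted sigmoid, $\frac{4}{\theta_d}\boldsymbol{\sigma}^{(1)}(\theta_d(x-\gamma_d))$. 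By Definition \ref{sigmoid-diff} this equals $4\boldsymbol{\sigma}(1-\boldsymbol{\sigma})$, so it has maximum value $1$ at $x=\gamma_d$ and decays to $0$ at both ends. We take $y_d = \beta_d[\alpha_d + 1 \mp \frac{4}{\theta_d}\boldsymbol{\sigma}^{(1)}]$, with the sign fixed by the data. The sign in the statement is a dip: $y_d$ attains its extreme $\beta_d\alpha_d$ at $\gamma_d$ and tends to $\beta_d(\alpha_d+1)$ far from it. This identifies $\beta_d$ as the capacity (the amplitude of the excursion) and $\gamma_d$ as the deviation growth time.

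Next we pin down $\theta_d$ in terms of the rate $\omega_d$, defined as the maximal magnitude of $dy_d/dx$. Differentiating gives
\begin{align*}
\frac{dy_d}{dx} = -4\beta_d\,\frac{1}{\theta_d}\boldsymbol{\sigma}^{(2)}(\theta_d(x-\gamma_d)) = -4\beta_d\theta_d\,s(1-s)(1-2s),
\end{align*}
where $s=\boldsymbol{\sigma}(\theta_d(x-\gamma_d))$ and we used Definition \ref{sigmoid-diff}. We maximise $g(s)=s(1-s)(1-2s)$ on $(0,1)$. Setting $g'(s)=1-6s+6s^2=0$, which is exactly the bracket in $\boldsymbol{\sigma}^{(3)}$, gives $s=z_{0},z_{1}=\frac12(1\mp\frac{\sqrt3}{3})$, the same points as in Corollary \ref{full-cell-growth-speed-model}. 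Substituting, $s(1-s)=\frac16$ and $1-2s=\pm\frac{\sqrt3}{3}$, so $|g|_{\max}=\frac{\sqrt3}{18}$. Hence
\begin{align*}
\omega_d = 4\beta_d\theta_d\frac{\sqrt3}{18} = \frac{2\sqrt3}{9}\beta_d\theta_d,
\end{align*}
and therefore $\theta_d=\frac{9}{2\sqrt3}\frac{\omega_d}{\beta_d}=\frac32\sqrt3\,\frac{\omega_d}{\beta_d}$, as claimed. This is the only genuinely computational step, and I would check it carefully.

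Finally, $\alpha_d$ is defined exactly as in the earlier theorems. Once $(\beta_d,\gamma_d,\omega_d)$ are fixed, $\alpha_d$ absorbs the residual offset: rearranging \eqref{cell-d} as $\alpha_d=\frac{1}{\beta_d}y_d+\frac{4}{\theta_d}\boldsymbol{\sigma}^{(1)}-1$ and averaging over the data, conditional on the fitted parameters, gives the stated expectation. This leaves $\beta_d,\gamma_d,\omega_d$ as the only independent model variables.

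The main obstacle is justification rather than algebra. The bump shape and its sign (and the absence of an exponential-sigmoid incipient term, unlike Theorem \ref{cell-area}) rest on the empirical behaviour of the standard deviation and on Theorem \ref{thrm-ks}. I would support this by appeal to the data plots in the same manner as Hypothesis \ref{initial-growth-speed} was used, rather than by a rigorous argument.
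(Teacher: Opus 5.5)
Your proposal is correct and follows essentially the paper's route: the paper only says it uses ``similar reasonings'' to the population model, where $\omega$ is the maximal slope magnitude, and your computation $\max_{s}|s(1-s)(1-2s)|=\frac{\sqrt3}{18}$, giving $\theta_d=\frac32\sqrt3\,\omega_d/\beta_d$, is exactly what gives $\frac{6\sqrt3}{(\theta_d)^2}\boldsymbol{\sigma}^{(2)}(\theta_d(x-\gamma_d))$ in the paper's corollary unit amplitude. One correction to your motivation: the paper observes the spread \emph{falling} to its minimum $\beta_d\alpha_d$ at $\gamma_d$ (uniform sizes during rapid multiplication) and rising again in dense colonies, which is what the minus sign encodes, whereas your ``rise, peak, relax'' narrative would argue for the opposite sign.
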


Interpretation of theorem \ref{cell-d-thrm} is as follows. $\beta_d$ defines the capacity for a cell to deviate from its average cell area. A larger capacity results in more varied cell areas, and a smaller capacity results in more uniform cell areas. $\omega_d$ defines maximum rate of change, were we observe a minimum in the standard deviation of the cell area. $\gamma_d$ defines the time where we observe the minimum standard deviation, i.e. time where the cells are more uniform. $\alpha_d$ defines the minimum log area standard deviation, normalised by $\beta_d$. Now, taking first and second order derivatives of the equation \ref{cell-d} and observing the definition of the mean of a log-normal distribution, we arrive at the following corollary,

\begin{corollary}
Theorem \ref{cell-d-thrm} implies that the deviation growth speed and deviation growth acceleration can respectively expressed as follows,
\begin{align*}
\frac{d y_d(x)}{dx}  &= - \omega_d \left[ \frac{6\sqrt{3}}{(\theta_d)^2} \boldsymbol{\sigma}^{(2)}(\theta_d (x - \gamma_d)) \right]~\text{and}  \\
\frac{d^2 y_d(x)}{dx^2} &= - \omega_d \theta_d\left[ \frac{6\sqrt{3}}{(\theta_d)^3} \boldsymbol{\sigma}^{(3)}(\theta_d (x - \gamma_d))  \right],
\end{align*}
where $\boldsymbol{\sigma}^{(2)}(\cdot)$ and  $\boldsymbol{\sigma}^{(3)}(\cdot)$ are the  second the third order derivatives of the sigmoid function (definition \ref{sigmoid-diff}). Also, the confluence, the cell count, the mean cell area and the standard deviation of the cell area are related by the following equation,
\begin{align}
\log(y_c(x)) = \log(y_n(x)) + y_r(x) + \frac12 (y_d(x))^2. \label{inter}
\end{align}
\end{corollary}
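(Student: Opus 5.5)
The corollary has two parts, and I treat them separately: the derivative formulas are pure calculus, while relation (\ref{inter}) comes from a log-normal moment identity.

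\emph{Derivatives of (\ref{cell-d}).} Write $u=\theta_d(x-\gamma_d)$. Since $\alpha_d$ and $\beta_d$ do not depend on $x$ (the conditioning in the definition of $\alpha_d$ makes it a constant once $(\beta_d,\gamma_d,\omega_d)$ are fixed), the chain rule gives
\begin{align*}
\frac{d y_d}{dx} = -\beta_d\frac{4}{\theta_d}\,\theta_d\,\frac{1}{\theta_d}\boldsymbol{\sigma}^{(2)}(u)= -\frac{4\beta_d}{\theta_d}\boldsymbol{\sigma}^{(2)}(u),
\end{align*}
where $\boldsymbol{\sigma}^{(k)}(u)$ is read as in definition \ref{sigmoid-diff}, so each $x$-derivative contributes a factor $\theta_d$. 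Substituting $\beta_d = \frac{3\sqrt{3}}{2}\frac{\omega_d}{\theta_d}$ from theorem \ref{cell-d-thrm} turns the prefactor into $\frac{6\sqrt{3}\,\omega_d}{\theta_d^2}$. This gives the stated speed formula.

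Differentiating once more produces one more factor $\theta_d$ and replaces $\boldsymbol{\sigma}^{(2)}$ by $\boldsymbol{\sigma}^{(3)}$. That yields $-\omega_d\theta_d\frac{6\sqrt3}{\theta_d^3}\boldsymbol{\sigma}^{(3)}(u)$, as claimed.

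As a consistency check, I would verify that the constant $\frac32\sqrt3$ is exactly what makes $\omega_d$ the maximal slope. The extremum of $\boldsymbol{\sigma}(1-\boldsymbol{\sigma})(1-2\boldsymbol{\sigma})$ is attained at $\boldsymbol{\sigma}=z_{0,1}$ from corollary \ref{full-cell-growth-speed-model}, and its value there is $\frac{\sqrt3}{18}$. Hence $\max|dy_d/dx| = 6\sqrt3\,\omega_d\cdot\frac{\sqrt3}{18}=\omega_d$. This justifies the normalisation, though it is not logically needed.

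\emph{Relation (\ref{inter}).} Confluence is total covered area, so $y_c(x)=y_n(x)\cdot\mathbb{E}(\text{cell area})$. The mean area is the confluence per cell, as defined in section \ref{log-normal}. By theorem \ref{thrm-ks}, in the dense regime the cell area $A$ satisfies $\log A\sim\mathcal{N}(\mu,s^2)$ with $\mu=y_r(x)$ and $s=y_d(x)$. The log-normal mean formula, obtained from the normal moment generating function evaluated at $1$, gives $\mathbb{E}(A)=\exp(\mu+\tfrac12 s^2)$. Taking logarithms then gives $\log y_c=\log y_n+y_r+\frac12 y_d^2$.

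The main obstacle is this second part, because it is exact only under the log-normal assumption. Theorem \ref{thrm-ks} grants that assumption only for dense colonies, not for incipient or sparse populations. I would therefore state (\ref{inter}) as holding in the log-normal regime. I would also note that any overlap or occlusion of cells, such as the vertical colonies discussed for $cho\#11771$, breaks the identity $y_c=y_n\mathbb{E}(A)$.
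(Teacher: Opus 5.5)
Your proposal is correct and follows essentially the paper's route. The derivative formulas are the routine differentiation the paper only alludes to; your final expressions are right under the paper's convention that $\boldsymbol{\sigma}^{(k)}(\theta_d(x-\gamma_d))$ already carries the factor $\theta_d^k$, so that $\frac{d}{dx}\boldsymbol{\sigma}^{(k)}=\boldsymbol{\sigma}^{(k+1)}$. For (\ref{inter}), the paper likewise combines confluence $=$ cell count $\times$ mean cell area with the log-normal mean $\exp(\mu+\tfrac12 s^2)$. The only real difference is in scope. The paper goes on to substitute the fitted models (\ref{cell-growth-model}), (\ref{cell-confluence-eqn}), (\ref{radial-eqn}) and (\ref{cell-d}), and asserts via unspecified numerical analysis that the relation holds on all of $[0,\max(x)]$. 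You instead restrict the identity to the log-normal (dense-colony) regime granted by theorem \ref{thrm-ks}, which is the more careful statement.
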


\begin{proof} [Proof of equation (\ref{inter})]
By definition, we have the following relation,
\begin{align*}
\text{confluence} = \text{cell count} \times \text{mean cell area}.
\end{align*}
Now, taking the $\log$ of the above relation and noting the mean of a log-normal distribution \cite{crow2018lognormal}, we find the following equation,
\begin{align*}
\log(\text{confluence}) = \log(\text{cell count}) +  \mathbb{E}(\log(\text{cell area})) + \frac12 \mathbb{S}\text{td}(\log(\text{cell area}))^2  .
\end{align*}
Thus, equations (\ref{cell-growth-model}), (\ref{cell-confluence-eqn}), (\ref{radial-eqn}) and (\ref{cell-d}) imply the following equation,
\begin{align*}
c_0 + \beta_c\boldsymbol{\sigma}(\theta_c (x - \gamma_c)) = ~& \beta_n [\boldsymbol{\sigma}(\theta_n (x - \gamma_n)) + \epsilon_n\log (\boldsymbol{\sigma}(\rho_n \theta_n x))]  \\
& -  \beta_r [ \boldsymbol{\sigma}(\theta_r (x - \gamma_r)) - \epsilon_r\log (\boldsymbol{\sigma}(\rho_r \theta_r x))] \\
& +  \frac12 \left(\beta_d \left[\alpha_d + 1 -  \frac{4}{\theta_d}\boldsymbol{\sigma}^{(1)}(\theta_d (x - \gamma_d)) \right]\right) ^2 ,
\end{align*}
where $c_0$ is a positive constant. Rearranging the above equation, we find the following equation,
\begin{align*}
c_0 + c_1\boldsymbol{\sigma}(c_3 (x - c_5)) -  c_2\log (\boldsymbol{\sigma}(c_4 x)) \approx  \left( \alpha_d + 1-  \frac{4}{\theta_d}\boldsymbol{\sigma}^{(1)}(\theta_d (x - \gamma_d))  \right) ^2 ,
\end{align*}
where $c_0$ to $c_5$ are positive constants. Numerical analysis can show that the above equation holds true in the time interval $[0, \max(x)]$.
\end{proof}

        \begin{table}[h!]
            \footnotesize
                \begin{tabular}[t]{c||c c cc}
				&	$\beta_n$	&	$\beta_c$	&	$\beta_r$	&	$\beta_d$\\
			\hline
			\hline
			$\beta_n$\!\!\! &	1	&	$+$ve\!\!\!	&	$+$ve\!\!\!	&	$-$ve\!\!\! \\ 
			$\beta_c$\!\!\! &	$+$ve\!\!\!	&	1	&	$-$ve\!\!\!	&	$+$ve\!\!\! \\ 
			$\beta_r$\!\!\! &		$+$ve\!\!\!	&	$-$ve\!\!\!	&	1	&	$+$ve\!\!\! \\
			$\beta_d$\!\!\! &	$-$ve\!\!\!	&	$+$ve\!\!\!	&	$+$ve\!\!\!	&	1\\
                \end{tabular}
                \hfill
                \begin{tabular}[t]{c||c c cc}
				&	$\gamma_n$	&	$\gamma_c$	&	$\gamma_r$	&	$\gamma_d$\\
			\hline
			\hline
			$\gamma_n$\!\!\! &	1	&	$+$ve\!\!\!	&	$+$ve\!\!\!	&	$-$ve\!\!\!	\\ 
			$\gamma_c$\!\!\! &	$+$ve\!\!\!	&	1	&	$-$ve\!\!\!	&	$+$ve\!\!\!	\\ 
			$\gamma_r$\!\!\! &		$+$ve\!\!\!	&	$-$ve\!\!\!	&	1	&	$+$ve\!\!\!	\\
			$\gamma_d$\!\!\! &	$-$ve\!\!\!	&	$+$ve\!\!\!	&	$+$ve\!\!\!	&	1\\
                \end{tabular}
                \hfill
                \begin{tabular}[t]{c||c c cc}
				&	$\omega_n$	&	$\omega_c$	&	$\omega_r$	&	$\omega_d$\\
			\hline
			\hline
			$\omega_n$\!\!\! &	1	&	$+$ve\!\!\!	&	$+$ve\!\!\!	&	$-$ve\!\!\!	\\ 
			$\omega_c$\!\!\! &		$+$ve\!\!\!	&	1	&	$-$ve\!\!\!	&	$+$ve\!\!\!	\\ 
			$\omega_r$\!\!\! &		$+$ve\!\!\!	&	$-$ve\!\!\!	&	1	&	$+$ve\!\!\!	\\
			$\omega_d$\!\!\! &	$-$ve\!\!\!	&	$+$ve\!\!\!	&	$+$ve\!\!\!	&	1\\
                \end{tabular}
                \caption{Growth variables interdependencies, i.e. the sign of the partial derivatives with respect to growth variables, in the time interval $[\gamma_d, \max(x)]$ (this is not an exhaustive list).}
\label{tab-inter}
            \end{table}

Table \ref{tab-inter} shows some of the most notable interdependencies implied by equation (\ref{inter}). By interdependency, we mean if one variable were to change, then the rest of the variables must change in order to satisfy equation (\ref{inter}). For example, table \ref{tab-inter}  implies that if the population growth capacity were to increase, then the confluence growth capacity must increase to satisfy equation (\ref{inter}), given that all the other variables remain constant.

\subsection{Numerical Modelling}

Given a dataset $(x, y_d)$, to find a $(\beta_d, \gamma_d, \omega_d, \alpha_d)$-set, we precent the following algorithm.\\

\emph{Step 1:} Normalise the dataset. First normalise the time data points as $x_{\text{norm}} =\frac{x}{ x_{\text{max}} }$, where $x_{\text{max}} =\max(x)$. Then, normalise the standard deviation of the  log cell area  data points as $(y_d)_{\text{norm}} =\frac{(y_d)_\text{$-$ve} - (y_d)_{\text{min}}}{ (y_d)_{\text{max}} }$, where $(y_c)_{\text{min}} = \mathbb{E}((y_d)_\text{$-$ve}) -  \mathbb{S}\text{td}((y_d)_\text{$-$ve}) $, $(y_d)_{\text{max}} =  2 \mathbb{S}\text{td}((y_d)_\text{$-$ve})$ and  $(y_d)_\text{$-$ve}= - y_d$.\\

\emph{Step 2:} Find lower and upper bounds for the parameters. Using  \emph{SciPy} \emph{curve\_fit} function with \emph{maxfev}$=10,000$  \cite{scipy.curvefit}, fit the normalised data to the following equation,
\begin{align*}
(y_d)_{\text{norm}} = \frac{8}{9} \sqrt{3} \frac{b_d}{d_d} \boldsymbol{\sigma}^{(1)}\left(\frac{3}{2} \sqrt{3} d_d (x_{\text{norm}} - c_d)\right),
\end{align*}
where $b_d$, $c_d$ and $d_d$ are  bounded below by $0$ and above by $2$.\\

\emph{Step 3:} Find normalised deviation growth parameters. Using the bounds $0<\beta_d^0 <b_d$, $c_d< \gamma_d^0 < 2$ and $0 < \omega^0_d< b_d d_d$, and using \emph{SciPy} \emph{curve\_fit} function with \emph{maxfev}$=10,000$  \cite{scipy.curvefit}, fit the data to the following equation,
\begin{align*}
(y_d)_{\text{norm}} = 4 \frac{\beta^0_d}{\theta^0_d} \big[\boldsymbol{\sigma}^{(1)}\big(\theta^0_d(x_{\text{norm}} - \gamma^0_d)\big) - \boldsymbol{\sigma}^{(1)}\big(-\theta^0_d \gamma^0_d\big) \big],
\end{align*}
where $\theta^0_d = \frac{3}{2} \sqrt{3} \frac{\omega^0_d}{\beta^0_d}$. With $\beta_d^0$, $ \gamma_d^0$ and $\omega^0_d$, find $\alpha_d^0$  as follows,
\begin{align*}
\alpha^0_d  =\mathbb{E}\left[ (y_d)_{\text{norm}} - 4 \frac{\beta^0_d}{\theta^0_d} \boldsymbol{\sigma}^{(1)}\big(\theta_d^0 \big(x_{\text{norm}} - \gamma_d^0\big)\big)\right]  + \frac{(y_d)_{\text{min}}}{(y_d)_{\text{max}}} + 1 .
\end{align*}

\emph{Step 4:} Unnormalise the radial growth parameters as  $\alpha_d = -\frac{1}{\beta_d^0}  \alpha^0_d$, $\beta_d  = (y_d)_{\text{max}} \beta_d^0 $,  $\gamma_d  = x_{\text{max}} \gamma_d^0 $ and  $\omega_d  = \frac{(y_d)_{\text{max}} }{x_{\text{max}}} \omega_d^0 $.\\

For a sample dataset, along with a working algorithm, please see the link in the footnote\footnote{Deviation growth model  \emph{Colab} notebook with a sample dataset, a working algorithm and plots: 
\url{https://drive.google.com/file/d/1HdnuSsFTGBeDvaYjh_4riIx4CPfgmoa3/view?usp=drive_link}} for a \emph{Colab} notebook.

\subsection{Experimental Results}

In this section, we fit the our deviation growth model  (theorem \ref{cell-d-thrm}) to \emph{CHO2023} dataset.\\

\begin{figure}[h!]
     \centering
     \begin{subfigure}[b]{0.3\textwidth}
         \centering
         \includegraphics[height=\textwidth]{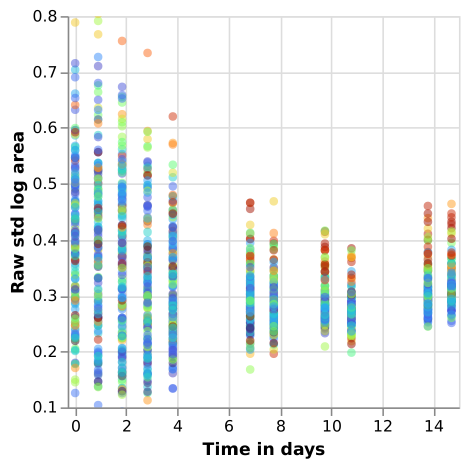}
         \caption{Raw log area Std.}
         \label{d-capacity-raw}
     \end{subfigure}
     \hfill
     \begin{subfigure}[b]{0.3\textwidth}
         \centering
         \includegraphics[height=\textwidth]{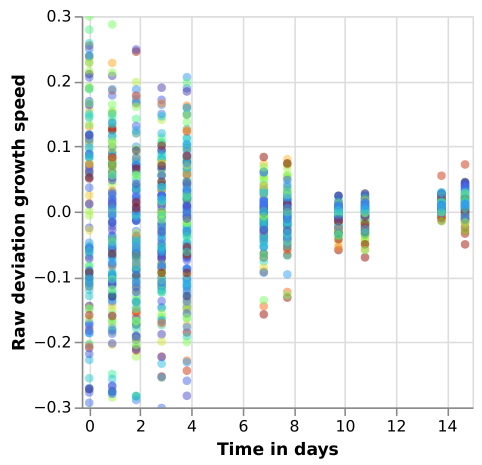}
         \caption{Raw speed.}
     \end{subfigure}
     \hfill
     \begin{subfigure}[b]{0.3\textwidth}
         \centering
         \includegraphics[height=\textwidth]{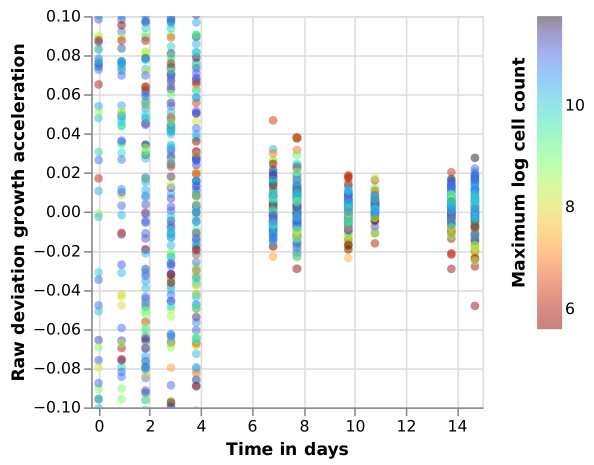}
         \caption{Raw acceleration.}
     \end{subfigure}
	\hfill
     \begin{subfigure}[b]{0.09\textwidth}
     \end{subfigure}
     \hfill
     \begin{subfigure}[b]{0.3\textwidth}
         \centering
         \includegraphics[height=\textwidth]{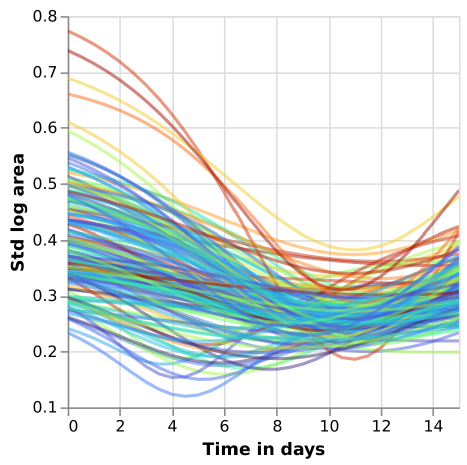}
         \caption{Log area Std.}
         \label{d-capacity}
     \end{subfigure}
     \hfill
     \begin{subfigure}[b]{0.3\textwidth}
         \centering
         \includegraphics[height=\textwidth]{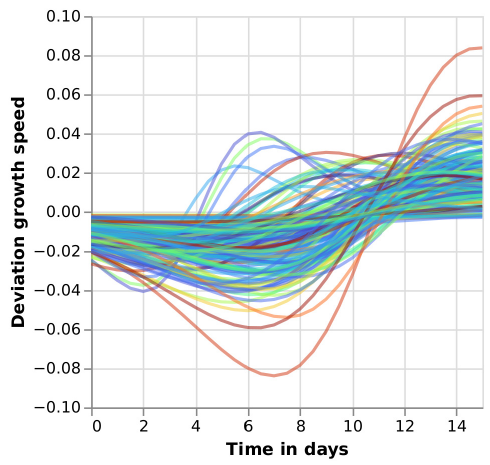}
         \caption{Speed.}
     \end{subfigure}
     \hfill
     \begin{subfigure}[b]{0.3\textwidth}
         \centering
         \includegraphics[height=\textwidth]{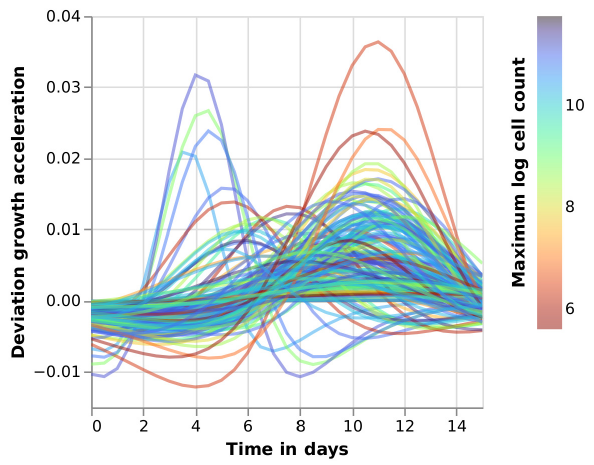}
         \caption{Acceleration.}
     \end{subfigure}
	\hfill
     \begin{subfigure}[b]{0.09\textwidth}
     \end{subfigure}
        \caption{Deviation growth patterns: raw vs modelled.}
        \label{experiment-d}
\end{figure}

Figure \ref{experiment-d} shows the standard deviation of the log cell area in pixels, radial growth speed and radial growth acceleration, along with what is implied by our model. Sub-figures \ref{d-capacity-raw} and \ref{d-capacity} show that the standard deviation of the log cell area declines, reaching a minimum at the deviation growth time (i.e. $\gamma_d$), and increasing as the population grows. This implies that initially, cell sizes can vary significantly, and as cells tends to multiply, cells sizes becomes more uniform, and finally as cells form dense colonies, cell sizes can again vary significantly.\\

\begin{figure}[!h]
     \centering
     \begin{subfigure}[b]{0.23\textwidth}
         \centering
         \includegraphics[height=\textwidth]{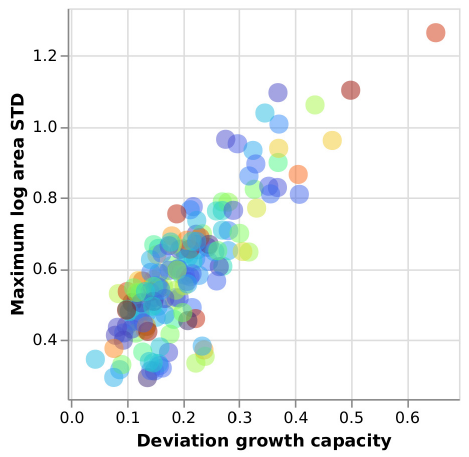}
         \caption{$\beta_d$}
         \label{beta-d}
     \end{subfigure}
     \hfill
     \begin{subfigure}[b]{0.23\textwidth}
         \centering
         \includegraphics[height=\textwidth]{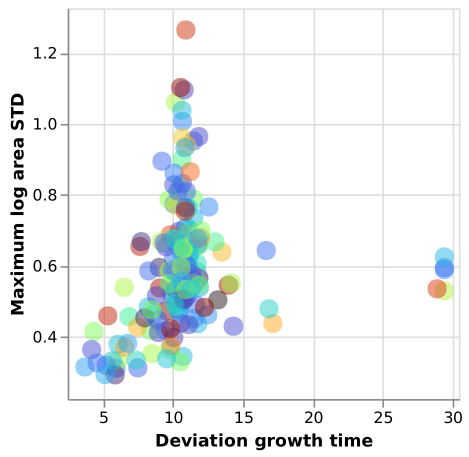}
         \caption{$\gamma_d$}
         \label{gamma-d}
     \end{subfigure}
     \hfill
     \begin{subfigure}[b]{0.23\textwidth}
         \centering
         \includegraphics[height=\textwidth]{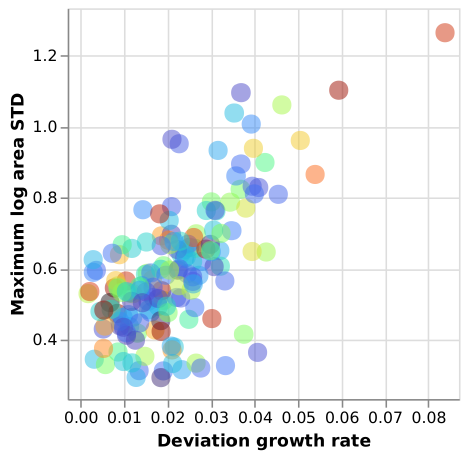}
         \caption{$\omega_d$}
         \label{omega-d}
     \end{subfigure}
	\hfill
     \begin{subfigure}[b]{0.23\textwidth}
         \centering
         \includegraphics[height=\textwidth]{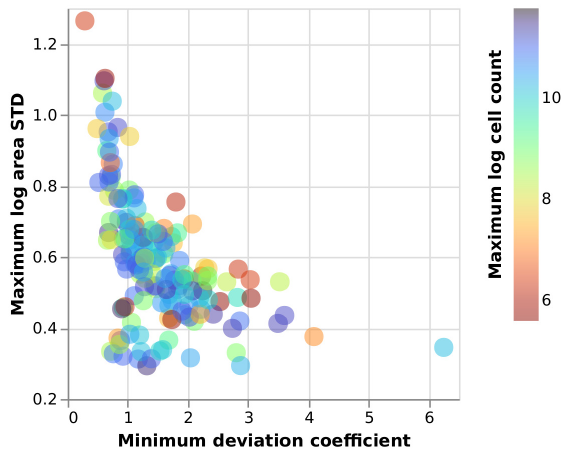}
         \caption{$\alpha_d$}
         \label{alpha-d}
     \end{subfigure}
	\hfill
     \begin{subfigure}[b]{0.07\textwidth}
     \end{subfigure}
        \caption{Deviation growth variables.}
        \label{d-var}
\end{figure}

Figure \ref{d-var} shows the radial growth variables with respect to maximum of the standard deviation of the log cell area in pixels. Figures \ref{beta-d}, \ref{omega-d}, \ref{gamma-d} and \ref{alpha-d} imply no obvious relationship between deviation growth capacity ($\rho=-0.095$), deviation growth rate ($\rho=-0.075$), deviation growth time ($\rho=-0.057$) and minimum deviation coefficient ($\rho=-0.130$), and the final log cell count.

\section{Inverse Coefficient of Variation and Coefficient of Variation of the Log Cell Area}

Recall image \ref{cho-15}, where one can observe that after a monoclone is seeded, how the sparse cells and cell colonies shift and vary as they colonise the well. Also, recall the confluence growth speed graph in figure \ref{c-speed-raw}, where one can observe that the confluence does not follow a smooth curve. Furthermore, recall Kolmogorov-Smirnov analysis from section \ref{log-normal}, where we demonstrate how the cell area distribution changed from normal to gamma and finally to log-normal as the cell population increases. Now, these reasons motivate us to find to model how the cell area distribution varies over time. Two such measures of distributions are the coefficient of variation, i.e. standard deviation divided by the mean, and the inverse coefficient of variation, i.e. mean divided by the standard deviation. Thus, following similar reasonings as in section (\ref{cell-count}), to model the inverse coefficient of variation, i.e. $y_+(x) = \frac{ \mathbb{E}(\log(\text{cell area}))}{\mathbb{S}\text{td}(\log(\text{cell area}))} $, and the coefficient of variation, i.e. $y_-(x) = \frac{\mathbb{S}\text{td}(\log(\text{cell area}))} { \mathbb{E}(\log(\text{cell area}))}$, of  the log cell area as a function of time, we arrive at the following theorem,

\begin{Theorem}[Homogeneity ($+$ve)/ Variation ($-$ve) Growth Model]\label{cell-homogeneity}
The inverse coefficient of variation and coefficient of variation of  the log cell area growth in a confined space can be modelled by the following equation,
\begin{align}
y_\pm(x) =  \beta_\pm \left[\alpha_\pm + \frac{1}{2} \pm  \frac{ 3 \sqrt{3}}{(\theta_\pm)^2}\boldsymbol{\sigma}^{(2)}(\theta_\pm (x - \gamma_\pm)) \right] ,\label{cell-homo}
\end{align}
where $y_+(x)$ is the inverse coefficient of variation of the log cell area, $y_-(x)$ is the coefficient of variation of the log cell area, $x$ is the time, $\boldsymbol{\sigma}^{(2)}(\cdot)$ is the second order derivative of the sigmoid function (definition \ref{sigmoid-diff}),
\begin{align*}
\theta_\pm  = \frac{8}{9} \sqrt{3} \frac{\omega_\pm}{\beta_\pm},
\end{align*}
$\beta_+$ is the homogeneity growth capacity, $\beta_-$ is the variation growth capacity, $\omega_+$ is the homogeneity growth rate, $\omega_-$ is the variation growth rate, $\gamma_+$ is the homogeneity growth time, $\gamma_-$ is the variation growth time, 
\begin{align*}
\alpha_\pm = \mathbb{E}\left[ \frac{1}{\beta_\pm} y_\pm(x) \mp  \frac{ 3 \sqrt{3}}{(\theta_\pm)^2}\boldsymbol{\sigma}^{(2)}(\theta_\pm (x - \gamma_\pm)) \mid (\beta_\pm, \gamma_\pm, \omega_\pm) \right] - \frac{1}{2} ,
\end{align*}
is the minimum homogeneity ($+$ve) / variation ($-$ve) coefficient, and where $\beta_\pm$, $\omega_\pm$ and $\gamma_\pm$ are the only independent variables of the model, and $x$ is the only independent variable of the dataset.
\end{Theorem}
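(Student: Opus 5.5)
The statement is a modelling result of the same kind as theorems \ref{full-cell-growth-model}, \ref{cell-area} and \ref{cell-d-thrm}. I would therefore prove it constructively. First fix the functional form suggested by the data, then show that the normalising constants in (\ref{cell-homo}) are exactly those that give $\beta_\pm$, $\omega_\pm$, $\gamma_\pm$ and $\alpha_\pm$ the meanings of capacity, rate, time and minimum coefficient used throughout the paper. I take the form as the ansatz $y_\pm(x) = \beta_\pm[\alpha_\pm + \tfrac12 \pm k\,\theta_\pm^{-2}\boldsymbol{\sigma}^{(2)}(\theta_\pm(x-\gamma_\pm))]$, with constants $k$ and $\theta_\pm = \lambda\,\omega_\pm/\beta_\pm$ still to be fixed.

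The form is motivated by the structure of the previous two sections. The ratio of $\mathbb{E}(\log(\text{cell area}))$ (a sigmoid-type decline, theorem \ref{cell-area}) to $\mathbb{S}\text{td}(\log(\text{cell area}))$ (a $\boldsymbol{\sigma}^{(1)}$-shaped dip, theorem \ref{cell-d-thrm}) should behave like one further derivative of the sigmoid. Such a profile has one rise and one fall, opposite in sign for $y_+$ and $y_-$. So I would model the time-dependent part by $\pm\boldsymbol{\sigma}^{(2)}$, translated by $\gamma_\pm$ and scaled by $\theta_\pm$, exactly as in definition \ref{sigmoid-diff}.

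\emph{Step 1 (amplitude, giving $k=3\sqrt3$ and the offset $\tfrac12$).} Write $s=\boldsymbol{\sigma}(\theta_\pm(x-\gamma_\pm))$. By definition \ref{sigmoid-diff}, $\theta_\pm^{-2}\boldsymbol{\sigma}^{(2)} = s(1-s)(1-2s)$. Differentiating in $s$, its extrema lie at $s=z_0,z_1=\tfrac12(1\mp\tfrac{\sqrt3}{3})$, the same points that appear in corollary \ref{full-cell-growth-speed-model}. There the function takes the values $\pm\frac{\sqrt3}{18}$. Hence $3\sqrt3\,\theta_\pm^{-2}\boldsymbol{\sigma}^{(2)}$ ranges over $[-\tfrac12,\tfrac12]$. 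Adding $\tfrac12$ shifts the bracket (without $\alpha_\pm$) into $[0,1]$, so that $\beta_\pm$ is precisely the total range, i.e. the capacity.

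\emph{Step 2 (rate, giving $\lambda=\tfrac89\sqrt3$).} Differentiate once to get $\frac{dy_\pm}{dx}=\pm 3\sqrt3\,\beta_\pm\,\theta_\pm\, s(1-s)(1-6s+6s^2)$. Put $u=s(1-s)\in(0,\tfrac14]$, so the polynomial becomes $u(1-6u)$. Its magnitude is maximised at $u=\tfrac14$, i.e. at $s=\tfrac12$ and hence at $x=\gamma_\pm$, with value $\tfrac18$. The interior critical point $u=\tfrac1{12}$ gives only $\tfrac1{24}$. The maximum speed is therefore $\frac{3\sqrt3}{8}\beta_\pm\theta_\pm$. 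Setting this equal to $\omega_\pm$ yields $\theta_\pm=\frac{8}{3\sqrt3}\frac{\omega_\pm}{\beta_\pm}=\frac89\sqrt3\frac{\omega_\pm}{\beta_\pm}$. This also shows that $\gamma_\pm$ is the time of maximal rate of change, as the interpretation requires.

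\emph{Step 3 ($\alpha_\pm$).} Exactly as in theorems \ref{full-cell-growth-model} and \ref{cell-d-thrm}, once $(\beta_\pm,\gamma_\pm,\omega_\pm)$ are determined, solve (\ref{cell-homo}) for $\alpha_\pm$ pointwise. Then take the conditional expectation over the dataset to obtain the stated formula; the subtracted $\tfrac12$ undoes the offset from Step 1.

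I expect the main obstacle to be Step 2: one must confirm that the global maximum of $|u(1-6u)|$ on $(0,\tfrac14]$ is attained at the endpoint $u=\tfrac14$ rather than at the interior critical point. This is what makes $\gamma_\pm$ the growth time and fixes the $\tfrac89\sqrt3$ factor. Beyond this, the justification that $\boldsymbol{\sigma}^{(2)}$ is the right shape is empirical, resting on the fits in the experimental subsection rather than on deduction.
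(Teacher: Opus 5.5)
Your proposal is correct and follows essentially the route the paper indicates. The paper only says the result follows by ``similar reasonings'' to section \ref{cell-count}: an empirically chosen sigmoid-derivative profile, $\theta$ fixed by equating the model's peak speed to $\omega$, and $\alpha$ defined as the mean residual; as you acknowledge, the shape itself rests on the data rather than on your ratio heuristic. Your explicit computations show that $3\sqrt{3}\,s(1-s)(1-2s)$ ranges over $[-\tfrac12,\tfrac12]$ and that the peak speed $\tfrac{3\sqrt{3}}{8}\beta_\pm\theta_\pm$ is attained at $x=\gamma_\pm$; this supplies the constants the paper leaves implicit and agrees with its corollary, since $\pm 8\omega_\pm s(1-s)(1-6s+6s^2)=\mp\omega_\pm$ at $s=\tfrac12$.
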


Interpretation of theorem \ref{cell-homogeneity} is is rather statistical and abstract. However, the most intuitive interpretation is that homogeneity explains how closely the cell areas distributed from its mean value (i.e. how uniform the cells are), while variation describes how further away the cell areas distributed away from its mean value (i.e. how varied the cells are), and  equation (\ref{cell-homo}) shows how these values changed over time as the cell population grows. With some numerical analysis, we can show that $\gamma_+ < \gamma_-$. Thus, from the analysis that we conducted in section \ref{log-normal},  we may surmise that the time intervals $(0, \gamma_+)$, $(\gamma_+, \gamma_-)$ and $(\gamma_-, \infty)$ related to three distinct distributions that the cell area may take during its growth cycle. Now, taking first and second order derivatives of the equation (\ref{cell-homo}), we arrive at the following corollary,

\begin{corollary}
Theorem \ref{cell-homogeneity} implies that the homogeneity ($+$ve) / variation ($-$ve) growth speed and homogeneity ($+$ve) / variation ($-$ve)  growth acceleration can respectively expressed as follows,
\begin{align*}
\frac{d y_\pm(x)}{dx}  &= \pm \omega_\pm \left[ \frac{8}{(\theta_\pm)^3} \boldsymbol{\sigma}^{(3)}(\theta_\pm (x - \gamma_\pm)) \right]~\text{and}  \\
\frac{d^2 y_\pm(x)}{dx^2} &= \pm \omega_\pm \theta_\pm\left[ \frac{8}{(\theta_\pm)^4} \boldsymbol{\sigma}^{(4)}(\theta_\pm (x - \gamma_\pm))  \right],
\end{align*}
where $\boldsymbol{\sigma}^{(3)}(\cdot)$ and  $\boldsymbol{\sigma}^{(4)}(\cdot)$ are the third and fourth order derivatives of the sigmoid function (definition \ref{sigmoid-diff}).
\end{corollary}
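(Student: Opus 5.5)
The corollary follows by differentiating equation (\ref{cell-homo}) term by term in $x$, using the chain rule and the derivative formulas of definition \ref{sigmoid-diff}. The constants $\beta_\pm\alpha_\pm$ and $\frac12\beta_\pm$ vanish on differentiation. Only the term
\begin{align*}
\pm \beta_\pm\frac{3\sqrt{3}}{(\theta_\pm)^2}\boldsymbol{\sigma}^{(2)}(\theta_\pm(x-\gamma_\pm))
\end{align*}
survives. I read $\boldsymbol{\sigma}^{(k)}(\theta(x-\gamma))$ as $\frac{d^k}{dx^k}$ of $\boldsymbol{\sigma}(\theta(x-\gamma))$, which is consistent with the normalisation $\frac{1}{\theta^k}$ in definition \ref{sigmoid-diff}. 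Then each further $x$-derivative of $\boldsymbol{\sigma}^{(2)}(\theta_\pm(x-\gamma_\pm))$ yields $\boldsymbol{\sigma}^{(3)}$ and then $\boldsymbol{\sigma}^{(4)}$ of the same argument. This gives
\begin{align*}
\frac{dy_\pm}{dx} = \pm\frac{3\sqrt3\,\beta_\pm}{(\theta_\pm)^2}\boldsymbol{\sigma}^{(3)}(\theta_\pm(x-\gamma_\pm)),\qquad
\frac{d^2y_\pm}{dx^2} = \pm\frac{3\sqrt3\,\beta_\pm}{(\theta_\pm)^2}\boldsymbol{\sigma}^{(4)}(\theta_\pm(x-\gamma_\pm)).
\end{align*}

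The next step eliminates $\beta_\pm$ using $\theta_\pm = \frac{8}{9}\sqrt3\,\frac{\omega_\pm}{\beta_\pm}$, so that
\begin{align*}
\beta_\pm = \frac{8\sqrt3\,\omega_\pm}{9\,\theta_\pm}, \qquad 3\sqrt3\,\beta_\pm = \frac{8\,\omega_\pm}{\theta_\pm}.
\end{align*}
Substituting, the coefficient in the speed becomes $\frac{8\omega_\pm}{(\theta_\pm)^3}$, which is the stated form $\pm\omega_\pm\big[\frac{8}{(\theta_\pm)^3}\boldsymbol{\sigma}^{(3)}\big]$. Factoring $\theta_\pm$ out of the acceleration coefficient gives $\omega_\pm\theta_\pm\cdot\frac{8}{(\theta_\pm)^4}$, which is the stated acceleration.

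The main obstacle is notational: keeping the convention for $\boldsymbol{\sigma}^{(k)}$ consistent. If $\boldsymbol{\sigma}^{(k)}$ meant the derivative with respect to the argument, each differentiation would introduce an extra factor of $\theta_\pm$ and the powers would not match the stated ones. I would therefore fix the $x$-derivative reading at the outset. As a check, I would confirm it reproduces the paper's earlier corollaries, for instance the deviation model, where $\frac{4}{\theta_d}\beta_d$ with $\theta_d=\frac32\sqrt3\,\frac{\omega_d}{\beta_d}$ gives $\frac{6\sqrt3\,\omega_d}{(\theta_d)^2}$. Finally, I would note that the constant $\frac{8}{9}\sqrt3$ in $\theta_\pm$ is chosen so that $\omega_\pm$ equals the maximal rate. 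The maximum of $|\frac{1}{\theta^3}\boldsymbol{\sigma}^{(3)}|$ from definition \ref{sigmoid-diff} is $\frac18$, attained at $\boldsymbol{\sigma}=\frac12$, so $\big|\frac{dy_\pm}{dx}\big|$ peaks at $\omega_\pm$ at $x=\gamma_\pm$. This confirms the interpretation of $\omega_\pm$ and $\gamma_\pm$ without being needed for the identities themselves.
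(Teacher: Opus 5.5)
Your proposal is correct and takes the same route as the paper, which obtains the corollary by differentiating equation (\ref{cell-homo}) twice in $x$. Your reading of $\boldsymbol{\sigma}^{(k)}(\theta_\pm(x-\gamma_\pm))$ as an $x$-derivative (the only reading consistent with Definition \ref{sigmoid-diff}) and the substitution $3\sqrt{3}\,\beta_\pm = 8\omega_\pm/\theta_\pm$ supply exactly the details the paper leaves implicit, and your side checks are also correct.
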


\subsection{Numerical Modelling}

Given a dataset $(x, y_\pm)$, to find a $(\beta_\pm, \gamma_\pm, \omega_\pm, \alpha_\pm)$-set, we precent the following algorithm.\\

\emph{Step 1:} Normalise the dataset. First normalise the time data points as $x_{\text{norm}} =\frac{x}{ x_{\text{max}} }$, where $x_{\text{max}} =\max(x)$. Then, normalise the inverse coefficient of variation and coefficient of variation of the log area as $(y_\pm)_{\text{norm}} =\frac{y_\pm - (y_\pm)_{\text{min}}}{ (y_\pm)_{\text{max}} }$, where $(y_\pm)_{\text{min}} = \mathbb{E}(y_\pm) -  \mathbb{S}\text{td}(y_\pm) $ and $(y_\pm)_{\text{max}} =  2 \mathbb{S}\text{td}(y_\pm)$.\\

\emph{Step 2:} Find lower and upper bounds for the parameters. Using  \emph{SciPy} \emph{curve\_fit} function with \emph{maxfev}$=10,000$  \cite{scipy.curvefit}, fit the normalised data to the following equation,
\begin{align*}
(y_\pm)_{\text{norm}} = \pm \frac{81}{64} \sqrt{3}  \frac{ b_\pm}{( d_\pm)^2} \boldsymbol{\sigma}^{(2)}\left(\frac{8}{9} \sqrt{3}  d_\pm (x_{\text{norm}} - c_\pm) \right),
\end{align*}
where $b_\pm$, $c_\pm$ and $d_\pm$ bounded below by $0$ and above by $2$.\\

\emph{Step 3:} Find normalised homogeneity/variation growth parameters. Using the bounds $0<\beta_\pm^0 <b_\pm$, $c_\pm< \gamma_\pm^0 < 2$ and $0 < \omega^0_\pm< b_\pm d_\pm$, and using \emph{SciPy} \emph{curve\_fit} function  with \emph{maxfev}$=10,000$  \cite{scipy.curvefit}, fit the data to the following equation,
\begin{align*}
(y_\pm)_{\text{norm}} = \pm 3 \sqrt{3} \frac{\beta^0_\pm}{(\theta^0_\pm)^2} \big[\boldsymbol{\sigma}^{(2)}\big(\theta^0_\pm(x_{\text{norm}} - \gamma^0_\pm)\big) - \boldsymbol{\sigma}^{(2)}\big(-\theta^0_\pm \gamma^0_\pm\big) \big],
\end{align*}
where $\theta^0_\pm = \frac{8}{9} \sqrt{3} \frac{\omega^0_\pm}{\beta^0_\pm}$. With $\beta_\pm^0$, $ \gamma_\pm^0$ and $\omega^0_\pm$, find $\alpha_\pm^0$  as follows,
\begin{align*}
\alpha^0_\pm  =\mathbb{E}\left[ (y_\pm)_{\text{norm}} \mp 3 \sqrt{3} \frac{\beta^0_\pm}{(\theta^0_\pm)^2} [\boldsymbol{\sigma}^{(2)}\big(\theta_\pm^0 \big(x_{\text{norm}} - \gamma_\pm^0\big)\big) \big]\right] + \frac{(y_\pm)_{\text{min}}}{(y_\pm)_{\text{max}}} - \frac{1}{2} .
\end{align*}

\emph{Step 4:} Unnormalise the radial growth parameters as $\alpha_\pm = \frac{1}{\beta_\pm^0}  \alpha^0_\pm$, $\beta_\pm  = (y_\pm)_{\text{max}} \beta_\pm^0 $,  $\gamma_\pm  = x_{\text{max}} \gamma_\pm^0 $ and  $\omega_\pm  = \frac{(y_\pm)_{\text{max}} }{x_{\text{max}}} \omega_\pm^0 $.\\

For sample datasets, along with working algorithms, please see the links in the footnote\footnote{Homogeneity growth model \emph{Colab} notebook with a sample dataset,  a working algorithm and plots: 
\url{https://drive.google.com/file/d/185ugICaiNdHMnm4QytpWQo7UKA-TpMCa/view?usp=drive_link}\\
Variation growth model \emph{Colab} notebooks with a sample dataset,  a working algorithms and plots: 
\url{https://drive.google.com/file/d/1I9SvnXVH9qw93rtq_c_a3Yyj4LrrGi6y/view?usp=drive_link}} for a \emph{Colab} notebooks.

\subsection{Experimental Results}

In this section, we fit the our homogeneity  growth model and variation growth model  (theorem \ref{cell-homogeneity}) to \emph{CHO2023} dataset.\\

\begin{figure}[h!]
     \centering
     \begin{subfigure}[b]{0.3\textwidth}
         \centering
         \includegraphics[height=\textwidth]{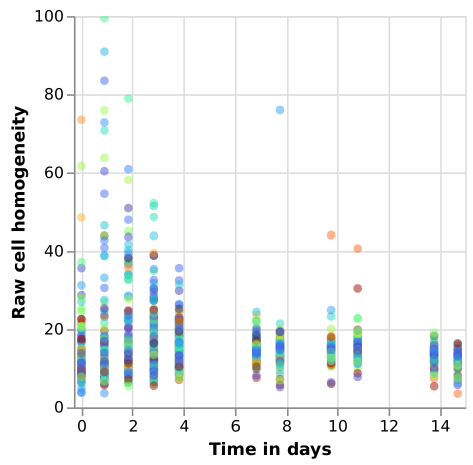}
         \caption{Raw log area ICV.}
         \label{h-capacity-raw}
     \end{subfigure}
     \hfill
     \begin{subfigure}[b]{0.3\textwidth}
         \centering
         \includegraphics[height=\textwidth]{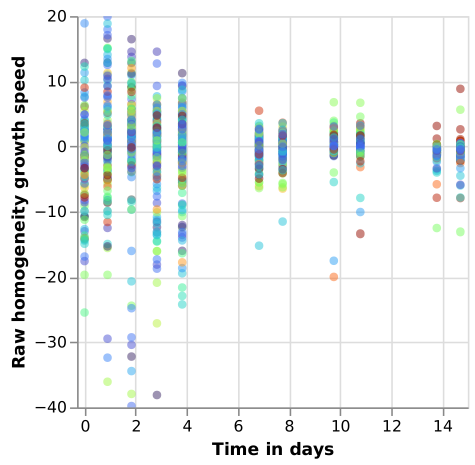}
         \caption{Raw speed.}
     \end{subfigure}
     \hfill
     \begin{subfigure}[b]{0.3\textwidth}
         \centering
         \includegraphics[height=\textwidth]{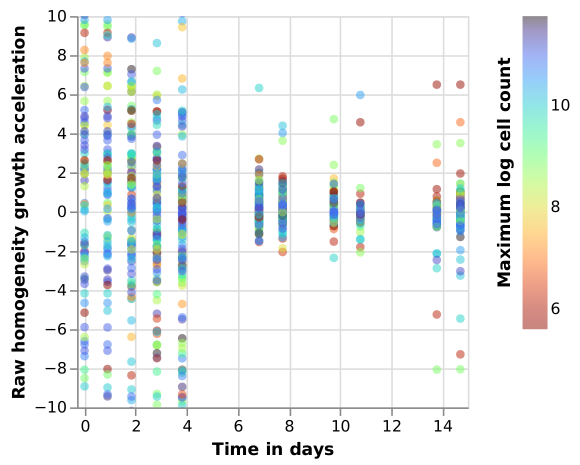}
         \caption{Raw acceleration.}
     \end{subfigure}
	\hfill
     \begin{subfigure}[b]{0.09\textwidth}
     \end{subfigure}
     \hfill
     \begin{subfigure}[b]{0.3\textwidth}
         \centering
         \includegraphics[height=\textwidth]{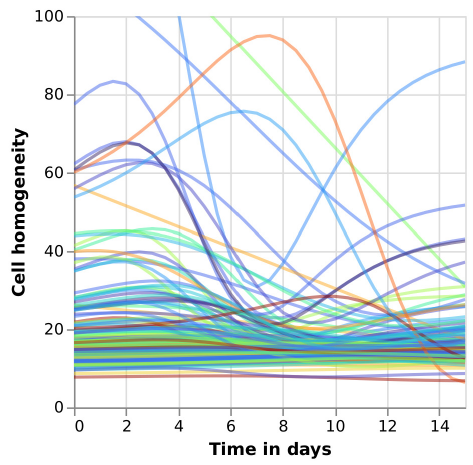}
         \caption{Log area ICV.}
         \label{h-capacity}
     \end{subfigure}
     \hfill
     \begin{subfigure}[b]{0.3\textwidth}
         \centering
         \includegraphics[height=\textwidth]{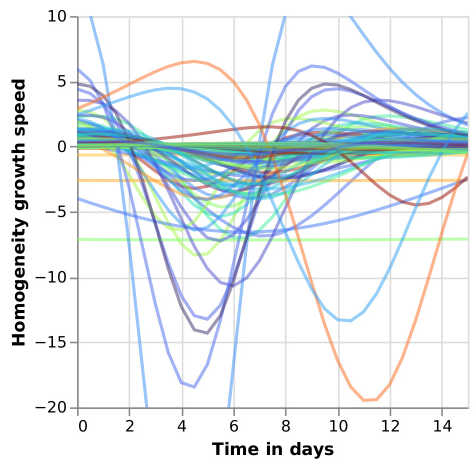}
         \caption{Speed.}
     \end{subfigure}
     \hfill
     \begin{subfigure}[b]{0.3\textwidth}
         \centering
         \includegraphics[height=\textwidth]{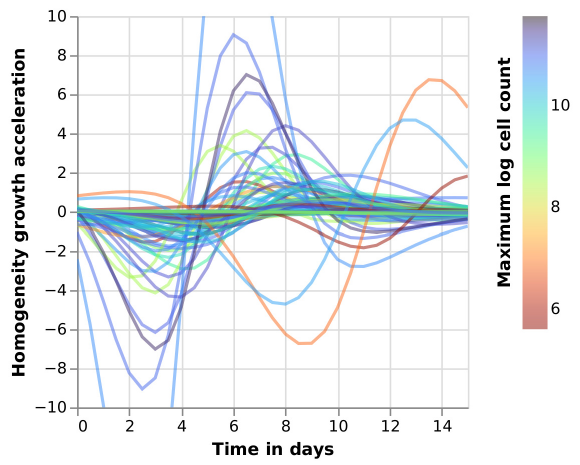}
         \caption{Acceleration.}
     \end{subfigure}
	\hfill
     \begin{subfigure}[b]{0.09\textwidth}
     \end{subfigure}
        \caption{Homogeneity growth patterns: raw vs modelled.}
        \label{experiment-h}
\end{figure}

Figure \ref{experiment-h} shows inverse coefficient of variation of the log cell area, homogeneity growth speed and homogeneity growth acceleration, along with what is implied by our model. Sub-figures \ref{h-capacity-raw} and \ref{h-capacity} show that homogeneity increases from its baseline to $(\alpha_+ +1) \beta_+$, declines below the baseline to $\alpha_+ \beta_+$ and increases back to the baseline $(\alpha_+ +\frac{1}{2})\beta_+$, as the population grows. Perhaps the peak, the trough and the baseline represent three different cell area distributions, and what we are observing is cell areas are moving from one distribution to another as the population grows.\\

\begin{figure}[h!]
     \centering
     \begin{subfigure}[b]{0.23\textwidth}
         \centering
         \includegraphics[height=\textwidth]{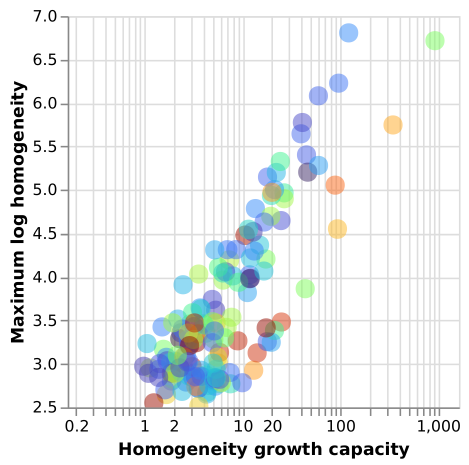}
         \caption{$\beta_+$}
     \end{subfigure}
     \hfill
     \begin{subfigure}[b]{0.23\textwidth}
         \centering
         \includegraphics[height=\textwidth]{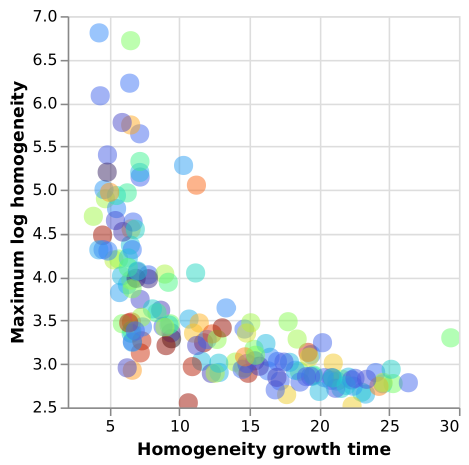}
         \caption{$\gamma_+$}
     \end{subfigure}
     \hfill
     \begin{subfigure}[b]{0.23\textwidth}
         \centering
         \includegraphics[height=\textwidth]{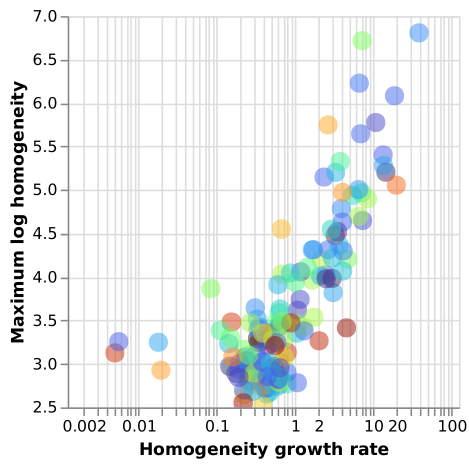}
         \caption{$\omega_+$}
     \end{subfigure}
	\hfill
     \begin{subfigure}[b]{0.23\textwidth}
         \centering
         \includegraphics[height=\textwidth]{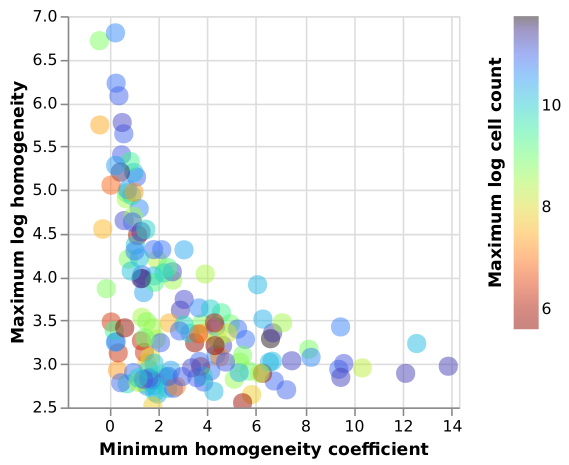}
         \caption{$\alpha_+$}
     \end{subfigure}
	\hfill
     \begin{subfigure}[b]{0.09\textwidth}
     \end{subfigure}
        \caption{Homogeneity growth variables.}
        \label{h-var}
\end{figure}

Figure \ref{h-var} shows the homogeneity growth variables with respect to maximum of the log inverse coefficient of variation of the log cell area. We are unable to find any meaningful relationship with homogeneity growth variables and the final log cell count (i.e. homogeneity growth capacity: $\rho = -0.070$, homogeneity growth time: $\rho = -0.049$, homogeneity growth rate: $\rho = 0.088$ and minimum homogeneity coefficient: $\rho = 0.113$).\\

\begin{figure}[h!]
     \centering
     \begin{subfigure}[b]{0.3\textwidth}
         \centering
         \includegraphics[height=\textwidth]{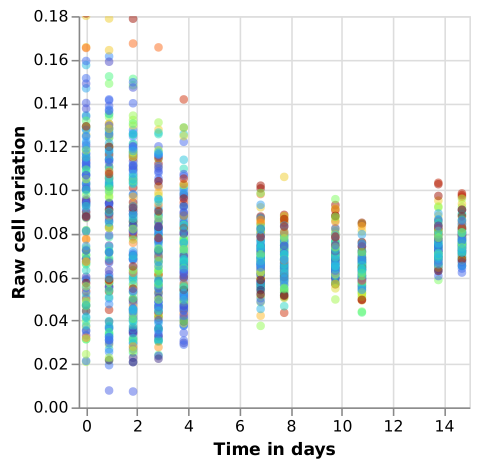}
         \caption{Raw log area CV.}
         \label{v-capacity-raw}
     \end{subfigure}
     \hfill
     \begin{subfigure}[b]{0.3\textwidth}
         \centering
         \includegraphics[height=\textwidth]{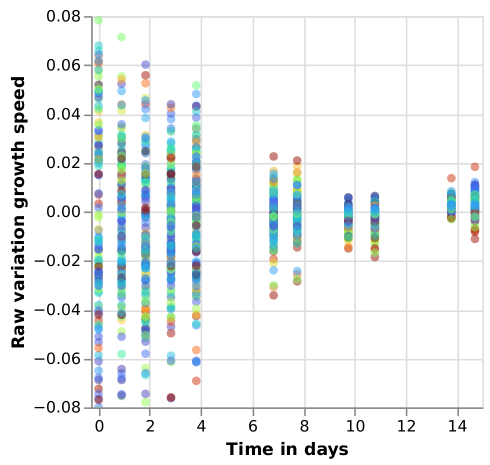}
         \caption{Raw speed.}
     \end{subfigure}
     \hfill
     \begin{subfigure}[b]{0.3\textwidth}
         \centering
         \includegraphics[height=\textwidth]{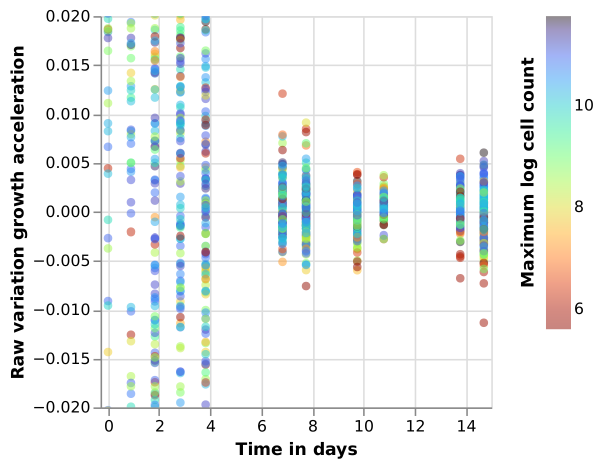}
         \caption{Raw acceleration.}
     \end{subfigure}
	\hfill
     \begin{subfigure}[b]{0.09\textwidth}
     \end{subfigure}
     \hfill
     \begin{subfigure}[b]{0.3\textwidth}
         \centering
         \includegraphics[height=\textwidth]{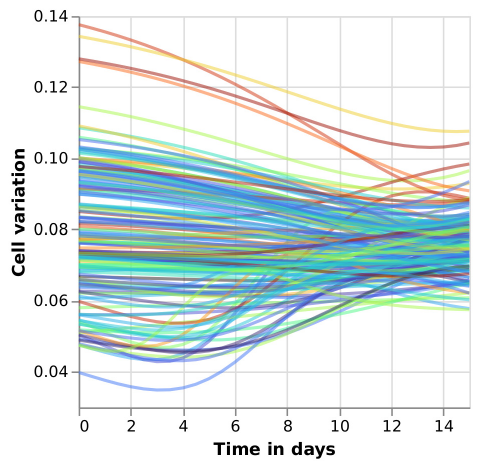}
         \caption{Log area CV.}
         \label{v-capacity}
     \end{subfigure}
     \hfill
     \begin{subfigure}[b]{0.3\textwidth}
         \centering
         \includegraphics[height=\textwidth]{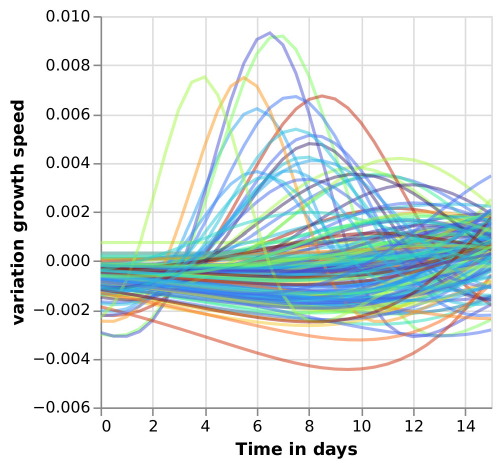}
         \caption{Speed.}
     \end{subfigure}
     \hfill
     \begin{subfigure}[b]{0.3\textwidth}
         \centering
         \includegraphics[height=\textwidth]{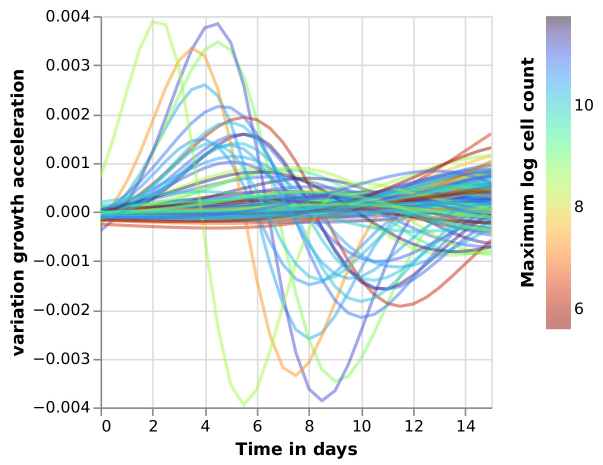}
         \caption{Acceleration.}
     \end{subfigure}
	\hfill
     \begin{subfigure}[b]{0.09\textwidth}
     \end{subfigure}
        \caption{Variation  growth  patterns: raw vs modelled.}
        \label{experiment-v}
\end{figure}

Figure \ref{experiment-v} shows coefficient of variation of the log cell area, variation growth speed and variation growth acceleration, along with what is implied by our model. Sub-figures \ref{v-capacity-raw} and \ref{v-capacity} show that variation  declines from its baseline to  $\alpha_-\beta_-$,  increases  above the baseline to $(\alpha_-+1)\beta_-$ and decreases back to the baseline $(\alpha_-+\frac{1}{2})\beta_-$, as the population grows. We also observe that for $99.2\%$ of the samples, we have $\gamma_+ < \gamma_-$, where, on average, $\gamma_-$ exceeds $\gamma_+$ by $3.89$ days, giving us three distinct time intervals, i.e. before $\gamma_+$, between $\gamma_+$ and $\gamma_-$, and after $\gamma_-$. Again, perhaps we are observing the temporal boundaries of  the cell areas distributions as it moves from one distribution to another as the population grows.\\

\begin{figure}[h!]
     \centering
     \begin{subfigure}[b]{0.23\textwidth}
         \centering
         \includegraphics[height=\textwidth]{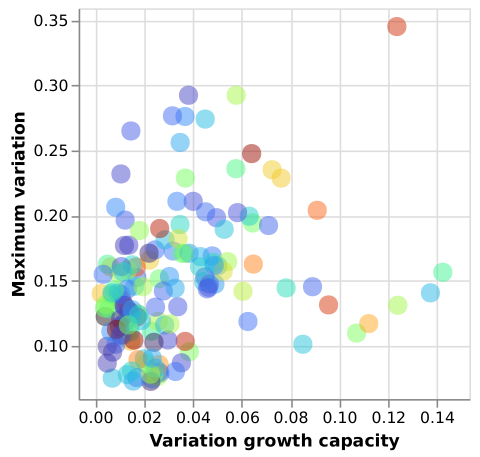}
         \caption{$\beta_-$}
     \end{subfigure}
     \hfill
     \begin{subfigure}[b]{0.23\textwidth}
         \centering
         \includegraphics[height=\textwidth]{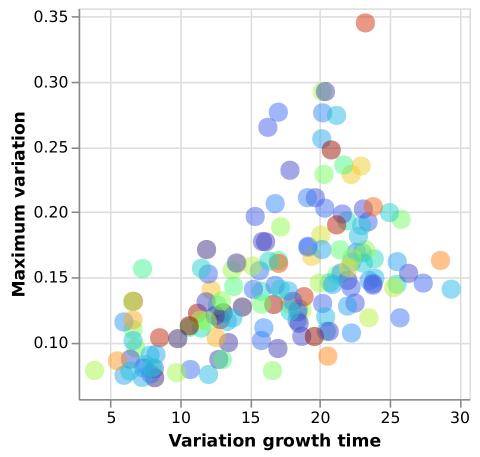}
         \caption{$\gamma_-$}
     \end{subfigure}
     \hfill
     \begin{subfigure}[b]{0.23\textwidth}
         \centering
         \includegraphics[height=\textwidth]{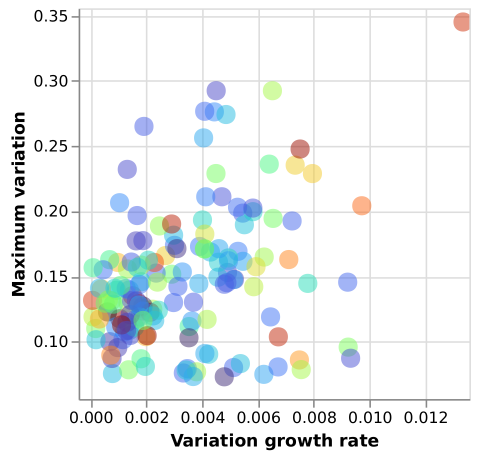}
         \caption{$\omega_-$}
     \end{subfigure}
	\hfill
     \begin{subfigure}[b]{0.23\textwidth}
         \centering
         \includegraphics[height=\textwidth]{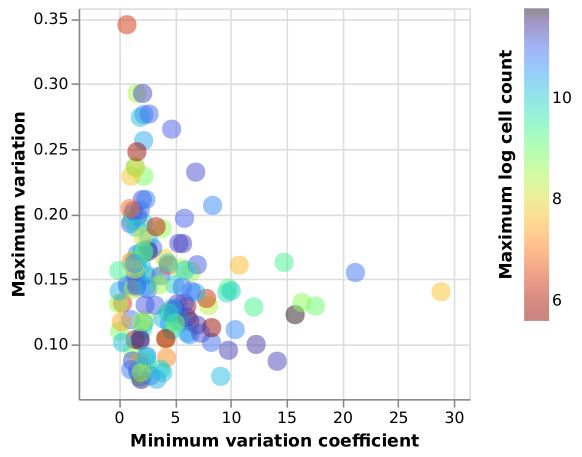}
         \caption{$\alpha_-$}
     \end{subfigure}
	\hfill
     \begin{subfigure}[b]{0.07\textwidth}
     \end{subfigure}
        \caption{Variation growth variables.}
        \label{v-var}
\end{figure}

Figure \ref{v-var} shows the variation growth variables with respect to maximum of the coefficient of variation of the log cell area. We are unable to find any meaningful relationship with variation growth variables and the final log cell count (i.e. variation growth capacity: $\rho = -0.180$, variation growth time: $\rho = 0.049$, variation growth rate: $\rho = -0.080$ and minimum variation coefficient: $\rho = 0.020$).

\section{Applications in Titer Measurements and Viability Predictions \label{sec-titer}}

Based on growth patterns of another 48 CHO monoclones observed over a 16 day period, we  calculate growth variables, and their titer and viability measurements, where titer is measured with Solentim ICON \cite{ICON} and viability is measured with Vi-CELL XR \cite{vicell}. Note that in this dataset, cells are seeded by hand, and thus, the probability of monoclonality is relatively low. Table \ref{tab} shows the correlation coefficients between the calculated growth variables, titer (i.e. total titer), titer per cell (i.e. total titer divided by the final cell count), daily titer production per cell (i.e. total titer divided by the integral of the cell count over time, where the integral is calculated with equation (\ref{cell-growth-model}) and \emph{SciPy} \emph{quad} function \cite{scipy.quad}) and viability, where cell count for the per cell basis is calculated with DeepInsight\textsuperscript{\textregistered} cell analysis software \cite{tekor}, and  boldface and underlined text show the most notable correlations per growth variable.\\

\begin{table}[h!]
\begin{tabular}{ |l||r|r|r|r|}
\toprule
\multirow{1}{*}{Growth Variable} & \multicolumn{4}{ |c |}{ Correlation coefficient} \\
\cmidrule{2-5}
 & \multicolumn{1}{ |c |}{ Titer} & \multicolumn{1}{ |c |}{ Titer}& \multicolumn{1}{ |c |}{ Titer} & \multicolumn{1}{ |c |}{ Via-}\\
 &  \multicolumn{1}{ |c |}{ (total)} &  \multicolumn{1}{ |c |}{ per cell\! \! } & \multicolumn{1}{|c |}{ per day\!\! } & \multicolumn{1}{ |c |}{ bility}\\
 &  &   &  \multicolumn{1}{ |c |}{  per cell\! \! }   & \\
\midrule
$\beta_n$  Population growth capacity	&	\textbf{0.275}	&	-0.105	&	-0.023	&	0.009	\\
$\gamma_n$  Population growth time	&	0.244	&	-0.046	&	\textbf{0.472}	&	0.010	\\
$\omega_n$  Population growth rate	&	-0.155	&	-0.114	&	\underline{\textbf{-0.538}}	&	0.071	\\
$\alpha_n$  Minimum population coefficient 	&	\underline{\textbf{-0.300}}	&	-0.096	&	-0.129	&	-0.011	\\
$\epsilon_n$  Incipient population growth capacity \!\!\!  \!\!\!  \!\!\!  \!\!\! &	0.055	&	0.228	&	\textbf{0.288}	&	-0.004	\\
$\rho_n$  Incipient population growth rate	&	-0.057	&	-0.231	&	\textbf{-0.291}	&	0.007	\\
\hline
$\beta_c$  Confluence growth capacity	&	\textbf{0.272}	&	-0.082	&	-0.092	&	-0.021	\\
$\gamma_c$  Confluence growth time	&	0.247	&	-0.017	&	\textbf{0.465}	&	-0.089	\\
$\omega_c$  Confluence growth rate	&	-0.143	&	-0.059	&	\textbf{-0.481}	&	0.107	\\
$\alpha_c$  Minimum confluence coefficient	&	\textbf{-0.287}	&	-0.0137	&	-0.038	&	-0.059	\\
\hline
$\beta_r$  Radial growth capacity	&	-0.165	&	\textbf{-0.167}	&	-0.150	&	-0.037	\\
$\gamma_r$  Radial growth time	&	0.182	&	0.0057	&	\textbf{0.245}	&	0.218	\\
$\omega_r$  Radial growth rate	&	-0.139	&	-0.117	&	\textbf{-0.145}	&	-0.011	\\
$\alpha_r$  Minimum radial coefficient 	&	-0.038	&	-0.067	&	-0.029	&	\textbf{0.081}	\\
$\epsilon_r$  Incipient radial growth capacity	&	\textbf{0.269}	&	-0.004	&	0.169	&	0.100	\\
$\rho_r$  Incipient radial growth rate	&	\textbf{-0.269}	&	-0.007	&	-0.169	&	0.104	\\
\hline
$\beta_d$  Deviation growth capacity	&	0.144	&	\underline{0.242}	&	0.130	&	\textbf{-0.305}	\\
$\gamma_d$  Deviation growth time	&	0.190	&	0.120	&	\textbf{0.251}	&	0.041	\\
$\omega_d$  Deviation growth rate	&	0.096	&	0.178	&	0.025	&	\underline{\textbf{-0.338}}	\\
$\alpha_d$  Minimum deviation coefficient	&	\textbf{0.099}	&	0.009	&	0.033	&	0.093	\\
\hline
$\beta_+$  Homogeneity growth capacity	&	-0.175	&	-0.136	&	\textbf{-0.186}	&	-0.014	\\
$\gamma_+$  Homogeneity growth time	&	0.219	&	0.185	&	\textbf{0.302}	&	-0.117	\\
$\omega_+$  Homogeneity growth rate	&	\textbf{-0.200}	&	-0.159	&	-0.194	&	0.003	\\
$\alpha_+$  Minimum homogeneity coefficient	&	0.180	&	0.036	&	\textbf{0.302}	&	0.191	\\
\hline
$\beta_-$  Variation growth capacity	&	-0.223	&	-0.174	&	\textbf{-0.235}	&	0.014	\\
$\gamma_-$  Variation growth time	&	-0.014	&	-0.031	&	0.0143	&	\textbf{0.150}	\\
$\omega_-$  Variation growth rate	&	\textbf{-0.248}	&	-0.187	&	-0.2454	&	0.020	\\
$\alpha_-$  Minimum variation coefficient	&	-0.038	&	-0.030	&	\textbf{0.258}	&	0.122	\\

\bottomrule
\end{tabular}
\caption{Correlation coefficients of growth variables with respect to titer, titer per cell, titer per day per cell (i.e. titer rate per cell) and viability measurements.}
\label{tab}
\end{table}

Table \ref{tab} shows that total titer measurements are weakly positively-correlat- ed with variables that are associated with greater population growth, greater confluence growth, greater likelihood of monoclonality and greater capacity for the incipient cell to expand in size at a slower rate. It also shows no significant correlation between titer measurements per cell and growth variables. It further shows that rate of titer production per cell measurements are moderately positively-correlated with variables that are associated with slow cell growth, slow confluence growth and cells associated with longer periods of more homogenous cell area distributions. Finally, it shows that viability measurements are weakly positively-correlated with variables that are associated with low cell area standard deviations.\\

We find a weak negative-correlation between rate of titer production per cell and viability measurements ($\rho = -0.213$). Should one examine table \ref{tab} and recall table \ref{tab-inter}, one finds that the rate of titer production per cell is negatively correlated the population growth rate, where there is a negative interdependency between the population growth rate and the deviation growth rate, and where the the deviation growth rate is negatively correlated with viability measurements. This implies a interdependency between rate of  titer production per cell and viability measurements.  Recall that equation (\ref{inter}) shows that the confluence, the cell count, the mean cell area and the standard deviation of the cell area are all interdependent. Also, table \ref{tab} shows that total titer, titer production rate per cell and viability measurements are correlated with the confluence, the cell count, the mean cell area and the standard deviation growth variables. This results in the following theorem,

\begin{Theorem}
The interdependency of growth variables (see equation (\ref{inter})), and the correlation of titer and viability measurements with growth variables (see table table \ref{tab}) imply that the productivity and the health of a cell (also the overall population) are interdependent.
\end{Theorem}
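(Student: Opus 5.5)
The statement is an empirical claim rather than a deductive one, so the plan is to make precise what "interdependent" means and then chain together the facts already in the paper. I will call two quantities \emph{interdependent} in the sense of table \ref{tab-inter}: a change in one forces a change in the other through a deterministic constraint (here equation (\ref{inter})), or through a chain of such constraints and observed correlations. Productivity will be represented by the titer measurements (total titer and titer per day per cell). Health will be represented by the viability measurements.

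\emph{Step 1: link each side to a growth variable.} From table \ref{tab}, the rate of titer production per cell is most strongly correlated with the population growth rate ($\rho = -0.538$), and total titer with $\beta_n$ and $\alpha_n$. Viability is most strongly correlated with the deviation growth rate ($\rho = -0.338$) and the deviation growth capacity $\beta_d$.

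\emph{Step 2: connect the two growth variables.} I will invoke the corollary to theorem \ref{cell-d-thrm}, namely equation (\ref{inter}),
\begin{align*}
\log(y_c(x)) = \log(y_n(x)) + y_r(x) + \tfrac12 (y_d(x))^2 ,
\end{align*}
which holds on $[0,\max(x)]$. Substitute the models of theorems \ref{full-cell-growth-model}, \ref{cell-confluence}, \ref{cell-area} and \ref{cell-d-thrm}, and differentiate implicitly with respect to the growth variables. This gives the signs in table \ref{tab-inter}; in particular $\omega_n$ and $\omega_d$ (and likewise $\beta_n$ and $\beta_d$) have a negative interdependency on $[\gamma_d,\max(x)]$. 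Hence, holding the other variables fixed, increasing $\omega_n$ forces $\omega_d$ to decrease.

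\emph{Step 3: compose the signs.} Titer rate per cell is negatively correlated with $\omega_n$, $\omega_n$ is negatively interdependent with $\omega_d$, and $\omega_d$ is negatively correlated with viability. The composite sign is therefore $(-)(-)(-) = -$. This predicts a negative association between titer rate per cell and viability, which matches the directly observed $\rho = -0.213$. The same argument through $\beta_n$, $\beta_d$ and the confluence variables gives corroborating chains. Since both productivity and health are thereby tied, through equation (\ref{inter}), to the same coupled set of growth variables, neither can vary independently of the other, which is the claimed interdependency. Because the overall population variables $\beta_n$, $\omega_n$ and $\beta_c$ enter the same chain, the conclusion extends to the population as well.

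\emph{Main obstacle.} Correlations do not compose transitively in general, so Step 3 is not rigorous as it stands. I would first justify it by treating the linear (first-order) dependence along each link as dominant and checking the consistency of signs. The independent confirmation is that the predicted sign agrees with the directly measured correlation between titer rate per cell and viability. The weak magnitudes and the sample of 48 non-guaranteed monoclones limit the strength of the conclusion to a statistical, not deterministic, interdependency.
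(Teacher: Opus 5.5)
Steps 1--3 are the paper's own argument. The text just before the theorem links titer per day per cell to $\omega_n$ ($\rho=-0.538$), uses the negative $\omega_n$--$\omega_d$ entry of table \ref{tab-inter}, and links $\omega_d$ to viability ($\rho=-0.338$). It presents the observed $\rho=-0.213$ as agreeing with this, and then appeals to equation (\ref{inter}) for the general coupling. The paper never derives the signs of table \ref{tab-inter} either, so your Step 2 is no weaker than the source.

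The claim you add, which the paper does not make, is that chains through $\beta_n$, $\beta_d$ and the confluence variables give corroborating chains. That claim is false, and your proposed remedy for the main obstacle depends on it.

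\begin{itemize}
\item Through confluence: titer per day per cell correlates with $\omega_c$ at $-0.481$, table \ref{tab-inter} lists $\omega_c$--$\omega_d$ as $+$ve, and $\omega_d$--viability is $-0.338$. The composite sign is $(-)(+)(-)=+$, opposite to the observed $-0.213$.
\item The $\omega_r$ path ($-0.145$, then $+$ve) also gives $+$.
\item For total titer, the $\beta_n$ path ($0.275$, $-$ve, $-0.305$) predicts a positive association with viability. The $\beta_c$ path ($0.272$, $+$ve, $-0.305$) predicts a negative one.
\end{itemize}

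So the ``consistency of signs'' check you offer as the justification for Step 3 actually fails. Sign composition through table \ref{tab-inter} is path-dependent. That is to be expected: its entries are partial derivatives with all other growth variables held fixed, and these cannot be chained with marginal correlations measured on a sample where everything varies at once.

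The argument therefore survives only in the weak form the paper gives it: one selected chain whose sign agrees with one observed correlation. You should drop the corroboration claim and present Step 3 as that heuristic, not as a consistency-checked derivation.
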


Above analysis shows that growth variables can be used to predict the productivity and health of a cell; thus, justifying the efficacy of our numerical models. We confirm that these growth variables are indeed being used in predicting the early cell screening and cell selection. However, we cannot disclose the exact method that we use in such applications.\\

Note that growth capacities have the log of the dimensions of the property that we model, growth times have the dimensions of time and the growth rates have the dimensions of growth capacity divided by growth time. Thus, properties of all the dimensions and their ratios are independently covered by growth capacities, rates and times. This is the reason why we insist upon normalising $\alpha$s to make them dimensionally independent. This gives us another independent variable to correlate against a different property and to extract useful relations; thus, justifying the definitions of $\alpha$s in theorems \ref{full-cell-growth-model}, \ref{cell-confluence}, \ref{cell-area}, \ref{cell-d-thrm}, and \ref{cell-homogeneity}.

\section{Conclusions \label{con}}

In this article, we derived equations for modelling population growth, confluence growth, mean cell area growth, standard deviation of cell area growth, inverse coefficient of variation of cell area growth and coefficient of variation of cell area growth in a confined space. We achieved this by combining the sigmoid function (to model incipient population growth) and exponential of the sigmoid function (to model mature population growth), and using exponential of the sigmoid function and its derivatives (to model confluence, radial, deviation, homogeneity and variation growth). We also presented ways to numerically model our equations, along with algorithms, python scripts and sample datasets.\\

Based on growth patterns of 166 CHO monoclones observed over a 15 day period, and with the DeepInsight\textsuperscript{\textregistered} cell analysis software \cite{tekor}, we show that our population growth equation can capture the complex behaviour of population growth speed and the population growth acceleration with a significant degree of accuracy, including the shock of being seeded (predicted by the incipient growth capacity coefficient), the recovery after the shock (predicted by the incipient growth rate coefficient) and the growth until the area constraint of the well becomes a limiting factor. Our analysis shows that population growth capacity is a good predictor of how well a cell can colonise its environment. It is also a more accurate estimation of the cell count when the population starts to form vertical colonies. We also find the behaviour of confluence growth model is highly correlated with population growth model, as it also follows the exponential of the sigmoid curve.\\

Radial growth model implies that incipient population more likely to expand in size (area per cell wise) than to multiply, where this result is consistent with the reduction in the population growth  speed of the incipient population. It also implies that  the mean area of a cell tends to be large when it is sparse, and the area per cell decreases as cells multiply and start to form dense colonies. We further find that large sparse cells and cells that have the capacity to form very dense colonies has the potential to grow in to large populations. Deviation growth model implies that the standard deviation of the cell area is high for the incipient population, low for rapidly growing sparse cells, and high for dense cell colonies. Homogeneity (inverse coefficient of variation) and variation (coefficient of variation) growth models imply that there exists at least three distinct time intervals where the cell areas fall in to distinct probability distributions.\\

Based on the Kolmogorov-Smirnov analysis conducted on the area of the CHO monoclones, we find that the area per cell of the incipient population is normally distributed, the sparse cell population is gamma distributed and the dense colony population is log-normally distributed. As far as we are aware, this is the first time such goodness of fit for distribution is conducted on cell area data.\\

Based on growth patterns of further 48 CHO monoclones observed over a 16 day period, and taking their titer and viability measurements, we find that if a population has a high capacity to grow  (i.e. high population growth capacity and high confluence growth capacity), a greater likelihood of originating from a single progenitor (i.e. low minimum population coefficient and low minimum confluence coefficient) and a high capacity for the incipient cell to expand in size at a slower rate (i.e. high incipient radial growth capacity coefficient and low incipient radial growth rate coefficient), then the population likely to produce more titer. We also find that if a cell that takes a long time to multiply (i.e. high population growth time, high confluence growth time, low population growth rate and low confluence growth rate) and likely to have a very homogenous cell area distribution for a long period of time (i.e. high homogeneity growth time and high minimum homogeneity coefficient), then the cell likely to produce more titer per day. Finally, we find that if the population's area per cell has a low standard deviation (i.e. low deviation growth rate and low deviation growth capacity), then the population likely to be more viable. To put it simply, large populations that are less likely to form colonies results in greater titer measurements, slow growing populations with uniform areas per cell likely to results in greater titer production rate, and populations with areas per cell that are more uniform results in a more viable population.\\

Finally, with rigorous mathematical and statistical analysis, we demonstrated that the productivity and the health of a cell (also the overall population) are interdependent. For example, we find a negative correlation between rate of titer production per cell and viability measurements. Our mathematical analysis shows that the biological processes that under pin this result is the negative interdependency between the population growth rate and the deviation growth rate.\\

Our observations of positive correlations between cell size, cell growth and cell productivity appear to be consistent with the findings of Marshall \cite{marshall2012organelle}, our observations of cell areas can fall in to diffrent probability distribution during its growth cycle appear to be consistent with the findings of Jia \emph{et al.} \cite{jiacell}, our observation of rapidly growing cell areas being gamma distributed appear to consistent with the findings of Jia \emph{et al.} \cite{jiacell} and Golubev \cite{golubev2016applications}, and our observations of colony cell areas being log-normally distributed appears to be consistent with the findings of Lenz \emph{et al.} \cite{lenz2016estimating}. Also, any conclusions we my draw from titer and viability measurement are based on the assumption that titer measurements of Solentim ICON \cite{ICON} and viability measurements of  Vi-CELL XR \cite{vicell} are accurate. As a concluding remark, we note that our analysis is made possible due the accuracy of DeepInsight\textsuperscript{\textregistered} cell analysis software \cite{tekor}. We present our research findings in the hope that it may aid further discoveries from other researchers.

\section*{Acknowledgments}
We thank Wheeler Bio, Inc. \cite{wheelerbio} for providing the data of CHO cell scans, titer measurements and viability measurements.

\bibliographystyle{./model1-num-names}
\bibliography{bib} %
\biboptions{sort&compress}

\end{document}